\spnewtheorem{RULE}{Reduction Rule}{\bfseries}{\upshape}
\crefname{RULE}{Reduction Rule}{Reduction Rules}
\spnewtheorem{BRANCHING}{Branching Rule}{\bfseries}{\upshape}
\crefname{BRANCHING}{Branching Rule}{Branching Rules}
\newcommand\abs[1]{\lvert #1\rvert}
\def\cobasis{10}
\def\cybasis{26}
\def\red{red\xspace}
\def\cc{\text{cc}}
\def\K_#1{{K_{#1}}}
\def\S_#1{\overline{K_{#1}}}
\newcommand\cC{\mathcal{C}}
\newcommand\cF{\mathcal{F}}
\newcommand\cH{\mathcal{H}}
\newcommand\cG{\mathcal{G}}
\newcommand\cP{\mathcal{P}}
\newcommand\cB{\mathcal{B}}
\newcommand\cO{\mathcal{O}}
\newcommand\cK{\mathcal{K}}
\newcommand{\YES}{\textsc{Yes}}
\newcommand{\NO}{\textsc{No}}
\newcommand{\OPT}{\operatorname{OPT}}
\newcommand{\classp}{\Phi_{\cP}}
\newcommand{\obd}{\mathcal{B}_{d+1, 2d-2}}
\newcommand{\cld}{(\cP \cap \mathcal{B}_{2,d})}
\newcommand{\FVS}{\textsc{Feedback Vertex Set}\xspace}
\newcommand{\WFVS}{\textsc{Weighted Feedback Vertex Set}\xspace}
\newcommand{\SFVS}{\textsc{Subset Feedback Vertex Set}\xspace}
\newcommand{\BGVD}{\textsc{Complete Block Vertex Deletion}\xspace}
\newcommand{\dBGD}{\textsc{Bound\-ed Block VD}\xspace}
\newcommand{\dCBGD}{\textsc{Bounded Cactus Graph VD}\xspace}
\newcommand{\dKBGD}{\textsc{Bounded Complete Block VD}\xspace}
\newcommand{\COC}{\textsc{Component Order Connectivity}\xspace}
\newcommand{\DHS}{\textsc{Diamond Hitting Set}\xspace}
\newcommand{\BPBVD}{\textsc{Bound\-ed $\mathcal{P}$-Block Vertex Deletion}\xspace}
\newcommand{\PBVD}{\textsc{$\mathcal{P}$-Block Vertex Deletion}\xspace}
\renewcommand{\leq}{\leqslant}
\renewcommand{\geq}{\geqslant}
\renewcommand{\le}{\leqslant}
\renewcommand{\ge}{\geqslant}
\let\Oldendproof\endproof%
\def\endproof{\qed\Oldendproof}%
\newcommand{\Problem}[4]{
\ \\
\noindent
\begin{tabular}{|p{0.99\linewidth}|}
  \hline
  #1 \hfill \textbf{Parameter: }#2\\
  \textbf{Input: }#3\\
  \textbf{Question: }#4\\
  \hline
\end{tabular}\\
}
  \spnewtheorem{RULE}{Reduction Rule}[section]{\bfseries}{\upshape}
  \spnewtheorem{BRANCHING}{Branching Rule}[section]{\bfseries}{\upshape}
\newcounter{Bew1}
\newcounter{Bew2}
\newcounter{Def1}
\newcommand{\appendixproof}[2]{
%#2% FULL VERSION

   \stepcounter{Bew1}
%    \hyperref[\arabic{Bew1}]{\noindent see Proof~\arabic{Bew1} (appendix)\\}
   \label{L\arabic{Bew1}}
   \gappto{\appendixProofText}{ \phantomsection \stepcounter{Bew2}\label{\arabic{Bew2}} \subsection{Proof of #1} #2}
	\vspace{-2pt}
}
\renewcommand{\appendixproof}[2]{#2}
\title{Parameterized vertex deletion problems for hereditary graph classes with a block property\thanks{All authors are supported by ERC Starting Grant PARAMTIGHT (No. 280152).}}
\titlerunning{Vertex deletion problems for graph classes with a block property} 
\author{\'Edouard Bonnet \and Nick Brettell \and O-joung Kwon \and D\'aniel Marx}
  \authorrunning{\'E. Bonnet \and N. Brettell \and O. Kwon \and D. Marx}
\institute{Institute for Computer Science and Control, Hungarian Academy of Sciences\\(MTA SZTAKI) \\%, Hungary.%\\
  \ \\
%Email addresses:\\
\href{mailto:edouard.bonnet@dauphine.fr}{\texttt{edouard.bonnet@dauphine.fr}},\,%
\href{mailto:nbrettell@gmail.com}{\texttt{nbrettell@gmail.com}},\,\\%
\href{mailto:ojoungkwon@gmail.com}{\texttt{ojoungkwon@gmail.com}},\,%
\href{mailto:dmarx@cs.bme.hu}{\texttt{dmarx@cs.bme.hu}}}
\begin{document}

\maketitle

\begin{abstract}
For a class of graphs $\mathcal{P}$, 
the \BPBVD problem asks, given a graph~$G$ on $n$ vertices and positive integers~$k$ and~$d$, 
whether there is a set~$S$ of at most $k$ vertices such that each block of $G-S$ 
%with at least one edge 
has at most $d$ vertices and is in $\mathcal{P}$. 
We show that when $\cP$ satisfies a natural hereditary property and is recognizable in polynomial time, 
\BPBVD can be solved in time $2^{\cO(k \log d)}n^{\cO(1)}$. 
When $\mathcal{P}$ contains all split graphs, we show that this running time is essentially optimal unless the Exponential Time Hypothesis fails.
On the other hand, if $\mathcal{P}$ consists of only complete graphs, or only cycle graphs and $K_2$,
then \BPBVD admits a $c^{k}n^{\cO(1)}$-time algorithm for some constant $c$ independent of $d$. 
We also show that \BPBVD admits a kernel with $\mathcal{O}(k^2 d^7)$ vertices.
\end{abstract}

\section{Introduction}\label{sec:introduction}

Vertex deletion problems are formulated as follows: given a graph~$G$ and a class of graphs~$\cG$, 
is there a set of at most $k$ vertices whose deletion transforms $G$ into a graph in $\cG$? 
A graph class~$\cG$ is \emph{hereditary} if whenever $G$ is in $\cG$, every induced subgraph $H$ of $G$ is also in $\cG$.
%Hereditary graph classes have been studied in depth for parameterized vertex deletion problems.
Lewis and Yannakakis~\cite{Lewis1980} proved that for every non-trivial hereditary graph class decidable in polynomial time, the vertex deletion problem for this class is NP-complete.
On the other hand, a class is hereditary if and only if it can be characterized by a set of forbidden induced subgraphs $\cF$, and
Cai~\cite{Cai1996} showed that if $\cF$ is finite, with each graph in $\cF$ having at most $c$ vertices, 
then there is an $\cO(c^k n^{c+1})$-time algorithm for the corresponding vertex deletion problem.
%Based on these results, considerable research effort has focused on obtaining fixed-parameter tractable (FPT) algorithms for various vertex deletion problems.

%A \emph{vertex deletion problem} poses an analogous question where the entire graph $G-X$ is required to have the property $\cP$.

%It follows that for a non-trivial property $\cP$ decidable in polynomial time, the block property problem on $\cP$ is %also
%NP-complete.
%Therefore, a natural next step is to consider the parameterized complexity of such block property problems.

%A property $\cP$ is hereditary if and only if it can be characterised by a set $\cF$ of forbidden induced subgraphs.
A \emph{block} of a graph is a maximal connected subgraph not containing a cut vertex.
Every maximal $2$-connected subgraph is a block, but a block may just consist of one or two vertices.
%Blocks are mostly $2$-connected subgraphs, but they may be connected subgraphs with at most $2$ vertices while $2$-connected subgraphs have at least $3$ vertices.
We consider vertex deletion problems for hereditary graph classes where all blocks of a graph in the class satisfy a certain common property.
It is natural to describe such a class by the set of permissible blocks~$\cP$.  For ease of notation, we do not require that $\cP$ is itself hereditary, but the resulting class, where graphs consist of blocks in $\cP$, should be.  %Thus, we To do this, we introduce the notion of a ``block-hereditary'' class, where if all blocks of a graph are in this class, the resulting graph is hereditary.
%We only require that graphs in $\cP$ satisfy a
%To describe such graph classes, we define the notion of a \emph{block-hereditary class}, which will be a class of allowed blocks.
To achieve this, we say that a class of graphs~$\cP$ is \emph{block-hereditary} if,
whenever $G$ is in $\cP$ and $H$ is an induced subgraph of $G$, every block of $H$ with at least one edge is isomorphic to a graph in $\cP$.
For a block-hereditary class of graphs $\cP$, we define $\classp$ as the class of all graphs whose blocks with at least one edge are in $\cP$.
Several well-known graph classes can be defined in this way.
For instance, 
a \emph{forest} is a graph in the class~$\Phi_{\{K_2\}}$,
a \emph{cactus graph} is a graph in the class~$\Phi_\cC$ where $\cC$ consists of $K_2$ and all cycles,
and a \emph{complete-block graph}\footnote{A \emph{block graph} is the usual name in the literature for a graph where each block is a complete subgraph.  However, since we are dealing here with both blocks and block graphs, to avoid confusion we instead use the term \emph{complete-block graph} and call the corresponding vertex deletion problem \BGVD.} is a graph in $\Phi_{\cK}$ where $\cK$ consists of all complete graphs.
We note that $\cC$ is not a hereditary class, but it is block-hereditary; this is what motivates our use of the term.
%
%
%
%For a block-hereditary class of graphs $\cP$, we define $\classp$ as the class of all graphs whose blocks with at least one edge are in $\cP$.
%Several well-known graph classes can be defined in this way. For instance, 
%a \emph{forest} is a graph in the class~$\classp$ where $\cP=\{K_2\}$, and
%a \emph{cactus graph} is a graph in the class~$\classp$ where $\cP$ consists of $K_2$ and all cycles, and a \emph{complete-block graph}~\footnote{It is usually called a block graph, and the corresponding vertex-deletion problem was called \textsc{Block Graph Vertex Deletion}~\cite{KimK2015,Agrawal}. Since we are dealing with various properties on blocks, these names make some confusion with other properties. To avoid confusion, we instead use a complete-block graph and \BGVD.} is a graph in the class~$\classp$ where $\cP$ consists of all complete graphs.
%We point out that the union of $K_2$ and all cycles is not hereditary, but block-hereditary, and this is the motivation for the definition of block-hereditary.
%forests, cactus graphs, and block graphs.
%In particular, a \emph{cactus graph} is a graph in the class $\classp$ where $\cP$ consists of $K_2$ and all cycles, and a \emph{block graph} is a graph in the class $\classp$ where $\cP$ consists of all complete graphs.

Let $\cP$ be a block-hereditary class such that $\classp$ is a non-trivial hereditary class. The result of Lewis and Yannakakis~\cite{Lewis1980} 
implies that the vertex deletion problem for $\classp$ is NP-complete. % in general.
We define
%Consider
the following parameterized problem for a fixed block-hereditary class of graphs $\cP$.
\Problem{\PBVD}{$k$}{A graph $G$ and a non-negative integer $k$.}{Is there a set $S \subseteq V(G)$ with $\abs{S} \leq k$ such that each block of $G-S$ with at least one edge is in $\cP$?}

This problem generalizes the well-studied parameterized problems \textsc{Vertex Cov\-er}, when $\cP=\emptyset$, and \textsc{Feedback Vertex Set}, when $\cP=\{K_2\}$.
Moreover, if %there is a finite set $\cF$ of $2$-connected graphs satisfying that a $2$-connected graph is in $\cP$ if and only if it has no induced subgraphs in $\cF$,
$\classp$ can be characterized by a finite set of forbidden induced subgraphs,
then Cai's approach~\cite{Cai1996} can be used
to obtain a fixed-parameter tractable (FPT) algorithm that runs in time $2^{\cO(k)}n^{\cO(1)}$.

%\noindent
%We consider this problem when parameterized by $k$, or, when $\cP$ is finite and every $P \in \cP$ has at most $d$ vertices, by $k$ and $d$.

In this paper, we are primarily interested in the variant of this problem where, additionally, the number of vertices in each block is at most $d$.  The value $d$ is a parameter given in the input.
\Problem{\BPBVD}{$d$, $k$}{A graph $G$, a positive integer $d$, and a non-negative integer $k$.}{Is there a set $S \subseteq V(G)$ with $\abs{S} \leq k$ such that each block of $G-S$ with at least one edge has at most $d$ vertices and is in $\cP$?}
%, and $\abs{V(B)} \leq d$

We also consider this problem when parameterized only by $k$.
\iftoggle{paper}{When $d=|V(G)|$, this problem is equivalent to \PBVD, so \BPBVD is NP-complete for any $\cP$ such that $\classp$ is a non-trivial hereditary class.}{}
When $d=1$, this problem is equivalent to \textsc{Vertex Cover}.
This implies that the \BPBVD problem is para-NP-hard when parameterized only by $d$.

The \BPBVD problem is also equivalent to \textsc{Vertex Cover} when $\cP$ is a class of edgeless graphs.
Since \textsc{Vertex Cover} is well studied,  we assume that $d\ge 2$, and focus on classes that contain a graph with at least one edge.
%We say that $\cP$ is \emph{non-degenerate} if it contains $K_2$.
We call such a class  \emph{non-degenerate}.
When $\cP$ is the class of all connected graphs with no cut vertices, we refer to \BPBVD as \dBGD.

\paragraph{\bf Related Work.}
%\subsection*{Related Work}

The analogue of \dBGD for connected components, rather than blocks, is known as \COC.
For this problem, the question is whether a given graph $G$ has a set of vertices~$S$ of size at most $k$ such that each connected component of $G-S$ has at most $d$ vertices.
Drange et al.~\cite{Drange2014} %studied the parameterized complexity of this problem, and
showed that \COC is $W[1]$-hard when parameterized by $k$ or by $d$, but FPT when parameterized by $k + d$, with an algorithm running in $2^{\cO(k \log d)}n$ time.

%A \emph{cactus graph} is a graph in the class~$\classp$ where $\cP$ consists of $K_2$ and all cycle graphs, and a \emph{block graph} is a graph in the class~$\classp$ where $\cP$ consists of all complete graphs.
Clearly, the vertex deletion problem for either cactus graphs, or complete-block graphs, is a specialization of \PBVD.
A graph is a cactus graph if and only if it does not contain a subdivision of the \emph{diamond}~\cite{MallahC1988}, the graph obtained by removing an edge from the complete graph on four vertices.
%It is known that a cactus graph if and only if it does not contain the subdivision of a diamond,
%where  a diamond is any subdivision of the graph consisting of three parallel edges~\cite{MallahC1988}.
For this reason, the problem for cactus graphs is known as \DHS.
%Clearly, the vertex deletion problem for either of cactus graphs and complete-block graphs is a specialization of \PBVD.
%It is known that a cactus graph if and only if it does not contain a subdivision of the diamond,
%where the diamond is the graph obtained from $K_4$ removing an edge~\cite{MallahC1988}.
%In this reason, the problem for cactus graphs is called as \DHS.
For block graphs, we call it \BGVD.
%The parameterized complexity of these problems has been studied recently~\cite{Misra2012, Kolay2015, KimK2015, Agrawal}.
%Both of these problems are specializations of \PBVD, but the former is typically known as \DHS, and the latter as \BGVD.
%These problems correspond to \PBVD where $\cP$ is the class of all cycles with $K_2$, or all complete graphs, respectively.
%The vertex deletion problem for cactus graphs is known as \DHS, %, and is equivalent to \PBVD when $\cP$ contains $K_2$ and all cycle graphs.
%and the vertex deletion problem for block graphs is known as \BGVD.
%These problems are easily seen to be specialisations of \PBVD.
General results %for \textsc{Planar-$\cF$ Deletion}
imply that there is a $c^k n^{\cO(1)}$-time algorithm for \DHS~\cite{Fomin2012,Joret2014,KLPRRSS13}, but an exact value for $c$ is not forthcoming from these approaches.
However, Kolay et al.~\cite{Kolay2015} obtained a $12^k n^{\cO(1)}$-time randomized algorithm.
For the variant where each cycle must additionally be odd
(that is, $\cP$ consists of $K_2$ and all odd cycles),
there is a $50^k n^{\cO(1)}$-time deterministic algorithm due to Misra et al.~\cite{Misra2012}.
%The vertex deletion problem for block graphs is known as \BGVD.
For \BGVD,
Kim and Kwon~\cite{KimK2015} showed that there is an algorithm that runs in $10^kn^{\mathcal{O}(1)}$ time, and there is a kernel with $\cO(k^6)$ vertices. 
Agrawal et al.~\cite{Agrawal} improved this running time to $4^kn^{\mathcal{O}(1)}$, and also obtained a kernel with $\mathcal{O}(k^4)$ vertices.

When considering a minor-closed class, rather than a hereditary class, the vertex deletion problem is known as \textsc{$\cF$-minor-free Deletion}.
Every \textsc{$\cF$-minor-free Deletion} problem has an $\cO(f(k) \cdot n^3)$-time FPT algorithm%, by the celebrated result of Robertson and Seymour
~\cite{Robertson1995}.
When $\cF$ is a set of connected graphs containing at least one planar graph, Fomin et al.~\cite{Fomin2012} showed there is a deterministic FPT algorithm for this problem running in time $2^{\cO(k)} \cdot \cO(n \log^2 n)$. %, and a randomised algorithm running in time $2^{\cO(k)}\cdot\cO(n)$.
One can observe that the class of all graphs whose blocks have size at most $d$ is closed under taking minors. %and by Graph Minor Theorem~\cite{RS2004}, 
%the class can be characterized by a finite set of forbidden minors. As one of the forbidden minors is a cycle of length $d+1$ which is planar, 
%in fact, \dBGD is a special case of \textsc{Planar-$\cF$ Deletion}~\cite{Fomin2012, KLPRRSS13}.
Thus, \PBVD has a single-exponential FPT algorithm and a polynomial kernel, when $\cP$ contains all connected graphs with no cut vertices and at most $d$ vertices.
However, it does not tell us anything about the parameterized complexity of \BPBVD, which we consider in this paper.

\paragraph{\bf Our Contribution.}
%\subsection*{Our Contribution}
The main contribution of this paper is the following:

% has $d$-bounded property and

\begin{theorem}\label{thm:main1}
Let $\cP$ be a non-degenerate block-hereditary class of graphs that is recognizable in polynomial time.
Then, \BPBVD
\begin{enumerate}[\rm (i)]
\item can be solved in $2^{\cO(k \log d)}n^{\cO(1)}$ time, and
\item admits a kernel with $\mathcal{O}(k^2 d^7)$ vertices.
\end{enumerate}
\end{theorem}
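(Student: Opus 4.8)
The plan is to attack the two parts separately but with a common engine: a set of reduction rules that bound the local structure of $G$, together with a branching strategy that exploits the block decomposition. For part~(i), the first step is to observe that any vertex lying in a block of $G$ that is already "bad" — that is, a $2$-connected subgraph either on more than $d$ vertices, or not isomorphic to a member of $\cP$ — must interact with the solution. Since $\cP$ is recognizable in polynomial time, we can in polynomial time find a bad block or certify there is none. If there is none, we are done; otherwise we extract a small obstruction from the bad block and branch on which of its vertices enters $S$. The subtlety is that a bad block may be large, so we cannot afford to branch over all its vertices; instead I would use the block-hereditary property to find, inside any bad block, a \emph{bounded-size} induced subgraph that is itself not a subgraph of any small permissible block — concretely, either a cycle of length $d+1$ (giving a $d+1$-way branch), or one of the finitely many forbidden configurations that witness "not in $\cP$" at bounded size, or a structure forcing a cut vertex. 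This yields branching into $\cO(d)$ subinstances with $k$ decreasing by one, hence $d^{\cO(k)} = 2^{\cO(k\log d)}$ leaves, each resolved in polynomial time.

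The main obstacle in part~(i) is precisely isolating that bounded-size witness: $\cP$ need not be hereditary, only block-hereditary, and "block of $G-S$ is in $\cP$" is a global constraint because deleting a vertex can merge or split blocks. I would handle this by working with the \textsc{Disjoint} version of the problem (as flagged by the problem name \disjointdBGD and \texttt{BRANCHING} machinery the authors set up): guess a partial solution, then solve the annotated problem where we must find a solution disjoint from it, which makes the branching measure drop cleanly. The key lemma to prove is that whenever $G - S$ still contains a bad block $B$, some vertex of a specific $\cO(d)$-size subset of $V(B)$ must be added to $S$, using that any two vertices of $B$ lie on a common cycle (Menger/ear decomposition) so $B$ cannot be "fixed" by deleting a vertex outside a bounded neighborhood of the witness.

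For part~(ii), the kernelization, I would first apply the part~(i)-style reduction rules to make $G$ itself lie in $\classp$ with all blocks of size at most $d$ after deleting few vertices — more precisely, compute an approximate solution $X$ of size $\cO(k)$ (if none of size $\cO(k)$ exists, report \NO). Then $G - X$ is a graph in $\classp_{\le d}$, and $X$ has size $\cO(k)$. The standard strategy now is an \emph{expansion/flower} argument: bound the number of blocks of $G-X$ that see $X$, and prune the rest. Blocks of $G-X$ with no neighbour in $X$ are irrelevant and can be deleted; for blocks touching $X$, a sunflower-type lemma shows that if too many blocks attach to the same small subset of $X$ in the same way, one is redundant. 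The number of distinct "attachment patterns" is bounded by a function of $|X| = \cO(k)$ and $d$ (at most $d$ vertices of a block can be attachment points, each going to $\cO(k)$ places in $X$), which is where the $k^2 d^7$ bound comes from after accounting for the $\cO(d)$ vertices in each retained block and the $\cO(kd)$-bound on the block-cut tree's relevant part. The hard part here is the marking/irrelevance argument for long "paths" of blocks in the block-cut tree: I expect to need a reduction rule that contracts or shortcuts a long induced path of $K_2$-blocks (as in \FVS kernels) and an analogous rule for long chains of larger blocks, then to argue via a counting bound that after exhaustive application only $\cO(k^2 d^7)$ vertices remain; verifying that these rules are safe with respect to the block-in-$\cP$ constraint — again complicated by non-hereditariness of $\cP$ and by block merging — is the technical crux.
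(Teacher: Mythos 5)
There is a genuine gap in your plan for part~(i). Your engine reduces to the situation where no bounded-size obstruction remains and then claims each such leaf is ``resolved in polynomial time''; likewise your key lemma asserts that every bad block $B$ of $G-S$ contains a specific $\cO(d)$-size subset of $V(B)$ that any solution must hit. Both fail: take $\cP$ to be the class of complete graphs (or $\cP=\{K_2\}$, $d=2$) and let $G$ contain a long induced cycle. That cycle is a single bad block, but the solution may delete \emph{any one} of its vertices, so no $\cO(d)$-size must-hit subset exists, and the residual problem with no small obstruction is exactly \FVS (more generally, hitting all cycles that pass through more than one permissible ``cluster''), which is NP-hard, not polynomial. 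This is precisely why the paper runs a two-stage argument: it branches only on induced obstructions in $(\cB_{2,d}\setminus\cP)\cup\obd$ (size at most $2d-2$), proves that once these are gone the graph is $(\cP\cap\cB_{2,d})$-clusterable (any two maximal permissible subgraphs meet in at most one vertex), and then shows that solving the instance is equivalent to hitting every cycle not contained in a cluster, which it solves via a reduction to \SFVS in $\cO^*(4^k)$ time. Your sketch is missing this second, non-branching stage entirely, and no bounded-witness branching can replace it.

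For part~(ii) the outline points in roughly the right direction (approximate solution, then prune blocks of $G-X$), but two steps would not survive. First, an approximate solution of size $\cO(k)$ independent of $d$ is not available; the paper only gets a $(2d+6)$-approximation, i.e.\ $|U|=\cO(dk)$, which is what its counting is calibrated to. Second, your marking argument counts ``attachment patterns'' of blocks into $X$ (up to $d$ attachment vertices, each with $\cO(k)$ possible targets); that count is of order $(dk)^{d}$, which is not polynomial in $k+d$ and cannot yield $\cO(k^2d^7)$. The paper avoids enumerating patterns altogether: it bounds long chains in the block tree of $G-U$ by a bypassing rule, proves via a packing--covering duality for $(N_G(v),d)$-trees (either $k+1$ disjoint such trees, forcing $v$ into every solution, or a hitting set of size $\cO(kd^2)$) that every vertex has bounded ``spread'', shows a large reduced instance must contain a high-degree vertex, and then kills that vertex's neighbourhood with the $d$-expansion lemma. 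Some quantitative replacement for your sunflower-on-patterns step along these lines is needed before the claimed kernel size follows.
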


We will show that this running time is essentially optimal when $\classp$ is the class of all graphs, unless the Exponential Time Hypothesis (ETH)~\cite{MR1894519} fails.
One may expect that if the permissible blocks in $\cP$ have a simpler structure, then the problem becomes easier. 
%Thus, we also consider chordal graphs, split graphs, cycles, complete graphs as permissible blocks in the target graphs.
However, we obtain the same lower bound when $\classp$ contains all split graphs.
Since split graphs are a subclass of chordal graphs, the same can be said when $\classp$ contains all chordal graphs.

\begin{restatable}{theorem}{lbtheorem}\label{prop:dblockdeletion-lower-bound}
%\dBGD\ is not solvable in time $2^{o(k \log d)}$ unless the ETH fails.
Let $\cP$ be a block-hereditary class. If $\classp$ contains all split graphs,
then \BPBVD is not solvable in time $2^{o(k \log d)}$, unless the ETH fails.
\end{restatable}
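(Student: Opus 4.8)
The plan is to give a reduction from a problem known to require $2^{\Omega(k\log d)}$ time under ETH, with the obvious candidate being \COC (equivalently, a version of \textsc{$k\times k$ Clique} or the ``grouped'' CSP/clique-type problems used by Drange et al.~\cite{Drange2014} to prove their lower bound), or directly from the $k\times k$-grid clique problem that underlies such bounds. Concretely, I would reduce from an instance where one must choose, for each of $k$ groups, one element out of $d$ possibilities subject to compatibility constraints; the product structure $d^k=2^{k\log d}$ is what forces the running time. The key is to encode each ``group'' so that a solution $S$ of size $\Theta(k)$ is forced to leave behind, in $G-S$, exactly one ``surviving'' choice per group, and the bound $d$ on block size is precisely what limits how many elements of a group can survive together.

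First I would build, for each group $i\in[k]$, a gadget $H_i$ on roughly $d$ ``choice vertices'' together with a small set of ``core'' vertices, arranged so that $H_i$ is (or contains) a split graph: a clique on the core plus an independent set of choice vertices each adjacent to all core vertices. Because $\classp$ contains all split graphs, such a split gadget — if it forms a single block of size at most $d$ — is permitted; but if the whole clique-plus-choice-set has more than $d$ vertices, the solution is forced to cut it down. The design goal is that after deleting the allotted budget, exactly one choice vertex of $H_i$ may remain attached to the surviving core, so the residual block has size $\le d$ and lies in $\cP$, while keeping two choice vertices would create a block (or a forbidden configuration) of size $d+1$. Then I would add ``constraint gadgets'' between pairs of groups: for each forbidden pair $(a,b)$ of choices from groups $i$ and $j$, attach a small gadget whose only way to become valid (all blocks in $\cP$ and of size $\le d$) without spending extra budget is that at least one of the choice vertices $a$, $b$ is deleted — typically by making $\{a,b\}$ together with the gadget form a block that is too large or not split.

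The main obstacle, and the part requiring the most care, is the interplay between the two conditions ``block in $\cP$'' and ``block has at most $d$ vertices'': since $\cP$ is only assumed to contain all split graphs (it may contain much more, even all graphs), I cannot rely on any structural forbidden subgraph to do the forcing — the size bound $d$ must do all the work. So every gadget has to be engineered so that the \emph{only} obstruction it can create is an oversized block, never a ``wrong-shape'' block, because a wrong-shaped block might happen to be allowed. This means the choice and constraint gadgets must be built entirely out of split pieces whose forced violation is a block on exactly $d+1$ vertices. A secondary technical point is controlling the block structure globally: I must ensure the gadgets meet only at cut vertices, so that the blocks of $G-S$ are exactly the gadget pieces I designed and no unintended large block forms at the interfaces; this is handled by a standard ``attach via a single shared vertex or a pendant clique'' construction.

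Putting it together: set the target budget $k' = \Theta(k)$ (a fixed number of deletions per group plus per constraint), and argue (i) if the original instance has a solution, deleting the corresponding vertices in each $H_i$ plus a bounded number per satisfied constraint yields $G-S$ with all blocks split (hence in $\cP$) and of size $\le d$; and (ii) conversely, any $S$ with $|S|\le k'$ must, by a counting/pigeonhole argument over the $k$ group-gadgets, leave a consistent single choice per group and satisfy all constraints, yielding a solution to the source instance. Since the source problem has no $2^{o(k\log d)}$ algorithm under ETH and the reduction is polynomial with $k' = \Theta(k)$ and the same $d$ (up to a constant factor, which can be absorbed), \BPBVD inherits the $2^{o(k\log d)}$ lower bound. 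The extension to $\classp$ containing all chordal graphs is immediate since split graphs are chordal, and likewise for $\classp$ being all graphs.
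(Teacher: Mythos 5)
Your proposal correctly identifies the right source problem (the \kkC problem underlying the Drange et al.\ bound) and the right key difficulty: since $\cP$ may contain far more than split graphs, every gadget must be a split piece and all forcing must come from the block-size bound $d$, never from a ``wrong-shape'' block. However, the concrete mechanism you sketch has a genuine gap in the parameter accounting, and this is precisely where the paper's trick lies. In your group gadget (a core clique of size about $d-1$ with $d$ choice vertices attached), the only way to leave ``exactly one choice vertex attached'' is to delete the other $d-1$ choice vertices (or private connectors), which costs $\Theta(d)$ deletions per group, i.e.\ $\Theta(kd)$ in total; likewise ``a fixed number of deletions per constraint'' already gives $\Omega(k^2)$ budget, since there are ${k \choose 2}$ group pairs. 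With $k'=\omega(k)$ the composition fails: a $2^{o(k'\log d)}$ algorithm on the constructed instance no longer runs in $2^{o(k\log k)}$ time on the source instance, so no contradiction with the ETH-based bound for \kkC is obtained. You assert $k'=\Theta(k)$, but the gadget design you describe cannot deliver it, and the converse direction ``(ii)'' is asserted without machinery that would force a consistent choice per group under a budget of only $\Theta(k)$.

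The paper closes this gap by inverting the role of the selection: the deleted set $S$ \emph{is} the chosen $k$-clique, not the complement of it. Starting from \kkC on $G$, it builds a single split graph $G'$ with clique side $Q=V(G)\cup E_1$ and independent side $I=E_2$ (two copies of $E(G)$), where each edge-vertex $e^2$ has degree $3$ (its two endpoints and its twin $e^1$), and sets $k':=k$ and $d:=|V(G')|-k-{k \choose 2}$. There are no per-group or per-constraint gadgets and no cut-vertex interfaces to control: the only possible violation is that the unique large block containing $Q\setminus S$ is too big, and it shrinks below $d$ exactly when at least ${k \choose 2}$ vertices of $I$ drop to degree at most $1$, which a counting argument shows forces the $k$ deleted vertices to be a clique of $G$. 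Compatibility between all pairs of choices is thus checked by one global size bound rather than by gadgets, which is what keeps $k'=k$ and $\log d=\cO(\log k)$ and makes the $2^{o(k\log d)}$ lower bound go through. To repair your proposal you would need to replace the ``keep one choice per group, gadget per forbidden pair'' scheme by some such global counting device; as written, the reduction does not establish the claimed bound.
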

\noindent
Formally, %The meaning of Theorem~\ref{prop:dblockdeletion-lower-bound} is that
there is no function $f(x)=o(x)$ such that there is a $2^{f(k\log d)}n^{\cO(1)}$-time algorithm for \BPBVD, unless the ETH fails.

%The meaning of Theorem~\ref{prop:dblockdeletion-lower-bound} is that there is no function $f(x)=o(x)$ such that there is a $2^{f(k\log d)}n^{O(1)}$-time algorithm.

%\begin{theorem}\label{thm:main3}
%\begin{enumerate}
%\item \dCBGD can be solved in time $\cO^*(\cybasis^k)$. 
%\item \dKBGD can be solved in time $\cO^*(\cobasis^k)$. 
%\end{enumerate}
%\end{theorem}

\iftoggle{paper}{
\begin{restatable}{proposition}{wonehardness}\label{prop:w1hardness}
Let $\cP$ be a block-hereditary class. If $\classp$ contains all split graphs,
then \BPBVD is $W[1]$-hard when parameterized only by $k$.
\end{restatable}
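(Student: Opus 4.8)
The plan is to reduce from a known $W[1]$-hard problem whose hardness is tied to the parameter roughly matching $k \log d$; the natural candidate is \textsc{Multicolored Clique} (equivalently \textsc{Multicolored Independent Set}), which is $W[1]$-hard parameterized by the number of colors. In fact, since Theorem~\ref{prop:dblockdeletion-lower-bound} is already proved by an ETH-tight reduction from a problem parameterized by some quantity $\kappa$, producing a set $S$ with $|S| = \cO(\kappa)$ and blocks of size $d = \kappa^{\cO(1)}$ (so that $\kappa \log d = \Theta(\kappa \log \kappa)$ drives the $2^{o(\kappa \log \kappa)}$ lower bound), one should be able to reuse exactly the same construction: if the target problem is also $W[1]$-hard with respect to $\kappa$, then the reduction certifies $W[1]$-hardness of \BPBVD parameterized by $k$ alone, because $k$ is bounded by a function of $\kappa$ only (and $d$ is allowed to grow with the instance size). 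So the first step is to inspect the reduction used for Theorem~\ref{prop:dblockdeletion-lower-bound} and confirm that its source problem is $W[1]$-hard parameterized by the relevant parameter, not merely ETH-hard; \textsc{Multicolored Clique} / \textsc{Clique} satisfies both.

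Concretely, I would start from \textsc{Multicolored Independent Set} on a graph with color classes $V_1, \dots, V_t$, set $\kappa = t$, and build a graph $G$ together with integers $k$ and $d$ as follows. For each color class $V_i$ introduce a gadget that forces the solution to select, into the deletion set $S$, all but one vertex-representative from that class (this costs a fixed number of deletions per class, so $k = \Theta(t)$); this is exactly the kind of "selection gadget" one uses for \textsc{Multicolored Clique}-style reductions, realizable using split graphs since $\classp$ contains all split graphs by hypothesis. For each non-edge (or each edge, depending on whether we reduce from independent set or clique) add a constraint gadget attached to the representatives that it is incompatible with, arranged so that a block of size exceeding $d$ is created precisely when two "forbidden" representatives are simultaneously kept. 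The size bound $d$ is set to be just large enough to permit the blocks arising from a valid selection but small enough to be violated by an invalid one; $d$ will be polynomial in $n$, which is fine since we are parameterizing only by $k$.

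The verification has two directions. For the forward direction, a multicolored independent set yields the deletion set consisting of all non-chosen representatives plus the fixed per-class overhead, and one checks that every surviving block is a split graph with at most $d$ vertices and hence lies in $\classp \cap \mathcal{B}_{?,d}$, using the hypothesis that $\classp$ contains all split graphs. For the backward direction, one argues that any solution $S$ of size at most $k$ must, by a counting/pigeonhole argument over the selection gadgets, leave exactly one representative alive per color class, and then that the block-size constraint forces these representatives to be pairwise non-adjacent (resp.\ adjacent), recovering a multicolored independent set (resp.\ clique). Since $k$ depends only on $t$, this is a parameterized reduction establishing $W[1]$-hardness parameterized by $k$.

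The main obstacle will be designing the constraint gadgets so that (a) they are split graphs — or more precisely, so that after deletion every induced block is a split graph and thus in $\classp$ regardless of which specific non-degenerate block-hereditary $\cP$ with $\classp \supseteq \{\text{split graphs}\}$ we are handed — and simultaneously (b) the block-size threshold $d$ cleanly separates good selections from bad ones, with no "cheating" solution that deletes a few gadget-internal vertices to shrink an oversized block below $d$ while still keeping two forbidden representatives. Ensuring robustness of the gadgets against such local repairs — typically by padding each gadget with enough pendant or clique structure that breaking an oversized block would itself cost more than the budget allows — is the delicate part, and if the reduction behind Theorem~\ref{prop:dblockdeletion-lower-bound} already handles this (as it must, to obtain the ETH bound), then the present proposition follows with essentially no extra work beyond noting that its source problem is $W[1]$-hard in the parameter.
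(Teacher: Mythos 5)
Your primary route is essentially the paper's proof: the paper simply re-runs the reduction from the ETH lower bound, but starting from plain \textsc{Clique} (which is $W[1]$-hard parameterized by solution size), observing that the constructed split-graph instance $(G',d,k)$ has solutions exactly the $k$-cliques of $G$, that the parameter $k$ is preserved, and that $d$ is merely polynomial in the input size. The elaborate Multicolored Independent Set gadget machinery you sketch as a fallback is unnecessary -- the existing construction already rules out ``cheating'' solutions via the counting argument on the main block -- so the proposition follows exactly as your first and last paragraphs indicate.
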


On the other hand, \BPBVD is FPT when parameterized only by $k$ if $\cP$ consists of all complete graphs, or if $\cP$ consists of $K_2$ and all cycles.
}{%
On the other hand, we show that the running time can be improved to $c^{k}n^{\cO(1)}$ for some $c$, independent of $d$, when $\cP$ consists of all complete graphs, or when $\cP$ consists of $K_2$ and all cycles.
}
We refer to these problems as \dKBGD and \dCBGD respectively.
%We refer to the \BPBVD problem as \dCBGD when $\cP$ consist of $K_2$ and all cycle graphs, and as \dKBGD when $\cP = \{K_m : m \geq 2\}$.
%These problems are variants of \DHS and \BGVD, respectively, with an additional constraint on the number of vertices in each block.

\begin{restatable}{theorem}{thmdKBGDiterative}
  \label{prop:dKBGDiterative}
  \dKBGD can be solved in time $\cO^*(\cobasis^k)$.
\end{restatable}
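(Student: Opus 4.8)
The plan is to combine iterative compression with a branching procedure, following the template of the known single-exponential algorithms for \BGVD. The first step reduces \dKBGD to its \emph{disjoint} variant: given $G$, $d$, $k$, and a set $W \subseteq V(G)$ with $\abs{W} \le k+1$ such that every block of $G-W$ is a complete graph on at most $d$ vertices, decide whether there is a set $S \subseteq V(G)\setminus W$ with $\abs{S}\le k$ such that every block of $G-S$ is a complete graph on at most $d$ vertices. This reduction is the standard one: we process the vertices of $G$ one at a time, maintaining a solution of size at most $k$ for the current induced subgraph; inserting the next vertex into this solution gives a solution $W$ of size at most $k+1$ for the next subgraph, and to recover a solution of size at most $k$ we try all $2^{\abs{W}}$ choices for its intersection $X$ with $W$, commit $X$ to the solution, and run the disjoint algorithm on $G-X$ with the set $W\setminus X$ and budget $k-\abs{X}$. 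Since $\sum_{j}\binom{k+1}{j}9^{k-j}=10^{k+1}/9$, a branching algorithm that solves the disjoint variant with budget $b$ in time $\cO^*(9^{b})$ yields the claimed $\cO^*(\cobasis^{k})$ bound for \dKBGD, so it suffices to design such an algorithm for the disjoint variant.

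For the disjoint variant we use that a graph has all of its blocks complete on at most $d$ vertices if and only if it contains no \emph{obstruction}, that is, no induced cycle of length at least four, no induced diamond ($K_4$ minus an edge), and no clique on $d+1$ vertices (equivalently, once the graph is chordal and diamond-free, no induced $K_{d+1}$). If $G[W]$ already contains an obstruction we return \NO, since no $S$ disjoint from $W$ can destroy it; otherwise every obstruction meets $V(G)\setminus W$. We then apply reduction rules that simplify $G-W$ and its attachment to $W$ --- for instance, deleting from the instance any vertex of $V(G)\setminus W$ that provably lies in no obstruction (such as a vertex of degree at most one), and suppressing degree-two vertices of $V(G)\setminus W$ along induced paths while handling the short cycles this may create --- until no rule applies. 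On the resulting irreducible instance we branch on an obstruction: an induced diamond has at most four vertices and at least one outside $W$, giving at most four branches; for the two remaining obstruction types we exploit the fact that $G-W$ is a complete-block graph with blocks of size at most $d$ together with $\abs{W}\le k+1$ to pin down a constant-size family of ``cheap'' candidate vertices of $V(G)\setminus W$ that is guaranteed to contain a deletion of some optimal solution, and branch on those. Tallying the branching vectors, together with the reduction rules, gives branching base $9$, completing the bound.

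The main obstacle is exactly that two of the three obstruction types --- long induced cycles and the clique $K_{d+1}$ --- have size depending on the instance rather than on $k$ or on a fixed constant, so one cannot branch over all vertices of a forbidden induced subgraph. The reduction rules therefore carry most of the weight: exploiting the block structure of $G-W$ and the bound $\abs{W}\le k+1$, they must be strong enough to guarantee that every irreducible instance that is not already resolved admits a constant-size family of candidate deletions meeting some optimal solution. Proving this structural statement, and checking that the accumulated branching vector has base at most $9$ (so that the iterative-compression overhead lifts it to exactly $\cobasis$), is the heart of the argument; the remaining parts --- soundness of each reduction and branching rule, and polynomial-time implementability of finding an obstruction and of the candidate set --- are routine.
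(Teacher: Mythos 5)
Your iterative-compression skeleton is the same as the paper's: reduce to the disjoint variant, try all intersections with the size-$(k+1)$ solution, and use the identity $\sum_i\binom{k+1}{i}9^{k-i}=\cO^*(10^k)$, so that everything hinges on solving the disjoint problem in $\cO^*(9^k)$. But that disjoint algorithm is where all of the content lies, and your sketch leaves it essentially unproven, in two concrete ways. First, you never specify a measure for the branching. In the paper the base $9$ does not come from a $9$-way branch against the budget $k$: it comes from branching vectors $(1,1)$ and $(1,1,1)$ with respect to the augmented measure $\mu=k+\cc(S)$, where $\cc(S)$ is the number of connected components of the undeletable set $S$ (your $W$), so that $\mu\le 2k+1$ and the running time is $\cO^*(3^{\mu})=\cO^*(9^k)$. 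Several of the branches make no deletion at all; they move a vertex (or a whole clique, or a chain of blocks) into $S$, and the only reason this is progress is that it merges components of $G[S]$. Your budget-only accounting cannot register such branches, so "tallying the branching vectors gives base $9$" is an assertion with nothing behind it.

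Second, your structural plan---that every irreducible instance admits a constant-size family of candidate vertices of $V(G)\setminus W$ \emph{guaranteed to contain a deletion of some optimal solution}---is not what the paper proves and is not clearly true. The canonical problematic configuration is a vertex $u\notin W$ with neighbors in two distinct components of $G[W]$: no obstruction need be present, nothing forces any particular deletion, and the correct move is to branch on "delete $u$" versus "commit $u$ to $W$", the latter being measured only via $\cc$. The same pattern recurs for an edge $uv$ of $G-W$ whose endpoints see different components of $G[W]$, and throughout the leaf-block analysis (showing each non-$S$ vertex sees $\emptyset$, a single vertex, or exactly one block of $G[W]$; twin/same-neighborhood arguments inside a leaf clique; rules that push $C'$ or a cut vertex into $W$). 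These structural lemmas and the rules they justify are precisely the "heart of the argument" you acknowledge deferring, so as it stands the proposal reproduces the paper's framework but not its proof; to close the gap you would need either the $k+\cc(S)$ measure with explicit rules achieving vectors $(1,1,1)$ or better, or a genuinely different argument that forced deletions from a bounded candidate set always exist, which the configurations above suggest they do not.
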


\begin{restatable}{theorem}{thmdCBGDiterative}
  \label{prop:dCBGDiterative}
  \dCBGD can be solved in time $\cO^*(\cybasis^k)$. 
\end{restatable}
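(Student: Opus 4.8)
The plan is to combine iterative compression with a bounded-search-tree algorithm for a disjoint variant of the problem, in the spirit of the algorithm of Kim and Kwon for \BGVD and of the \disjointdBGD subroutine. Throughout, call a graph \emph{good} if every block of it with at least one edge is $K_2$ or a cycle on at most $d$ vertices, so that \dCBGD asks for a set of at most $k$ vertices whose removal leaves a good graph. First I would observe that goodness is hereditary: a block of $G-v$ is contained in a block of $G$, which is $K_2$ or a cycle on at most $d$ vertices, and a block contained in such a block is again $K_2$ or a cycle on at most $d$ vertices.

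Next I would set up iterative compression. Processing the vertices of $G$ one at a time and maintaining a good-deletion set of size at most $k$ for the processed part, whenever this set would reach size $k+1$ we recompress it; recompressing a set $Z$ with $|Z| = k+1$ reduces, by guessing the intersection $Z' \subseteq Z$ of $Z$ with the target solution, to $2^{k+1}$ instances of the following \emph{disjoint} problem: given $G$, a set $W \subseteq V(G)$ with $G - W$ good, and a budget $b$, is there $S \subseteq V(G) \setminus W$ with $|S| \le b$ and $G - S$ good? Here $W := Z \setminus Z'$ witnesses that $(G - Z') - W$ is good, and $|W| = b+1$ with $b = k - |Z'|$. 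If the disjoint problem is solvable in $\cO^*(c^b)$ time then the overall running time is $\sum_{i=0}^{k+1} \binom{k+1}{i}\, c^{k-i}\, n^{\cO(1)} = \cO^*((c+1)^k)$, so it suffices to solve the disjoint problem in $\cO^*(25^b)$ time.

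For the disjoint problem, since goodness is hereditary the graph $G[V(G) \setminus W]$ is good, so any violation in $G$ --- a block on more than $d$ vertices, a $2$-connected block that is not a cycle (equivalently, a subgraph subdividing the diamond), or a cycle of length more than $d$ --- must use a vertex of $W$. I would then reduce the instance: delete vertices of $V(G) \setminus W$ of degree at most one in $G$; suppress a degree-two vertex of $V(G) \setminus W$ having a neighbour outside $W$, assigning lengths to the resulting edges so that all cycle lengths, and hence the ``at most $d$'' condition, are preserved; if a vertex of $W$ meets a single maximal suppressed path of $G - W$ in three or more vertices, force a deletion or branch; and dispose directly of components of $G - W$ attached to $W$ through at most two vertices. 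After exhaustive reduction, the relevant part of $G - W$ --- its branch vertices and the vertices with at least two neighbours in $W$ --- has size $\cO(b)$, and one shows that every remaining violation is supported on a constant number of these, at least one of which every solution must delete. Branching over these deletions --- each branch lowering $b$, while the reductions lower an auxiliary part of the measure bounded by $|W| = b+1$ --- a case analysis bounds the branching number so that the disjoint problem is solved in $\cO^*(25^b)$ time, which gives $\cO^*(26^k)$ for \dCBGD. \Cref{prop:dKBGDiterative} follows from the same scheme with complete graphs in place of cycles, where the violations are induced diamonds and induced cycles of length at least four, the branching is cheaper, and the disjoint problem runs in $\cO^*(9^b)$ time, yielding $\cO^*(10^k)$.

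The main obstacle is that a violation can be arbitrarily large: a cycle of length more than $d$, or a subdivided diamond, may pass through many vertices of $G - W$, so one cannot branch over the vertices of an arbitrary violation. The crux is the preprocessing --- weighted suppression of degree-two vertices, together with the bounds on how often a vertex of $W$ can attach to a single suppressed path --- which compresses every relevant violation onto $\cO(1)$ deletable vertices while keeping the cycle-length bound checkable through the edge lengths; and obtaining the exact constant (rather than merely $2^{\cO(k)}$) requires the amortised analysis against the auxiliary measure precisely in the cases --- long cycles, and near-diamonds sharing a suppressed path --- where only one or two candidate vertices are immediately at hand.
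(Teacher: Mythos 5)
Your top-level scheme---iterative compression, guessing the intersection with the old solution, and a disjoint problem with an undeletable set $W$ of size $b+1$, analysed against a measure of the form (budget) $+$ (auxiliary quantity bounded by $|W|$) so that a worst branching vector of $(1,1,1,1,1)$ yields $5^{2b+\cO(1)}=\cO^*(25^b)$ and hence $\cO^*(\cybasis^k)$ overall---is exactly the paper's scheme (there the measure is $\mu(I)=k+\cc(S)$, the number of components of the undeletable set). But the substance of the theorem is precisely the claim that a rule system with worst branching vector $(1,1,1,1,1)$ against this measure exists, and your proposal never exhibits it. ``Every remaining violation is supported on a constant number of these, at least one of which every solution must delete'' and ``a case analysis bounds the branching number'' assert the conclusion rather than argue for it: no branching rule is written down, no soundness proof is given, and no branching vector is computed, so the constant $25$ is a target rather than a consequence. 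In the paper this is where essentially all the work lies: an explicit family of reduction and branching rules organized around the block tree of $G-S$ (red vertices, consecutive red vertices, chains, leaf edges and leaf blocks, chains of blocks), each proved safe and each shown to have branching vector at most $(1,1,1,1,1)$, with the ``move these vertices into $S$'' branches paying for themselves by decreasing $\cc(S)$.

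Two places where your sketch would need real content. First, weighted suppression of degree-two vertices is not innocuous: an optimal solution may delete a suppressed vertex, so before branching only on surviving candidates you must prove a replacement statement---that a solution vertex lying on a chain, or on a chain of blocks between attachment points, can be exchanged for an endpoint or cut vertex; this is exactly the role of \cref{lem:chain,lem:chain2} in the paper, and it is also how the cases you flag (``only one or two candidate vertices immediately at hand'') are resolved, since the remaining branches are of the form ``place these vertices into the undeletable set,'' which decrease the measure only because the relevant attachments lie in distinct components of $G[S]$---a property your rules must first establish and maintain, as the paper does via its analogues of \cref{co:remove-or-connect} and \cref{cy:remove-or-connect2}. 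Second, your intermediate claim that after reduction the relevant part of $G-W$ has size $\cO(b)$ is neither proved nor actually needed; what is needed, and what is missing, is that every irreducible configuration admits a rule with at most five outcomes, each decreasing the measure by at least one. (Minor: the disjoint subroutine should also reject outright when $G[W]$ is not itself a $d$-cactus, and a branching rule---unlike a reduction rule---must decrease the measure in every branch, so ``the reductions lower the auxiliary part'' is not the right accounting.) As it stands, the proposal matches the paper's architecture but leaves the core of the proof unproven.
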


\iftoggle{paper}{
When $d = |V(G)|$, these become $\cO^*(c^k)$-time algorithms for \BGVD and \DHS respectively.  In particular, the latter implies %the following:
that there
%\begin{corollary}
  %There
  is a deterministic FPT algorithm that solves \DHS, running in time $\cO^*(\cybasis^k)$. 
%\end{corollary}

  \medskip

  The paper is structured as follows.
  In the next section, we give some preliminary definitions.
  In \cref{sec:clustering}, we define $\cP$-clusters and $\cP$-clusterable graphs, and show that \PBVD can be solved in $\cO^*(4^k)$ time for $\cP$-clusterable graphs; in particular, we use this to prove \cref{thm:main1}(i).
  In \cref{sec:lower-bound}, we show that, assuming the ETH holds, this running time is essentially tight (\cref{prop:dblockdeletion-lower-bound}), and in \cref{sec:w1hardness} we prove \cref{prop:w1hardness}.
  In \cref{sec:single-exponential}, we use iterative compression to prove \cref{prop:dKBGDiterative,prop:dCBGDiterative}.
  Finally, in \cref{sec:polykernel}, we show that \BPBVD admits a polynomial kernel, proving \cref{thm:main1}(ii).  We also show that smaller kernels can be obtained for \dBGD, \dKBGD, and \dCBGD.
}{}

\section{Preliminaries}\label{sec:prelim}

%\emph{Blocks and paths.}
All graphs considered in this paper are undirected, and have no loops and no parallel edges. 
Let $G$ be a graph.
We denote by $N_G(v)$ the set of neighbors of a vertex $v$ in $G$, and let $N_G(S):=\bigcup_{v \in S} N_G(v) \setminus S$ for any set of vertices $S$.
%The vertex set of $G$ is denoted $V(G)$, and the edge set is denoted $E(G)$.
For $X\subseteq V(G)$, the \emph{deletion} of $X$ from $G$ is the graph obtained by removing $X$ and all edges incident to a vertex in $X$, and is denoted $G-X$.
For $x\in V(G)$, we simply use $G-x$ to refer to $G-\{x\}$. 
Let $\cF$ be a set of graphs; then $G$ is \emph{$\cF$-free} if it has no induced subgraph isomorphic to a graph in $\cF$.
For $n\ge 1$, the complete graph on $n$ vertices is denoted $K_n$. %, and, for $n \ge 3$, $C_n$ denotes the cycle graph of $n$ vertices.

A vertex $v$ of $G$ is a {\em cut vertex} if the deletion of $v$ from $G$ increases the number of connected components. 
We say $G$ is \emph{biconnected} if it is connected and has no cut vertices.
A \emph{block} of $G$ is a maximal biconnected subgraph of $G$.
The graph $G$ is \emph{$2$-connected} if it is biconnected and $\abs{V(G)}\ge 3$.
In this paper we are frequently dealing with blocks, so %it is often more natural to consider a biconnected subgraph, rather than a $2$-connected subgraph.
the notion of being biconnected is often more natural than that of being $2$-connected.
The \emph{block tree} of $G$ is a bipartite graph $B(G)$ with bipartition $(\cB, X)$, where $\cB$ is the set of blocks of $G$, $X$ is the set of cut vertices of $G$, and a block~$B \in \cB$ and a cut vertex~$x\in X$ are adjacent in $B(G)$ if and only if $B$ contains $x$.
\iftoggle{paper}{%
A block $B$ of $G$ is a \emph{leaf block} if $B$ is a leaf of the block tree $B(G)$.  Note that a leaf block has at most one cut vertex.

For $u,v \in V(G)$, a \emph{$uv$-path} is a path beginning at $u$ and ending at $v$.
For $X \subseteq V(G)$, an \emph{$X$-path} is a path beginning and ending at distinct vertices in $X$, with no internal vertices in $X$.
For $v \in V(G)$ and $X \subseteq V(G)$, a \emph{$(v,X)$-path} is a path beginning at $v$, ending at a vertex $x \in X$, and with no internal vertices in $X$.
The \emph{length} of a path $P$, denoted $l(P)$, is the number of edges in $P$.
A path is \emph{non-trivial} if it has length at least two.
}{}
 
\emph{Parameterized Complexity.}
A parameterized problem $Q\subseteq \Sigma^* \times N$ is \emph{fixed-parameter tractable} (\emph{FPT}) if there is an algorithm that decides whether $(x,k)$ belongs to $Q$ in time $f(k)\cdot \abs{x}^{\mathcal{O}(1)}$ for some computable function $f$. Such an algorithm is called an {\em FPT algorithm}. 
A parameterized problem is said to admit a \emph{polynomial kernel} if there is a polynomial time algorithm in $\abs{x}+k$, called a \emph{kernelization algorithm}, that reduces an input instance into an instance with size bounded by a polynomial function in $k$, while preserving the \YES\ or \NO\ answer.

\section{Clustering}\label{sec:clustering}

Agrawal et al.~\cite{Agrawal} described an efficient FPT algorithm for \BGVD\ using a two stage approach.
Firstly, small forbidden induced subgraphs are eliminated using a branching algorithm.
More specifically, for each diamond or cycle of length four, %or diamond, a graph obtained by removing an edge from $K_4$,
at least one vertex must be removed in a solution, so there is a branching algorithm that runs in $\cO^*(4^k)$ time.
%More specifically, if $X \subseteq V(G)$ such that $G[X]$ is isomorphic to a diamond or $C_4$, 
%a solution must contain a vertex in $X$, so one can branch on the inclusion of each vertex in the solution, leading to a
%Firstly, small obstructions (diamonds and $C_4$'s) are removed by branching. %; doing so takes
%where at most $k$ vertices can be removed in a solution.
%At the completion of this procedure, if a solution $S$ is found, the graph $G-S$ can be covered by maximal cliques, where 
The resulting graph has the following structural property: any two distinct maximal cliques have at most one vertex in common.
Thus, in the second stage, it remains only to eliminate all cycles not fully contained in a maximal clique, so the problem can be reduced to an instance of \WFVS.
%We adapt this approach in order to obtain an FPT algorithm for other block property problems, in particular \dBGD.  
%in particular \dCBGD (and P-clusterable properties).
We generalize this process and refer to it as ``clustering'', where the ``clusters'', in the case of \BGVD, are the maximal cliques.
We use this to obtain an algorithm for \BPBVD in \cref{clustering-BPBVD}.

\subsection{$\cP$-clusters}

%%We say a block is \emph{trivial} if it has no edges.
%%A class of graphs $\cP$ is \emph{block-hereditary} if whenever a graph $G$ is in $\cP$, then every non-trivial block of an induced subgraph of $G$ is in $\cP$.
%%%Definition
Let $\cP$ be a block-hereditary class of graphs.
We may assume that $\cP$ contains only biconnected graphs; otherwise there is some block-hereditary $\cP'$ such that $\cP' \subset \cP$ and $\Phi_{\cP'} = \classp$.
%and with the property that if a graph $G$ is in $\cP$, then every non-trivial block of an induced subgraph of $G$ is in $\cP$.
%We also assume, in what follows, that $K_2$ has property~$\cP$; otherwise, $\cP$ coincides with the trivial class of empty (edgeless) graphs.
Let $G$ be a graph.
A \emph{$\cP$-cluster} of $G$ is a maximal induced subgraph~$H$ of $G$ with the property that $H$ is isomorphic either to $K_1$ or a graph in $\cP$.
%with $H \in \cP$ that
%is either $2$-connected or isomorphic to $K_2$, and is maximal.
%When $\cP$ is obvious from context, we simply call $H$ a \emph{cluster}.
We say that $G$ is \emph{$\cP$-clusterable} if for any distinct $\cP$-clusters $H_1$ and $H_2$ of $G$, we have $|V(H_1) \cap V(H_2)| \leq 1$.
%We also say that a class of graphs $\cG$ is \emph{$\cP$-clusterable} when $G$ is $\cP$-clusterable for every $G \in \cG$.
For a $\cP$-clusterable graph,
if $v \in V(G)$ is contained in at least two distinct $\cP$-clusters, then $v$ is called an \emph{external} vertex.

%Clearly, each edge of $G$ is contained in precisely one cluster of $G$.
%; otherwise, $v$ is an \emph{internal} vertex.

The following property of $\cP$-clusters is essential.
%Let $G$ be a graph and let $C$ be a cycle subgraph of $G$.
We say that $X \subseteq V(G)$ \emph{hits} a cycle $C$ if $X \cap V(C) \neq \emptyset$, and a cycle $C$ is \emph{contained} in a $\cP$-cluster of $G$ if $V(C) \subseteq V(H)$ for some $\cP$-cluster $H$ of $G$.

%\todo{Need to clarify whether assuming $G$ is $\cP$-clusterable is necessary or not. Daniel thought that this assumption is redundant. }
%Indeed it is not necessary; but is for the subsequent proposition.  Have fixed the statement of this lemma accordingly --Nick.
\begin{lemma}
  \label{dblock-clustering}
  Let  $\cP$ be a non-degenerate block-hereditary class of graphs,
  let $G$ be a graph, and let $S\subseteq V(G)$.
  Then %, for $S\subseteq V(G)$,
  $G-S\in \classp$ if and only if $S$ hits every cycle not contained in a $\cP$-cluster of $G$.
\end{lemma}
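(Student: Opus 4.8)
The plan is to prove the two implications of the equivalence separately. Throughout I will use a few standard facts about blocks without further comment: a block of a graph~$F$ is an induced subgraph of~$F$ (distinct blocks share at most one vertex, so every edge between two vertices of a block lies in that block); a cycle is $2$-connected and therefore lies inside a single block; and a block containing an edge is either $K_2$ or a $2$-connected graph. I will also use non-degeneracy of~$\cP$ in the form $K_2 \in \cP$: taking two adjacent vertices of some member of~$\cP$ with an edge and applying block-heredity, the induced $K_2$ is isomorphic to a graph in~$\cP$.

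For the forward implication, suppose $G - S \in \classp$ and let $C$ be a cycle of~$G$ with $S \cap V(C) = \emptyset$; I must show $C$ is contained in some $\cP$-cluster of~$G$. Since $C$ avoids~$S$, it is a cycle of $G-S$, hence lies in a single block~$B$ of $G-S$. As $C$ has an edge, so does~$B$, so $B$ is isomorphic to a graph in~$\cP$ because $G-S \in \classp$. Now $B = G[V(B)]$ is an induced subgraph of~$G$ isomorphic to a member of~$\cP$, so it is contained in some maximal such induced subgraph, that is, in a $\cP$-cluster~$H$; then $V(C) \subseteq V(B) \subseteq V(H)$, as required.

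For the backward implication, suppose $S$ hits every cycle of~$G$ not contained in a $\cP$-cluster, and let $B$ be a block of $G - S$ with at least one edge; I must show $B$ is isomorphic to a graph in~$\cP$. If $B \cong K_2$ this follows from $K_2 \in \cP$, so assume $B$ is $2$-connected and fix an edge $e = xy$ of~$B$. For every other edge $f$ of~$B$, $2$-connectedness of~$B$ gives a cycle $C_f \subseteq B$ through both $e$ and~$f$; this cycle avoids~$S$, hence by hypothesis lies in a $\cP$-cluster~$H_f$, and $\{x,y\} \subseteq V(C_f) \subseteq V(H_f)$. Thus any two of these clusters contain the two vertices $x$ and~$y$; since $G$ is $\cP$-clusterable, distinct $\cP$-clusters meet in at most one vertex, so all the $H_f$ coincide with a single $\cP$-cluster~$H$. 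As every edge of~$B$ lies on some $C_f \subseteq H$ and $B$ has no isolated vertex, $V(B) \subseteq V(H)$. Finally $B = G[V(B)]$ is an induced subgraph of~$H$, which is isomorphic to a member of~$\cP$; by block-heredity every block of~$B$ with an edge is isomorphic to a graph in~$\cP$, and since $B$ is $2$-connected it is its own unique block, so $B$ is isomorphic to a graph in~$\cP$.

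The one genuinely delicate point is the step where all of~$V(B)$ is shown to lie in a single $\cP$-cluster: a priori the cycles of~$B$ could be scattered across many clusters, and if clusters may overlap in two or more vertices this can actually fail (consider a subdivided~$K_4$ with $\cP$ the cactus blocks), which is exactly why $\cP$-clusterability is used here. The device that makes it work is to route every cycle under consideration through one fixed edge~$e$, so that all the clusters containing these cycles contain both endpoints of~$e$ and hence coincide; together with the standard fact that in a $2$-connected graph any two edges lie on a common cycle, this forces one cluster to absorb all of~$B$, after which block-heredity finishes the argument.
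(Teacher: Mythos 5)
Your forward direction is essentially the paper's: a cycle of $G-S$ avoiding $S$ lies in a block of $G-S$, that block (the paper works with the induced subgraph on the cycle's vertex set instead) is isomorphic to a member of $\cP$, and any induced subgraph isomorphic to a member of $\cP$ extends to a $\cP$-cluster. In the backward direction you take a genuinely different route. The paper picks $v_1\in V(P_1)\setminus V(P_2)$ and $v_2\in V(P_2)\setminus V(P_1)$ in the block $B$ for distinct clusters $P_1,P_2$ and declares a cycle of $B$ through $v_1,v_2$ to be a contradiction; but the hypothesis only places that cycle inside \emph{some} cluster, which need not be $P_1$ or $P_2$, and the paper never invokes clusterability, so the contradiction does not follow as written. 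Your argument --- fix an edge $e=xy$ of $B$, use the fact that any two edges of a $2$-connected graph lie on a common cycle to route a cycle through $e$ and each other edge, use the fact that distinct $\cP$-clusters meet in at most one vertex to force all these cycles into one cluster $H$, and finish with block-heredity applied to $B$ as an induced subgraph of $H$ --- closes exactly the step the paper glosses over.

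The one caveat is that your proof uses $\cP$-clusterability of $G$, which is not among the lemma's stated hypotheses. However, your parenthetical remark that the hypothesis is genuinely needed is correct: take $\cP$ consisting of $K_2$ and all cycles and let $G$ be the full subdivision of $K_4$ (every edge subdivided once). Then every cycle of $G$ is an induced cycle, hence contained in a $\cP$-cluster, so $S=\emptyset$ satisfies the right-hand side, yet $G$ is $2$-connected and not a cycle, so $G\notin\classp$; here distinct clusters share many vertices, so $G$ is not $\cP$-clusterable. Thus the unrestricted statement is false, and the looseness is in the paper's statement and proof rather than in your argument: the lemma is only ever applied to $\cP$-clusterable graphs (in \cref{killcycles1}), where your proof, unlike the paper's as written, goes through completely.
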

\appendixproof{\cref{dblock-clustering}}
{
\begin{proof}
  Suppose $G-S\in \classp$ and there exists a cycle $C$ of $G-S$ that is not contained in a $\cP$-cluster of $G$.  
  As $G-S\in \classp$ and every cycle is biconnected, $G[V(C)]$ is in $\cP$.
  Thus, there exists a $\cP$-cluster of $G$ that contains $C$ as a subgraph; a contradiction.
  For the other direction, suppose $S$ hits every cycle not contained in a $\cP$-cluster, and let $B$ be a block of $G-S$. 
  It is sufficient to show that $B\in \cP$.
  If $B$ is not contained in a $\cP$-cluster, then there are distinct vertices $v_1$ and $v_2$ in $B$ such that $v_1\in V(P_1)\setminus V(P_2)$ and $v_2\in V(P_2)\setminus V(P_1)$ for distinct $\cP$-clusters $P_1$ and $P_2$.
  Since $K_2\in \cP$, we may assume that $B$ is not isomorphic to $K_2$.
  Thus, as $B$ is biconnected, there is a cycle containing $v_1$ and $v_2$ in $G-S$; a contradiction.
  We deduce that $B$ is contained in a $\cP$-cluster of $G$, so $B\in \cP$.
\end{proof}
}

We now show that \textsc{$\cP$-Block Vertex Deletion} can be reduced to \SFVS if the input graph is $\cP$-clusterable.
The \SFVS\ problem can be solved in time $\cO^*(4^k)$~\cite{wahlstrom2014}. %[Theorem~5.1]

\Problem{\SFVS}{$k$}{A graph $G$, a set $X \subseteq V(G)$, and a non-negative integer~$k$.}{Is there a set $S \subseteq V(G)$ with $|S| \leq k$ such that no %simple
cycle in $G - S$ contains a vertex of $X$?}

\begin{proposition}
  \label{killcycles1}
  Let  $\cP$ be a non-degenerate block-hereditary class of graphs recognizable in polynomial time.
  Given a $\cP$-clusterable graph~$G$ together with the set of $\cP$-clusters of $G$, and a non-negative integer $k$, 
  there is an $\cO^*(4^k)$-time algorithm that determines whether 
  there is a set $S \subseteq V(G)$ with $\abs{S} \le k$ such that $G-S\in \classp$.
\end{proposition}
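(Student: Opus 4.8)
The plan is to reduce the problem in polynomial time to an instance of \SFVS on an auxiliary graph, and then invoke Wahlström's $\cO^*(4^k)$ algorithm. By \cref{dblock-clustering}, the task is exactly to find a smallest vertex set of $G$ hitting every cycle that is not contained in a $\cP$-cluster; call such cycles \emph{crossing}. First I would record two structural facts about a $\cP$-clusterable graph: every edge lies in a unique (maximal) $\cP$-cluster, so the $\cP$-clusters partition the edge set; and, by definition of $\cP$-clusterable, two distinct $\cP$-clusters meet in at most one vertex, so the $\cP$-clusters form a cactus-like structure. Hence a crossing cycle must \emph{transit} between $\cP$-clusters at shared (external) vertices, and in fact meets at least three $\cP$-clusters at at least three distinct transit vertices.

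I would then build $(G',X)$ by ``splitting'' external vertices. For each $\cP$-cluster $H$ and each external vertex $v\in V(H)$, add a private copy $v_H$ joined to $v$ by a path of length two whose midpoint $m_{v,H}$ is a new vertex, and put every such midpoint into $X$; inside the blob replacing $H$, rename each external $v$ to $v_H$ (leaving the internal vertices of $H$ unchanged) and delete the original edges of $H$. With this gadget a cycle lying inside one $\cP$-cluster lifts to a cycle lying inside one blob, which avoids $X$, whereas each transit of a crossing cycle forces the lifted cycle through a midpoint in $X$; moreover, contracting the degree-two midpoints and then identifying each copy $v_H$ with $v$ collapses $G'$ back onto $G$, so every $X$-cycle of $G'$ projects to a closed trail of $G$ that uses edges of at least two $\cP$-clusters. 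Since the number of $\cP$-clusters and the size of each are polynomially bounded, $|V(G')|$ is polynomial in $n$ and $(G',X)$ is computable in polynomial time from $G$ together with its list of $\cP$-clusters.

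It remains to show that $(G,k)$ is a \YES-instance of the cycle-hitting problem if and only if $(G',X,k)$ is a \YES-instance of \SFVS; feeding the latter to the $\cO^*(4^k)$ algorithm then finishes the proof. For the direction from $G'$ to $G$: given a \SFVS solution $S'$ with $|S'|\le k$, first replace any midpoint of $S'$ by an incident original vertex (deleting a degree-two midpoint is always dominated), then project $S'$ onto $V(G)$ by sending each copy $v_H$ to $v$; lifting an arbitrary crossing cycle to an $X$-cycle shows that the projection hits every crossing cycle. The reverse direction is the crux and the step I expect to be the main obstacle: a cycle-hitting set $S$ of $G$ need not itself be a \SFVS solution of $G'$, because an external vertex of $G$ corresponds to a hub, several copies and several midpoints of $G'$, so a crossing cycle whose only contact with $S$ is a ``pass-through'' external vertex lifts to an $X$-cycle that avoids $S$. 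To handle this I would exploit that any feasible $S$ is a strong separator — for every $\cP$-cluster $H$ and every pair $u,w$ of external vertices of $H$ joined by a path of $G-S$ avoiding $E(H)$, the graph $H-S$ does not connect $u$ and $w$, since otherwise $G-S$ would contain a crossing cycle — and use this to redistribute $S$ among the copies inside the affected blobs, without increasing its size, obtaining the desired \SFVS solution.
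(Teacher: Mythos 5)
Your overall strategy is the same as the paper's: reduce, in polynomial time, the problem of hitting all cycles not contained in a $\cP$-cluster (justified by \cref{dblock-clustering}) to \SFVS, and invoke the $\cO^*(4^k)$ algorithm. Your construction (hub, private copies per cluster, subdivided hub--copy edges with the midpoints as terminals) is essentially a variant of the paper's, where the terminals are the external vertices themselves and the copies are attached directly to them. The backward direction (an \SFVS solution of $(G',X)$ projects to a hitting set of $G$ of no larger size) is fine in both treatments.

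The problem is the forward direction, which you yourself identify as ``the crux'' and then do not prove. A set $S$ hitting all crossing cycles of $G$ need not transfer vertex-by-vertex: if an external vertex $s\in S$ is the only $S$-vertex on a crossing cycle $C_1$ that passes through $s$ \emph{inside} one cluster $H_1$, and also the only $S$-vertex on a crossing cycle $C_2$ that \emph{transits} at $s$ between two other clusters, then in $G'$ the lift of $C_1$ uses only the copy $s_{H_1}$ while the lift of $C_2$ uses the hub $s$ and the copies for the other two clusters; these lifts are vertex-disjoint, so no single vertex of $G'$ covers both, and in particular neither ``keep the hub'' nor ``move to a copy'' works. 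Your proposed repair --- the ``strong separator'' observation plus an unspecified ``redistribution among the copies'' --- is only a sketch: the separator property gives disconnection in $H-S$ between particular pairs of external vertices, but turning that into an equal-size \SFVS solution of $G'$ (or showing the bad configuration cannot occur in a $\cP$-clusterable graph when $S$ is feasible) is exactly the missing argument, and it is not routine. For comparison, the paper closes this direction differently: with the external vertices themselves as terminals, it argues that any cycle of $G'-S$ through a terminal $v$ must enter and leave $v$ via copies belonging to two distinct clusters, and that contracting the hub--copy edges projects it to a cycle of $G-S$ not contained in any cluster, contradicting the choice of $S$; that is, the paper asserts the projected cycle avoids $S$, which is precisely the delicate point you flagged (projection can send a copy of a deleted external vertex back onto $S$). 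So you have correctly located where the difficulty lies, but your write-up leaves the central equivalence unestablished, and as it stands the proof is incomplete.
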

\appendixproof{\cref{killcycles1}}
{
\begin{proof}
By \cref{dblock-clustering}, it is sufficient to determine whether $G$ contains a set $S \subseteq V(G)$ of size at most $k$ that hits 
 all cycles not contained in a $\cP$-cluster. To do this, we perform a reduction to \SFVS.
  %Note that we can obtain the set of $\cP$-clusters of $G$ %, and hence the external vertices of $G$,
  %in time polynomial in $V(G)$, by Lemma~\ref{clusterable-clusters}.
  We construct a graph $G'$ from $G$ as follows.
  Let $X \subseteq V(G)$ be the set of external vertices of $G$.
  For each $x \in X$,
  let $\{H_1,H_2,\dotsc,H_m\}$ be the set of $\cP$-clusters of $G$ that $x$ is contained in,
  and introduce $m$ vertices $v(x,H_i)$ for each $i \in \{1,2,\ldots,m\}$.
  Then, do the following for each $x \in X$.
  %Let $N'(H_i) := (N_{H_i}(x) \setminus X) \cup \{v(x,H_i) : x \in V_{H_i} \cap X\}$.
  Recall that $N_{H_i}(x)$ is the set of neighbors of $x$ contained in $H_i$, and set $N'_{H_i}(x) := (N_{H_i}(x) \setminus X) \cup \{v(y,H_i) : y \in N_{H_i}(x) \cap X\}$.
  %Note that $(V_1,\dotsc,V_m)$ is a partition of the neighbors of $v$, since $G$ is $\cP$-clusterable.
  Now remove all edges incident with $x$, and, for each $i \in \{1,2,\dotsc,m\}$, make $v(x,H_i)$ adjacent to each vertex in $N'_{H_i}(x) \cup \{x\}$. 
  %Note that $G$ can be obtained from $G'$ by contracting each edge incident to a vertex $x \in X$, labelling the resulting vertex $x$.
  This completes the construction of $G'$.
  See \cref{trans} for an example of this construction.
  We claim that $(G', X, k)$ is a \YES-instance for \SFVS\ 
  if and only if
  $G$ has a set of at most $k$ vertices that hits every cycle not contained in a $\cP$-cluster of $G$.
  \begin{figure}
    \centering
    \subfloat[$G$]{
      \includegraphics[scale=0.52]{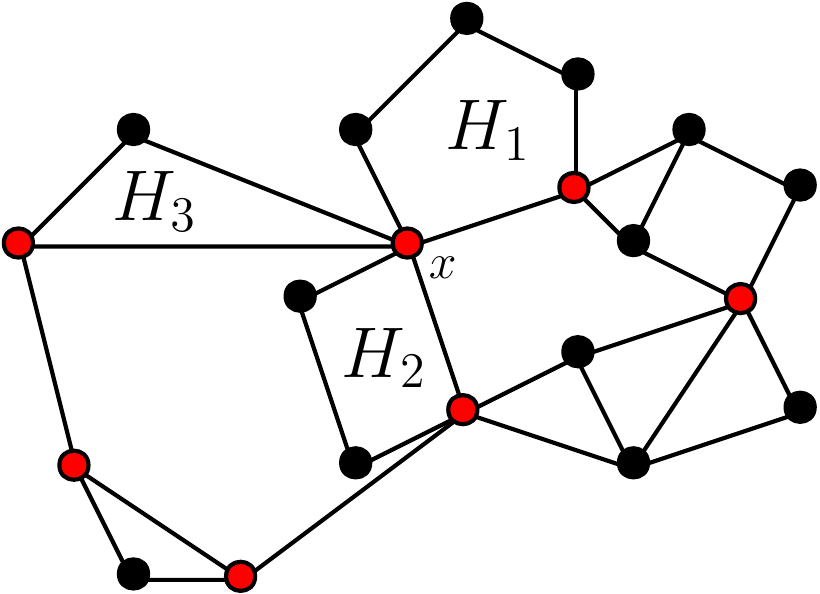}
    }
    \qquad
    \subfloat[$G'$]{
      \includegraphics[scale=0.41]{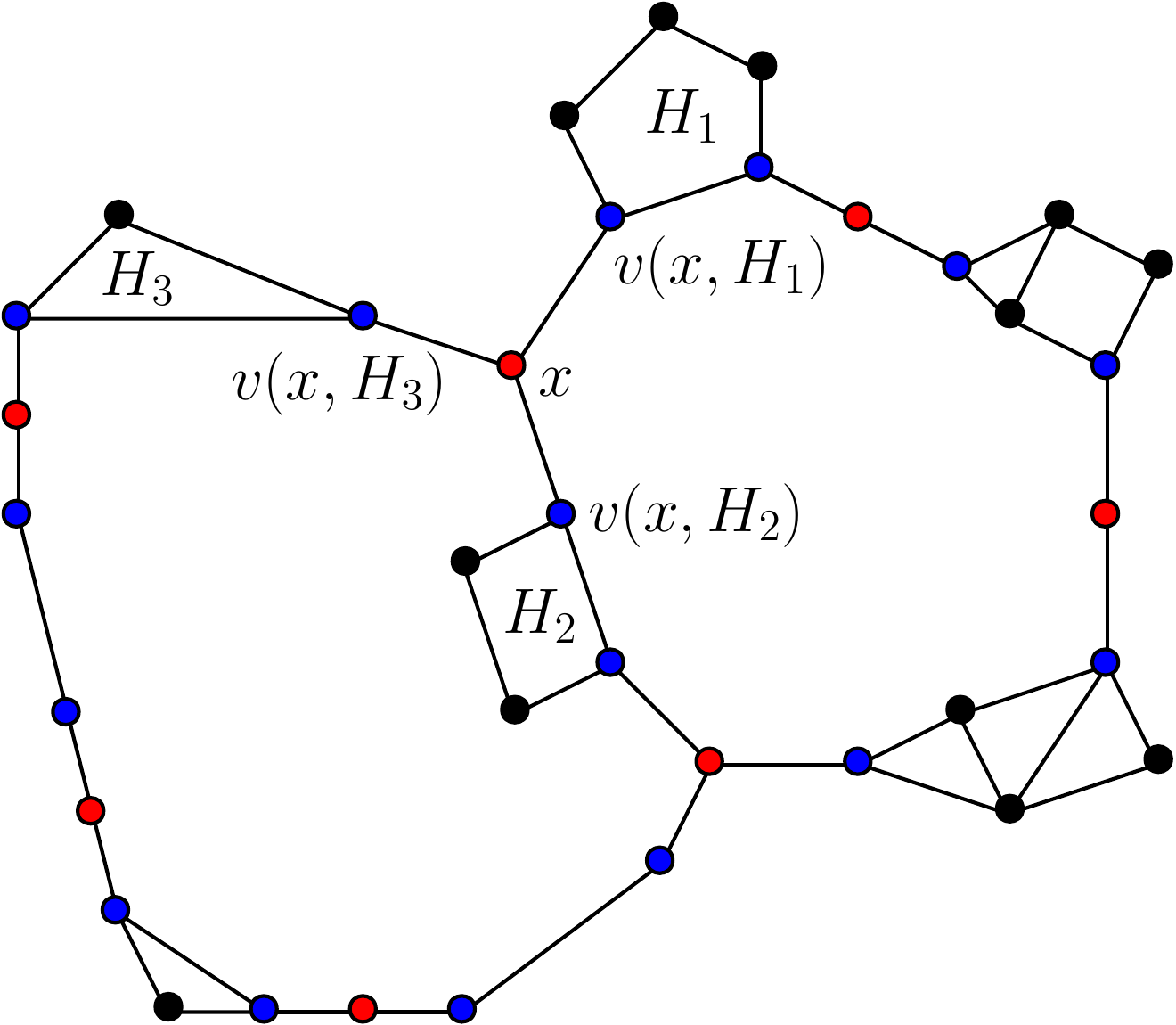}
    }
    \caption{Example construction of $G'$ from $G$, as described in \cref{killcycles1}.} \label{trans}
  \end{figure}

%\todo{here, need to use the property where two $\cP$-clusters share at most one vertex}
  %I believe this is now fixed. --Nick.
  Let $S \subseteq V(G)$ such that every cycle of $G-S$ is contained in a $\cP$-cluster of $G$.
  Towards a contradiction, suppose there is a cycle $C'$ of $G'-S$ containing at least one vertex $v\in X$.
  %Since $G$ can be obtained from $G'$ by contracting each edge incident to a vertex $x \in X$,
  Note that $G$ can be obtained from $G'$ by contracting each edge incident to a vertex $x \in X$, where the resulting vertex is labeled $x$.
  Thus, we can likewise obtain a cycle $C$ of $G-S$ by contracting each edge of $C'$ incident with a vertex $x$ in $X$, labeling the resulting vertex $x$, and relabeling any remaining vertices not in $V(G)$ by their unique neighbor in $G'$ that is a member of $X$.
%  Each vertex in $C' \setminus V(G)$ is adjacent to an expansion vertex, 
  Suppose $v$ is adjacent to $u$ and $w$ in $C$. 
  Then, by the construction of $G'$, %and since $G$ is $\cP$-clusterable,
  $u \in V(H)$ and $w \in V(H')$ for distinct $\cP$-clusters $H$ and $H'$ of $G$.
  Clearly, $C$ is not contained in a $\cP$-cluster of $G$ unless $u$ and $w$ are (not necessarily distinct) external vertices; but this implies that $H$ and $H'$ share at least two vertices $u$ and $v$, contradicting the fact that $G$ is $\cP$-clusterable.
  %So $C$ is cycle of $G-S$ not contained in a $\cP$-cluster of $G$; a contradiction.
  We deduce that no cycle of $G'-S$ contains a vertex in $X$, as required.
%  We conclude that $S$ is a solution to \SFVS\ on the instance $(G', X, k)$.

  Now let $S'$ be a solution to \SFVS\ on $(G', X, k)$.
  By the construction of $G'$, each vertex in $V(G') \setminus V(G)$ is adjacent to precisely one vertex in $X$.
  Let $U$ be the set of vertices in $X$ adjacent to a vertex in $S' \setminus V(G)$ and set $S := (S' \cap V(G)) \cup U$.
  Then $\abs{S} \leq \abs{S'}$ and $S \subseteq V(G)$.
  We claim that every cycle of $G-S$ is contained in a $\cP$-cluster.
  Suppose not; let $C$ be
  %Towards a contradiction, suppose $C$ is
  a cycle of $G-S$ not contained in a $\cP$-cluster.
  Note that for each $v\in V(C)\cap X$, the vertex $v$ and its neighbors in $G'$ are not in $S'$.
  Thus, we obtain a cycle $C'$ of $G'-S'$ from $C$ by performing one of the two following operations for each vertex $v \in V(C) \cap X$, where $u$ and $w$ are the two neighbors of $v$ in $C$:
  \begin{enumerate} 
    \item If there is a $\cP$-cluster $H$ of $G$ for which $\{u,v,w\} \subseteq V(H)$, then relabel $v$ in $C$ with the vertex in $V(G')\setminus V(G)$ adjacent to $\{v\} \cup (N_G(v)\cap V(H))$.
	\item Otherwise, for some $\cP$-clusters $H_1$ and $H_2$ of $G$, we have that $u \in V(H_1) \setminus V(H_2)$, $v \in V(H_1) \cap V(H_2)$, and $w \in V(H_2) \setminus V(H_1)$.  In this case, we subdivide $uv$ and $vw$ in $C$, labeling the new vertices $v_1$ and $v_2$ respectively, where $v_i$ is the vertex in $V(G')\setminus V(G)$ adjacent to $\{v\}\cup (N_G(v)\cap V(H_i))$, for $i\in \{1,2\}$.
  \end{enumerate}
  As $C$ is a cycle of $G-S$ not contained in a $\cP$-cluster, at least one vertex $x$ in $V(C)\cap X$ has its neighbors in $C$ in distinct $\cP$-clusters.
  By the second operation above, $x \in X$ is a vertex of $C'$; a contradiction.
  %$C'$ is a cycle of $G'-S'$ containing at least one vertex of $X$; a contradiction.
  We conclude that every cycle of $G-S$ is contained in a $\cP$-cluster.
  %thus completing the proof.
  \end{proof}
}
%%%%  
%%%%  Let $\{v_1,v_2,\dotsc,v_x\}$ be the external vertices of $G$ in $V(C)$, where $G'$ was obtained by the expansion of $G$ at $v_i$ induced by $(V_{1,i}, \dotsc, V_{s_i,i})$ for each $i \in \{1,\dotsc,x\}$.  For each $v_i$, %with $i \in \{1,\dotsc,x\}$,
%%%%  %Let $v_i \in V(C)$ be an external vertex of $G$, and say the expansion of $G$ at $v_i$ is induced by $(V_{1,i}, \dotsc, V_{s_i,i})$.
%%%%  the neighbours $u_i,w_i$ of $v_i$ in $C$ are either in the same $\cP$-cluster of $G$, or in distinct $\cP$-clusters.  If they are in the same $\cP$-cluster, $u_i,w_i \in V_{1,i}$ say, then the path $u_iv_iw_i$ in $C$ is replaced by $u_iv_{1,i}w_i$ in $C'$.
%%%%  If $u_i$ and $w_i$ are in different $\cP$-clusters, say $u_i \in V_{1,i}$ and $w_i \in V_{2,i}$, then the path $u_iv_iw_i$ in $C$ is replaced by $u_iv_{1,i}v_iv_{2,i}w_i$ in $C'$.
%%%%  Since $C$ is not contained in a $\cP$-cluster, there are at least two external vertices in $C$ whose neighbours in $C$ are in distinct $\cP$-clusters.
%%%%  Hence, $C'$ contains at least two expansion vertices.
%%%%  But no cycle of $G'-T'$ contains an expansion vertex, so $G'-T'$ does not contain $C'$.
%%%%  That is, $C'$ contains at least one vertex, $v_{1,i}$ say, that is in $T'\setminus V(G)$. 
%%%%  But any such vertex $v_{1,i}$ has an expansion vertex neighbour $v_i$ in $U \subseteq T$, by construction, and $v_i$ is in $C$, so $C$ is not a cycle of $G-T$; a contradiction.
%%%%  So any cycles in $G-T$ are contained in a $\cP$-cluster, thus completing the proof.

  By \cref{killcycles1}, the \PBVD problem admits an efficient FPT algorithm provided we can reduce the input to $\cP$-clusterable graphs.
  In the next section, we show that this is possible for any finite block-hereditary $\cP$ where the permissible blocks in $\cP$ have at most $d$ vertices.
  In particular, we use this to show there is an $\cO^*(2^{\cO(k \log d)})$-time algorithm for \BPBVD.
  %In Section~\ref{subsec:clustering}, we give a way to obtain a $\cP$-clusterable graph from any instance by removing $2$-connected subgraphs with between $d+1$ and $2d-2$ vertices.

\subsection{An FPT Algorithm for \BPBVD}\label{subsec:clustering}

\label{clustering-BPBVD}

In this section we describe an FPT algorithm for \BPBVD using the clustering approach.
For positive integers $x$ and $y$, let $\cB_{x,y}$ be the class of all biconnected graphs with at least $x$ vertices and at most $y$ vertices.  When $x > y$, $\cB_{x,y} = \emptyset$.

\iftoggle{paper}{}{The next three proofs are in the appendix.}
\begin{lemma}
\label{boundedclusterlemma}
Let $\cP$ be a non-degenerate block-hereditary class, and let $d \geqslant 2$ be an integer.
If a graph $G$ is $\obd$-free and 
$(\cB_{2,d} \setminus \cP)$-free,
then $G$ is $(\cP \cap \cB_{2,d})$-clusterable.
\end{lemma}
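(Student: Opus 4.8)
The plan is a proof by contradiction. Write $\mathcal{Q} := \cP \cap \cB_{2,d}$ and suppose $G$ is not $\mathcal{Q}$-clusterable, so there are distinct $\mathcal{Q}$-clusters $H_1$ and $H_2$ of $G$ with $\abs{V(H_1) \cap V(H_2)} \ge 2$. First I would record two easy observations. Since $\abs{V(H_1) \cap V(H_2)} \ge 2$, each $H_i$ has at least two vertices, so neither is isomorphic to $K_1$, and hence $H_1, H_2 \in \mathcal{Q}$; in particular each $H_i$ is biconnected with $2 \le \abs{V(H_i)} \le d$. Moreover, since $H_1 \ne H_2$ and each is a \emph{maximal} induced subgraph isomorphic to $K_1$ or to a member of $\mathcal{Q}$, their vertex sets must be incomparable: if $V(H_1) \subseteq V(H_2)$ then either $H_1 = H_2$ or $H_1$ fails to be maximal. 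Hence there exist $v_1 \in V(H_1) \setminus V(H_2)$ and $v_2 \in V(H_2) \setminus V(H_1)$.

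The crux is to show that $G' := G[V(H_1) \cup V(H_2)]$ is biconnected. This is the standard fact that the union of two biconnected graphs sharing at least two vertices is biconnected: $G'$ is connected because $H_1$ and $H_2$ overlap, and for every $x \in V(G')$ the graphs $H_1 - x$ and $H_2 - x$ are each connected (as $H_i$ is biconnected, or does not contain $x$) and still share a vertex because $\abs{V(H_1) \cap V(H_2)} \ge 2$; thus $(H_1 \cup H_2) - x$ is connected, and a fortiori so is $G' - x$. I expect this routine bookkeeping to be the only genuinely technical point of the argument.

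To finish, note $\abs{V(G')} = \abs{V(H_1)} + \abs{V(H_2)} - \abs{V(H_1) \cap V(H_2)} \le 2d - 2$, while $\abs{V(G')} \ge \abs{V(H_1)} + 1 \ge 3$ since $v_2 \notin V(H_1)$. I then split into two cases. If $\abs{V(G')} \le d$, then $G'$ is a biconnected induced subgraph of $G$ on between $2$ and $d$ vertices, so $G' \in \cB_{2,d}$; since $G$ is $(\cB_{2,d} \setminus \cP)$-free we must have $G' \in \cP$, hence $G' \in \mathcal{Q}$, and then $V(H_1) \subsetneq V(G')$ contradicts the maximality of the $\mathcal{Q}$-cluster $H_1$. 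If instead $\abs{V(G')} \ge d + 1$, then $d + 1 \le \abs{V(G')} \le 2d - 2$, so $G'$ is an induced subgraph of $G$ isomorphic to a member of $\obd = \cB_{d+1,2d-2}$, contradicting $\obd$-freeness. Either way we reach a contradiction, so $G$ is $(\cP \cap \cB_{2,d})$-clusterable, as claimed.
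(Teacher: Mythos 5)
Your proof is correct and follows essentially the same route as the paper: both consider $G' = G[V(H_1)\cup V(H_2)]$, establish it is biconnected, bound $\abs{V(G')}\le 2d-2$, and then play off $\obd$-freeness, $(\cB_{2,d}\setminus\cP)$-freeness, and the maximality of the clusters to reach a contradiction. Your case split on $\abs{V(G')}\le d$ versus $\abs{V(G')}\ge d+1$ is just a reordering of the paper's argument, with somewhat more detail on the biconnectivity step.
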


\appendixproof{Lemma~\ref{boundedclusterlemma}}
{
\begin{proof}
  %Clearly the lemma holds when $d \leq 2$, so we may assume that $d \geq 3$.
  Suppose $G$ has distinct $(\cP \cap \cB_{2,d})$-clusters $H_1$ and $H_2$ such that $|V(H_1) \cap V(H_2)| \geq 2$.
  Set $G':=G[V(H_1) \cup V(H_2)]$.
  By the maximality of $(\cP \cap \cB_{2,d})$-clusters, $V(H_1) \setminus V(H_2)$ and $V(H_2) \setminus V(H_1)$ are non-empty, so $|V(G')| \geq 4$.
  The graph $G' - v$ is connected for every $v \in V(G')$, so $G'$ is $2$-connected.
  Since $|V(G')| \leq 2d-2$, but $G$ is $\obd$-free, $|V(G')| \leq d$.
  Hence, by the maximality of $(\cP \cap \cB_{2,d})$-clusters, $G' \notin \cP$, %; a contradiction.
  which contradicts the fact that $G$ is $(\cB_{2,d} \setminus \cP)$-free.
\end{proof}
}

\begin{proposition} %[Bounded $\cP$-clustering algorithm]
  \label{finddblockalgo}
   Let $d\geqslant 2$ be an integer, and let $\cP$ be a non-degenerate block-hereditary class recognizable in polynomial time.
  There is a polynomial-time algorithm that, given a graph~$G$, either
  \begin{enumerate}[\rm (i)]
    \item outputs an induced subgraph of $G$ in $\cB_{2,d} \setminus \cP$, or
    \item outputs an induced subgraph of $G$ in $\obd$, or
    \item correctly answers that $G$ is %$\cP'$-clusterable, where $\cP' = \cP \cap \cB_{2,d}$.%. % is the restriction of $\cP$ to graphs with at most $d$ vertices.
      $((\cB_{2,d} \setminus \cP) \cup \obd)$-free.
%, {\color{red}and outputs the set of all $\cP'$-clusters.}%
  %\todo{I think we should instead try to prove the more general result that there is a polynomial time algorithm that, given a $\cP$-clusterable graph $G$, outputs all $\cP$-clusters of $G$.  But we need to check this is doable.  --Nick.}%
      \label{case3}%
  \end{enumerate}
\end{proposition}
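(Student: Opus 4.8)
The plan is to design a polynomial-time procedure that searches for a small "bad" biconnected induced subgraph, where "bad" means either in $\cB_{2,d}\setminus\cP$ or in $\obd=\cB_{d+1,2d-2}$. The key observation driving the algorithm is that any minimal offending subgraph is small: any $2$-connected graph $H$ on at least $d+1$ vertices contains a $2$-connected induced subgraph on exactly $d+1$ or $d+2$ vertices (by repeatedly deleting a non-cut vertex, the order drops by one each step, so one cannot jump over the interval), and this subgraph lies in $\cB_{d+1,2d-2}$. Hence it suffices to look only for biconnected induced subgraphs on at most $2d-2$ vertices. However, $2d-2$ is not a constant, so we cannot afford to enumerate all such subgraphs by brute force; instead I would reduce to finding biconnected induced subgraphs through a pair of non-adjacent vertices together with two internally disjoint induced paths between them, using a bounded number of carefully chosen "branch" vertices to control the path structure.

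First I would handle the membership test that is already available: since $\cP$ is recognizable in polynomial time, given any candidate induced subgraph $H$ on at most $2d-2$ vertices we can in polynomial time decide whether $H$ is isomorphic to a graph in $\cP$, whether $H$ is biconnected, and whether $2\le |V(H)|\le d$; this lets us classify a found candidate into case (i) or case (ii). So the real work is producing, in polynomial time, a biconnected induced subgraph on at most $2d-2$ vertices whenever one exists. I would proceed as follows: for every vertex $v$ and every pair of distinct neighbors $a,b$ of $v$ with $ab\notin E(G)$ — there are $O(n^3)$ such triples — I would attempt to grow a small biconnected induced subgraph containing the path $a\,v\,b$ by finding a shortest induced $ab$-path in $G-v$ avoiding the closed neighborhood conflicts, using a BFS/shortest-path computation that produces an induced path (shortest paths are induced). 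This yields an induced cycle through $v$; if this cycle has length at most $2d-2$ we are done (an induced cycle is biconnected), and if it is longer we can prune it down as in the first paragraph — but we must be careful that pruning preserves being induced and biconnected, which it does since we only delete a non-cut vertex at a time and stay within an induced subgraph. If no such triple yields a short cycle, then every induced cycle of $G$ is long, and I would argue that in that case $G$ has no biconnected induced subgraph on more than $d$ vertices at all, so case (iii) holds; this direction needs a short structural lemma: a biconnected induced subgraph on $\ge d+1$ vertices either contains an induced cycle of length $\le 2d-2$ or is itself small, contradiction.

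The main obstacle I anticipate is the pruning/shrinking argument: showing that from any biconnected induced subgraph on at most $2d-2$ vertices that is "too big" (more than $d$ vertices but not in $\obd$'s range — actually $\obd=\cB_{d+1,2d-2}$ so anything in $(d, 2d-2]$ already qualifies as a witness for case (ii)) we can extract a genuine $\obd$- or $(\cB_{2,d}\setminus\cP)$-witness, and conversely that if no small biconnected subgraph is found then $G$ truly is $((\cB_{2,d}\setminus\cP)\cup\obd)$-free. Concretely the delicate point is: a biconnected graph on $m$ vertices with $d+1\le m$ has a biconnected induced subgraph on $\min\{m, d+1\}$ vertices, obtained by removing non-cut vertices one at a time; since $d+1 \le 2d-2$ for $d\ge 3$ (and $d=2$ can be dispatched directly since $\obd=\cB_{3,2}=\emptyset$ and $\cB_{2,2}=\{K_2\}\subseteq\cP$ as $\cP$ is non-degenerate block-hereditary hence contains $K_2$), this lands in the window $[d+1,2d-2]$, giving case (ii) unless the subgraph we started from was already in $\cB_{2,d}$, in which case we test membership in $\cP$ and return case (i) or continue. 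Once this shrinking lemma is in place, the remaining steps are routine: enumerate the $O(n^3)$ triples, run $O(n^3)$ shortest-path computations each costing $O(n+m)$, prune, test membership in $\cP$, and if everything fails report case (iii).
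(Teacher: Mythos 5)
There is a genuine gap, in fact two. First, your shrinking lemma is false: it is not true that every $2$-connected graph on at least $d+1$ vertices contains a $2$-connected induced subgraph on exactly $d+1$ or $d+2$ vertices. A long induced cycle $C_n$ with $n>2d-2$ is a counterexample: deleting any vertex leaves a path, so $C_n$ has no proper biconnected induced subgraph other than single edges. For the same reason your pruning step ("if the cycle is longer we can prune it down\ldots we only delete a non-cut vertex at a time") does not preserve biconnectivity, so the step where you shrink a too-long induced cycle to a witness in $\obd$ simply fails. (This particular failure is not fatal to the proposition itself, since long induced cycles are not obstructions, but it is fatal to your argument as written.)

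Second, and more seriously, your search space is too small and your case (iii) conclusion has no supporting argument. Obstructions in $\cB_{2,d}\setminus\cP$ need not be induced cycles, nor of the form ``vertex $v$ plus an induced path between two neighbours of $v$''. For example, when $\cP$ consists of $K_2$ and all cycles (the cactus case), take $G$ to be a theta graph: two vertices joined by three short internally disjoint paths, with at most $d$ vertices in total. Then $G$ itself is a biconnected obstruction, yet every induced cycle of $G$ is in $\cP$, and no triple $(v,a,b)$ together with a shortest induced $ab$-path in $G-v$ ever produces an induced subgraph outside $\cP$. Your procedure would find only cycles belonging to $\cP$, and the unspecified ``continue'' gives no mechanism for making further progress; declaring case (iii) at that point would be wrong. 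This is exactly the difficulty the paper's proof is built around: it grows a candidate cluster $G[X]$ by repeatedly absorbing non-trivial $X$-paths (testing membership in $\cP$ after each ear addition), checks for obstructions formed by a path inside $G[X]$ together with a path outside it, and then proves the key claim that once $G[X]$ is maximal in this sense, no obstruction shares an edge with $G[X]$, so $E(G[X])$ can be discarded and the search repeated on a strictly smaller edge set. Without this ear-growing step and the ``no obstruction meets $E(G[X])$'' argument (or some substitute for them), your algorithm is neither complete nor does its negative answer certify $((\cB_{2,d}\setminus\cP)\cup\obd)$-freeness.
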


\appendixproof{\Cref{finddblockalgo}}
{
\iftoggle{paper}{}{%
For $u,v \in V(G)$, a \emph{$uv$-path} is a path beginning at $u$ and ending at $v$.
For $X \subseteq V(G)$, an \emph{$X$-path} is a path beginning and ending at distinct vertices in $X$, with no internal vertices in $X$.
For $v \in V(G)$ and $X \subseteq V(G)$, a \emph{$(v,X)$-path} is a path beginning at $v$, ending at a vertex $x \in X$, and with no internal vertices in $X$.
The \emph{length} of a path $P$, denoted $l(P)$, is the number of edges in $P$.
A path is \emph{non-trivial} if it has length at least two.
}

\begin{proof}
  If %$\cP \subseteq \cB_{2,2}$ or  %%Not needed I think
  $d \leq 2$, then (\ref{case3}) holds trivially, so we may assume otherwise.
%
  %Firstly,
  We show that there is a polynomial-time algorithm \textsc{FindObstruction} that finds an induced subgraph of $G$ that is either in $\cB_{2,d} \setminus \cP$ or in $\obd$, if such an induced subgraph exists.  For brevity, we refer to either type of induced subgraph %, that is not isomorphic to $K_2$,
  as an \emph{obstruction}.
  In the case that no obstruction exists, %we wish to find the set of all $\cP$-clusters.  Secondly, we present a polynomial-time procedure that does this. %given a graph $\cP$-clusterable graph with no obstructions, it returns the set of all $\cP$-clusters.
  then $G$ is $(\cP \cap \cB_{2,d})$-clusterable, by Lemma~\ref{boundedclusterlemma}.

  First, we give an informal description of \textsc{FindObstruction} (\cref{foalg}).  We incrementally construct a biconnected induced subgraph $G[X]$, starting with $G[X]$ as the shortest cycle of $G$, by adding the vertices of a non-trivial $X$-path to $X$.  If, at any increment, $G[X]$ is an obstruction, then we return $G[X]$.  Otherwise, we eventually have that $G[X]$ is in $\cB_{2,d} \cap \cP$, but the union of $G[X]$ and any non-trivial $X$-path is not in $\obd$.
  Now, if there is a non-trivial $X$-path that together with a path in $G[X]$ forms a cycle of length at most $2d-2$, then this cycle is an obstruction, and we return it.  Otherwise, no obstruction intersects $G[X]$ in any edges, so we remove the edges of $G[X]$ from consideration and repeat the process.

\begin{algorithm}%[htp]
  \caption{\textsc{FindObstruction($G,d$)}}\label{foalg}
\begin{algorithmic}[1]
\Statex \textbf{Input:} A graph $G$ and an integer $d > 2$.
\Statex \textbf{Output:} An induced subgraph of $G$ that is either in $\cB_{2,d} \setminus \cP$ or in $\obd$; or \NO, if no such induced subgraph exists.
\State Set $\cH := \emptyset$.
\While {the shortest cycle in $G-\bigcup_{H\in \cH} E(H)$ has length at most $2d-2$,} \label{inwhile}
\State Set $G':=G-\bigcup_{H\in \cH} E(H)$. \label{defineg}
\State Let $X$ be the vertex set of the shortest cycle in $G'$.
\If {$\abs{X} \geq d+1$ or $G[X]\notin \cP$}
\State \Return $G[X]$. \label{earlyreturn}
\EndIf \Comment {$G[X]\in \cP$ and $\abs{X} \leq d$}
%\State Let $P$ be the shortest $X$-path in $G'$.
%\While {$|X \cup V(P)| \leq d$,}
\While {there is a non-trivial $X$-path $P$ in $G'$ such that $\abs{X \cup V(P)} \leq d$,}\label{nontrivialpath1}
\State Set $X := X \cup V(P)$.
\If {$G[X]\notin \cP$}
\State \Return $G[X]$. \label{earreturn1}
\EndIf
%\State Set $P$ to be the shortest $X$-path in $G'$.
\EndWhile\label{nontrivialpath2}
%\If {$X \neq V(G')$}
\Comment {$G[X]\in \cP$ and $\abs{X \cup V(P)} \geq d+1$ for any non-trivial $X$-path $P$} %in $G'$}
\If {there is a non-trivial $X$-path $P$ in $G'$ such that $\abs{X \cup V(P)} \leq 2d-2$,}
%\If {$|X \cup V(P)| \leq 2d-2$,}
\State \Return $G[X \cup V(P)]$. \label{earreturn2}
\ElsIf{for some distinct $u,v \in X$, there is a $uv$-path $P$ in $G[X]$ and a $uv$-path~$P'$ in $G'-E(G[X])$ with $l(P) + l(P') \leq 2d-2$,}\label{intersectioncheck}
\State \Return $G[V(P) \cup V(P')]$. \label{cyclereturn}
\EndIf
%\EndIf
\Comment{No obstruction meets $E(G[X])$}
\State Add $G[X]$ to $\cH$. \label{addcluster}
\EndWhile
%\If{$K_2 \notin \cP$ and $G$ has an edge $xy$}\label{handleK2s}
%\State \Return $G[\{x,y\}]$ %\Comment{Graphs in $\cP$ contain only parallel edges}
%\EndIf \label{handleK2e}
\State \Return \NO.
\end{algorithmic}
\end{algorithm}

Now we prove the correctness of the algorithm.
Since $\cP$ is non-degenerate, $K_2$ is not an obstruction.

Consider an induced subgraph $H:=G[X]$ added to $\cH$ at line~\ref{addcluster}.  As the edges of $H$ are excluded in future iterations, we will show that these edges do not meet the edges of any obstruction.
That is, we claim that for every $Y \subseteq V(G)$ such that
$G[Y]$ is an obstruction, $E(H) \cap E(G[Y]) = \emptyset$. %, that is, $H$ does not destroy any subgraph of $G$ in $\obd$.
Towards a contradiction, suppose $G[Y]$ is an obstruction for some $Y \subseteq V(G)$, and $uv \in E(H) \cap E(G[Y])$.
Clearly $\abs{X} \leq d$, and $G[X] \in \cP$.
Note also that $Y \nsubseteq X$, since $\abs{Y} \leq d$ implies that $G[Y] \notin \cP$, but $G[X] \in \cP$.
Since $G[Y]$ is biconnected and not isomorphic to $K_2$, it has a cycle subgraph $C$ that contains the edge $uv$.
Let $P$ be a non-trivial $X$-path in $C$. % of length at least two.
%Such a path exists since $Y \nsubseteq X$.
Since there is a path in $H$ of length at most $d-1$ between any two distinct vertices in $X$, it follows that $P$ has length more than $d-1$, otherwise there is a path satisfying the conditions of
line~\ref{intersectioncheck}.  Since $\abs{Y} \leq 2d-2$, we deduce that there are no two distinct $X$-paths in $C$.
So $C$ is the union of $P$ and a path contained in $H$.  Thus $P$ is a $uv$-path satisfying the conditions of line~\ref{intersectioncheck}; a contradiction.
%Since line~\ref{addcluster} is the only place where $\cH$ is modified,
This verifies our claim.

Now we show that if $G$ contains an obstruction, then the algorithm outputs an obstruction.
Suppose that $G$ contains an induced subgraph $J \in (\cB_{2,d} \setminus \cP) \cup \obd$.  
Since $\abs{V(J)} \leq 2d-2$ and $J$ is biconnected, but not isomorphic to $K_2$, it contains a cycle of length at most $2d-2$.
So execution reaches line~\ref{inwhile} and, from the previous paragraph, $J$ is a subgraph of in $G'$ as given in line~\ref{defineg}.
It is clear that the flow of execution will reach line~\ref{earlyreturn}, line~\ref{earreturn1}, line~\ref{earreturn2}, line~\ref{cyclereturn}, or line~\ref{addcluster}.
Clearly, a graph returned at line~\ref{earlyreturn} is an obstruction.
A graph returned at either line~\ref{earreturn1} or line~\ref{earreturn2} is $2$-connected, since it has an obvious open ear decomposition (see, for example, \cite[Theorem~5.8]{bondyAndMurty}), and thus is easily seen to be an obstruction.
If a graph is returned at line~\ref{cyclereturn}, it is a cycle, and since $l(P') \geq d$, such a graph is also in $\obd$.
By the previous paragraph, if execution reaches line~\ref{addcluster}, then $E(G[X])$ does not meet $E(J)$, so execution will loop, with $\abs{E(G')}$ strictly smaller in the following iteration.
Thus, eventually the algorithm will find either $J$, or another obstruction.

It remains to prove that the algorithm runs in polynomial time.
We observe that there will be at most $\cO(m)$ loops of the outer `while' block, and at most $\cO(m)$ loops of the inner `while' block.
  Finding a shortest cycle, or all shortest $X$-paths for some $X \subseteq V(G)$, takes time $\cO(n^3)$ by the Floyd-Warshall algorithm.
  It follows that the algorithm runs in polynomial time.
  %Since such a computation is performed for each inner loop, the algorithm runs in $\cO(m^2n^3)$ time.
%
\end{proof}
}

\begin{lemma}
  \label{findboundedclusters}
  Let $d \geqslant 2$ be an integer, and let $\cP$ be a non-degenerate block-hereditary class recognizable in polynomial time.
  Then there is a polynomial-time algorithm that, given a $((\cB_{2,d} \setminus \cP) \cup \obd)$-free graph $G$, outputs the set of $(\cP \cap \cB_{2,d})$-clusters of $G$.
\end{lemma}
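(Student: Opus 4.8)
Write $\cP' := \cP \cap \cB_{2,d}$ for brevity. Since $\cP$ consists only of biconnected graphs and $G$ is $(\cB_{2,d}\setminus\cP)$-free, every biconnected induced subgraph of $G$ on between $2$ and $d$ vertices belongs to $\cP'$; and since $G$ is also $\obd$-free, \cref{boundedclusterlemma} gives that $G$ is $\cP'$-clusterable, so any two distinct $\cP'$-clusters share at most one vertex. The plan is to output the clusters of size one directly, and to recover each larger cluster by greedily attaching ears to one of its edges.

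For the size-one clusters: as $\cP$ is non-degenerate and block-hereditary, $K_2 \in \cP$, hence $K_2 \in \cP'$ (recall $d \ge 2$). Therefore a singleton $\{v\}$ is a $\cP'$-cluster if and only if $v$ is isolated in $G$ --- if $v$ has a neighbour $w$ then $\{v\} \subsetneq \{v,w\}$ and $G[\{v,w\}] \cong K_2 \in \cP'$, contradicting maximality, while an isolated vertex lies in no member of $\cP'$ (all of which are connected) and in no larger $K_1$. The isolated vertices are found in polynomial time, and every remaining $\cP'$-cluster is isomorphic to a member of $\cP'$, hence biconnected on between $2$ and $d$ vertices; call these the \emph{large} clusters.

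For the large clusters, process each edge $e = uv$ of $G$ as follows. Set $X := \{u,v\}$, so $G[X] \cong K_2 \in \cP'$. While there is a non-trivial $X$-path $P$ in $G$ with $\abs{X \cup V(P)} \le d$, replace $X$ by $X \cup V(P)$; when no such $P$ exists, output $G[X]$. Since we start from $K_2$ and each step attaches an ear, $G[X]$ is always biconnected, and it has at most $d$ vertices, so $G[X] \in \cP'$ at every stage; in particular $\abs{X} \le d$, and as $\abs{X}$ strictly increases each round, the loop halts within $d$ rounds. Whether a usable $X$-path exists in a given round is checked in polynomial time: for each pair of distinct $x_1, x_2 \in X$, compute a shortest $x_1 x_2$-path with no internal vertex in $X$ --- a shortest-path computation in $G[(V(G)\setminus X) \cup \{x_1,x_2\}]$ after deleting the edge $x_1 x_2$ if present --- and test whether the shortest candidate over all pairs has at most $d - \abs{X} + 1$ edges. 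With $\cO(\abs{E(G)})$ edges, at most $d$ rounds per edge, and $\cO(\abs{V(G)}^2)$ shortest-path computations per round, the overall running time is polynomial. Finally, return the set consisting of the singletons at isolated vertices together with the distinct induced subgraphs $G[X]$ produced above.

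The crux is correctness. Each output $G[X]$ is a $\cP'$-cluster containing $e$: it lies in $\cP'$ and contains $u$ and $v$, and for maximality note that if $Y \supsetneq X$ satisfied $G[Y] \in \cP'$, then choosing $z \in Y \setminus X$ and the component $C$ of $G[Y]-X$ containing $z$, the $2$-connectivity of $G[Y]$ forces $C$ to have at least two neighbours in $X$, yielding a non-trivial $X$-path through $C$ with $\abs{X \cup V(P)} \le \abs{Y} \le d$ --- contradicting that the loop stopped. Conversely, any large cluster $H$ lies in $\cP'$ and so contains an edge $e$; running the procedure on $e$ outputs a $\cP'$-cluster that contains $e$, and since $G$ is $\cP'$-clusterable two clusters meet in at most one vertex, so this output cluster --- which shares both endpoints of $e$ with $H$ --- must equal $H$. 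Hence the algorithm outputs exactly the set of $\cP'$-clusters. I expect the main obstacle to be precisely this analysis: that ear-growing stays within $\cP'$ (so the procedure terminates with a genuine $\cP'$-subgraph), which is where $(\cB_{2,d}\setminus\cP)$-freeness is used; that termination forces maximality, which is the $2$-connectivity/component argument; and that $\cP'$-clusterability then identifies the cluster of $e$ uniquely, guaranteeing that every cluster is found.
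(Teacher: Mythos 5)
Your proposal is correct and takes essentially the same approach as the paper: grow a biconnected induced subgraph by repeatedly attaching non-trivial $X$-paths keeping at most $d$ vertices, use $(\cB_{2,d}\setminus\cP)$-freeness to conclude membership in $\cP\cap\cB_{2,d}$, derive maximality from the termination condition via a $2$-connectivity argument, and invoke clusterability (\cref{boundedclusterlemma}) to identify the clusters uniquely. The differences are only bookkeeping: the paper seeds the growth at shortest cycles in a residual graph from which the edges of found clusters are deleted (collecting leftover $K_2$'s at the end), whereas you seed at every edge of $G$ and deduplicate, additionally handling the $K_1$-clusters at isolated vertices explicitly.
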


\appendixproof{Lemma~\ref{findboundedclusters}}
{
\begin{proof}
  Let $\cP' = \cP \cap \cB_{2,d}$.
  By \cref{finddblockalgo}, $G$ is $\cP'$-clusterable.
  We argue that the algorithm \textsc{Cluster} (\cref{clualg}) meets the requirements of the lemma.

  %%Procedure for finding clusters if we have no induced subgraph obstructions (something in \obd, or something small not in \cP)
\begin{algorithm}%[htp]
  \caption{\textsc{Cluster($G$)}}\label{clualg}
\begin{algorithmic}[1]
\Statex \textbf{Input:} A %$\cP$-clusterable graph $G$. % and an integer $d > 2$.
%$((\cB_{2,d} \setminus \cP) \cup \obd)$-free graph $G$.
$(\cB_{2,d} \setminus \cP)$-free $\cP'$-clusterable graph $G$.
\Statex \textbf{Output:} The set $\cH$ of %$\cP$-clusters of $G$.
%$(\cP \cap \cB_{2,d})$-clusters of $G$.
$\cP'$-clusters of $G$.
\State Set $\cH := \emptyset$ and $G' := G$.
\While {the shortest cycle in $G'$ has length at most $d$,} %$2d-2$,}
\State Let $X$ be the vertex set of the shortest cycle in $G'$.
\Comment {$G[X]\in \cP$}% and $\abs{X} \leq d$}
\label{foundcycle}
\While {there is a non-trivial $X$-path $P$ in $G'$ such that $\abs{X \cup V(P)} \leq d$,}%\label{nontrivialpath1}
\State Set $X := X \cup V(P)$.
\Comment {$G[X]\in \cP$}%
\label{foundpath}
\EndWhile %\label{nontrivialpath2}
%\Comment {$G[X]\in \cP$ and $\abs{X \cup V(P)} \geq d+1$ for any non-trivial $X$-path $P$} %in $G'$}
\Comment{No obstruction meets $E(G[X])$}
\State Add $G[X]$ to $\cH$.\label{foundcluster}
\State Set $G':=G'-E(G[X])$. %\label{defineg}
\EndWhile
\If {$G'$ is edgeless,}
  \State \Return $\cH$.
%\ElsIf {$K_2 \in \cP$,}
\Else
  \State \Return $\cH \cup \{G'[\{x,y\}] : \text{$x$ and $y$ are adjacent in $G'$}\}$.
%\Else
%\State \Return $\cH \cup \{G'[\{x\}] : x \in V(G') \setminus \bigcup_{H \in \cH}V(H)\}$.
\EndIf
\end{algorithmic}
\end{algorithm}

It is clear that this algorithm runs in polynomial time.
We now prove correctness of the algorithm.
%Let $\cP' = \cP \cap \cB_{2,d}$.
A graph $G[X]$ at line~\ref{foundcycle} or line~\ref{foundpath} is
$2$-connected, since it has an obvious open ear decomposition, and
in $\cP$, since $G$ is $(\cB_{2,d} \setminus \cP)$-free. % and $ \cup \obd$-free, respectively.
%So a graph $G[X]$ at line~\ref{foundcluster} is $2$-connected and in $\cP$.
By line~\ref{foundcluster}, the graph $G[X]$ has the property that,
%Since,
for any non-trivial $G[X]$ path $P$, $|X \cup V(P)| > d$.
So any $2$-connected graph containing $G[X]$ as a proper subgraph consists of at least $d+1$ vertices, and hence is not in $\cP'$.
This proves that $G[X]$ is indeed a $\cP'$-cluster.
Since $G$ is $\cP'$-clusterable, any $\cP'$-cluster distinct from $G[X]$ does not share any edges with $G[X]$, so we can safely remove them from consideration, and repeat this procedure.
If $G$ contains no cycles of length at most $d$, then the only remaining biconnected components are isomorphic to $K_1$ or $K_2$.
\end{proof}
}

\begin{theorem}\label{thm:dBGDfpt}
Let $\cP$ be a non-degenerate block-hereditary class of graphs recognizable in polynomial time.
 Then \BPBVD can be solved in time $2^{\cO(k \log d)}n^{\cO(1)}$.
\end{theorem}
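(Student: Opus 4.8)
The plan is to combine a bounded-depth branching procedure that eliminates all small forbidden substructures with the clustering machinery already developed in this section. The key observation is that \BPBVD for the class $\cP$ is, by \cref{dblock-clustering}, equivalent to hitting all cycles not contained in a $(\cP \cap \cB_{2,d})$-cluster, once the input has been made $(\cP \cap \cB_{2,d})$-clusterable. By \cref{boundedclusterlemma}, a graph is $(\cP \cap \cB_{2,d})$-clusterable whenever it is both $\obd$-free and $(\cB_{2,d} \setminus \cP)$-free. Since every graph in $(\cB_{2,d} \setminus \cP) \cup \obd$ is biconnected with at most $2d-2$ vertices, and at least one of its vertices must lie in any solution $S$, we can branch on such an obstruction: pick one (using the polynomial-time algorithm \textsc{FindObstruction} of \cref{finddblockalgo}), and for each of its at most $2d-2$ vertices, recursively solve the instance obtained by deleting that vertex and decrementing $k$.

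Concretely, I would first argue that this branching produces a search tree of depth at most $k$ and branching factor at most $2d-2$, hence at most $(2d-2)^k = 2^{\cO(k \log d)}$ leaves, each reachable in polynomial time. At each leaf, either $k$ has dropped below zero (reject that branch), or the current graph $G'$ is $((\cB_{2,d} \setminus \cP) \cup \obd)$-free. In the latter case, \cref{findboundedclusters} computes the set of $(\cP \cap \cB_{2,d})$-clusters of $G'$ in polynomial time, and then \cref{killcycles1} — noting that $\cP \cap \cB_{2,d}$ is itself a non-degenerate block-hereditary class recognizable in polynomial time, and that $\Phi_{\cP \cap \cB_{2,d}}$ is exactly the class of graphs whose blocks with an edge lie in $\cP$ and have at most $d$ vertices — decides in $\cO^*(4^k)$ time whether the residual budget suffices. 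We answer \YES\ iff some leaf yields a \YES-instance. Correctness follows because an optimal solution $S$, restricted to any obstruction $G[Y]$ encountered along a root-to-leaf path, must be nonempty (as $G[Y] \notin \Phi_{\cP \cap \cB_{2,d}}$), so the branch that deletes a vertex of $S \cap Y$ is explored and leads to a valid completion.

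The total running time is the product of the number of leaves and the per-leaf cost: $2^{\cO(k\log d)} \cdot \cO^*(4^k) \cdot n^{\cO(1)} = 2^{\cO(k \log d)} n^{\cO(1)}$, as claimed. One subtlety to handle carefully is that $\cP$ may contain graphs on more than $d$ vertices, so the actual target class of permissible blocks is $\cP \cap \cB_{2,d}$ together with $K_1$ and $K_2$; I would verify at the outset that this intersection is block-hereditary (immediate from the definition, since a block of an induced subgraph of a graph in $\cB_{2,d}$ is still biconnected with at most $d$ vertices) and that $\Phi_{\cP \cap \cB_{2,d}}$ coincides with the set of graphs in which every block with an edge has at most $d$ vertices and is in $\cP$ — which is precisely what the \BPBVD question asks about.

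The main obstacle is purely bookkeeping rather than conceptual: one must ensure that the obstruction set $(\cB_{2,d} \setminus \cP) \cup \obd$ is exactly the right family to branch on, i.e. that its absence is equivalent to clusterability (given by \cref{boundedclusterlemma}) and that it is ``hitting-set-closed'' in the sense that every member has at least one vertex forced into any solution. The latter holds because each such graph is biconnected with an edge, hence is itself a block; if it survives in $G-S$ it would be a block of $G-S$ violating the size-or-membership requirement, so $S$ must intersect it. With these two facts in place, the branching is justified, and the rest is the routine composition of the three polynomial-time subroutines and the $\cO^*(4^k)$-time \SFVS\ algorithm established above.
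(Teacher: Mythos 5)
Your proposal is correct and follows essentially the same route as the paper: branch on the at most $2d-2$ vertices of an obstruction in $(\cB_{2,d}\setminus\cP)\cup\obd$ found via \cref{finddblockalgo}, and once none remains, use \cref{boundedclusterlemma,findboundedclusters,killcycles1} to finish in $\cO^*(4^k)$ time, giving $2^{\cO(k\log d)}n^{\cO(1)}$ overall. Your extra checks (that $\cP\cap\cB_{2,d}$ is non-degenerate, block-hereditary, and that $\Phi_{\cP\cap\cB_{2,d}}$ matches the problem's target class) are sound and only make explicit what the paper leaves implicit.
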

\begin{proof}
  We describe a branching algorithm for \BPBVD on the instance $(G,d,k)$.
  If $G$ contains an induced subgraph in
  $(\cB_{2,d} \setminus \cP) \cup \obd$, then any solution~$S$ contains at least one vertex of this induced subgraph.
  We first run the algorithm of \cref{finddblockalgo}, and if it outputs such an induced subgraph $J$,
  then we branch on each vertex $v \in V(J)$, recursively applying the algorithm on $(G-v,d,k-1)$. 
  Since $\abs{V(J)} \leq 2d-2$, there are at most $2d-2$ branches.
 If one of these branches has a solution~$S'$, then $S' \cup \{v\}$ is a solution for $G$.
 Otherwise, if every branch returns \NO, we return that $(G,d,k)$ is a \NO-instance.
 On the other hand, if there is no such induced subgraph, then $G$ is $\cld$-clusterable, by Lemma~\ref{boundedclusterlemma},
 and we can find the set of all $\cld$-clusters in polynomial time, by Lemma%~\ref{clusterable-clusters}.
~\ref{findboundedclusters}.
  We can now run the $\cO^*(4^k)$-time algorithm of \cref{killcycles1} and return the result.  
 Thus, an upper bound for the running time is given by the following recurrence:
   $$T(n,k) =
   \begin{cases}
     1 & \text{if $k=0$ or $n=0$,} \\
     4^{k}n^{\cO(1)} & \text{if $((\cB_{2,d} \setminus \cP) \cup \obd)$-free,}\\
     (2d-2)T(n-1,k-1) + n^{\cO(1)} & \text{otherwise.}\\
   \end{cases}$$
  Hence, we have an algorithm that runs in time $\cO^*(2^{\cO(k\log d)})$.
\end{proof}

%%We remark that for the class $\cP$ of chordal or split graphs with at most $d$ vertices, the {$\cP$-Block Vertex Deletion} admits 
%%a fixed parameter tractable algorithm with the same running time in Theorem~\ref{thm:dBGDfpt} and a same lower bound in Proposition~\ref{prop:dblockdeletion-lower-bound}.

\iftoggle{paper}{
  \section{\BPBVD Lower Bounds}
  \subsection{A Tight Lower Bound}\label{sec:lower-bound}
}{
  \subsection{A Tight Lower Bound for \dBGD}\label{sec:lower-bound}
}

\newcommand{\kkC}{\textsc{$k\nobreak\times\nobreak k$~Clique}\xspace}

The Exponential-Time Hypothesis (ETH), formulated by Impagliazzo, Paturi, and Zane [10], implies that $n$-variable \textsc{3-SAT} cannot be solved in time $2^{o(n)}$.
We now argue that the previous algorithm is essentially tight under the ETH. %; that is under the assumption that \textsc{3-Sat} cannot be solved in subexponential time.

The \kkC problem takes as input an integer $k$ and a graph on $k^2$ vertices, each vertex corresponding to a distinct point of a $k$ by $k$ grid, and asks for a clique of size $k$ hitting each column of the grid exactly once.  
%%Move into proof:
%The edges between vertices of the same column cannot be involved in such a clique.
%Therefore, we may assume that each column induces an independent set.
%Finding a $k$-clique hitting each column exactly once is now equivalent to finding a $k$-clique at all. 
Unless the ETH fails, \kkC is not solvable in time $2^{o(k \log k)}$ \cite{LokshtanovMS11}.
However, solving \BPBVD in $2^{o(k \log d)}$ time, where $\cP$ contains all biconnected split graphs, implies that \kkC can be solved in $2^{o(k \log k)}$ time.
\iftoggle{paper}{}{The proof is given in the appendix.}

\lbtheorem*

\appendixproof{\cref{prop:dblockdeletion-lower-bound}}
{
\begin{proof}%[Proof of Proposition~\ref{prop:dblockdeletion-lower-bound}]
In \cite{Drange2014}, the authors show that \textsc{Component Order Connectivity} cannot be solved in $2^{o(k \log d)}$ time unless the ETH fails.
We adapt their reduction from \kkC. % which is based on the split incidence graph.
We recall that a split graph is a graph whose vertex set can be partitioned into two sets, one inducing a clique and the other inducing an independent set.
Let $(G,k)$ be an instance of \kkC.
Since the edges between vertices in the same column cannot be involved in a solution, we may assume that each column induces an independent set.
Then $(G,k)$ is a \YES-instance if and only if $G$ has a $k$-clique.
We build an instance $(G',d,k')$ of \BPBVD where $V(G')=Q \cup I$ for $Q=V(G) \cup E_1$ and $I=E_2$, where $E_1$ and $E_2$ are two copies of $E(G)$.
For each edge $e \in E(G)$, we denote by $e^1$ (resp.~$e^2$) the corresponding vertex in $E_1$ (resp.~in $E_2$).  
The set $Q$ induces a clique while $I$ induces an independent set.
For each edge $e=uv \in E(G)$, we add three edges $ue^2$, $ve^2$ and $e^1e^2$ in $G'$, each between a vertex in $Q$ and a vertex in $I$.
This ends the construction of $G'$.
Observe that $G'$ is a split graph and the vertices in $I$ all have degree~$3$.
We set $k':=k$ and $d:=|V(G')|-k-{k \choose 2}$.
Note that $|V(G')| \geqslant |V(G)| = k^2$, so $d \geqslant k^2-k-{k \choose 2}={k \choose 2}$.
Without loss of generality we may assume that $k \geqslant 3$, and hence $d \geqslant 2$. 

Assume that $G$ admits a $k$-clique $S=\{v_1,\ldots,v_k\}$, and denote $v_iv_j$ by $e_{ij}$.
We claim that $S$ is a solution for \BPBVD on the instance $(G',d,k')$.
Indeed, for each of the ${k \choose 2}$ pairs $(i,j)$ with $i<j \in [k]$, the vertex $e_{ij}^2$ has degree~$1$ in $G' - S$, so its unique neighbor, $e_{ij}^1$, is a cut vertex. Hence, $e_{ij}^2$ is not in the block containing the clique $Q \setminus S$. %, which we call the \emph{main} block.
Therefore, the blocks of $G' - S$ have %size $2$ or
at most %(if the k-clique is part of a larger clique, the block might be smaller)
$|V(G')|-k-{k \choose 2}$
vertices.

We now assume that there is a set $S \subseteq V(G')$ of at most $k$ vertices such that all the blocks of $G'-S$ have at most $d$ vertices.
%Since $G'$ is a split graph, where all the vertices of the independent set have degree $3$, it is itself a too large block.
We call the \emph{main} block the one containing $Q \setminus S$.
The only vertices of $G' - S$ that are not in the main block are the vertices $X$ of $I$ with degree at most~$1$ in $G' - S$.
Since the main block has at most $d$ vertices,
%For $S$ to be a solution,
there are at least ${k \choose 2}$ vertices in $X$.
This implies that $X$ corresponds to the ${k \choose 2}$ edges of a $k$-clique in $G$. 

Since $|V(G)|=k^2$, we have that $|E(G)|=\cO(k^4)$.
Thus $d=\cO(k^4)$ and $\log d=\cO(\log k)$.
Therefore, solving \BPBVD in $2^{o(k \log d)}$ time would also solve \kkC in time $2^{o(k \log k)}$, contradicting the ETH.  
\end{proof}
}

\iftoggle{paper}{
  \subsection{$W[1]$-hardness Parameterized Only by~$k$}\label{sec:w1hardness}

We now prove that \BPBVD is $W[1]$-hard when parameterized only by $k$, if $\cP$ is a class such that $\classp$ contains all split graphs.  In particular, this implies that \dBGD is $W[1]$-hard when parameterized only by $k$.
The reduction is similar to that in \cref{sec:lower-bound}, but the reduction is from \textsc{Clique}, rather than \kkC.

\wonehardness*
\begin{proof}
  Consider an instance $(G,k)$ of the problem \textsc{Clique} where, given a graph $G$ and integer $k$, the question is whether $G$ has a $k$-clique.
  Observe that we can perform the same reduction given in the proof of \cref{prop:dblockdeletion-lower-bound} but from \textsc{Clique}, rather than \kkC.
  By doing so, %given an instance $(G,k)$ of $\textsc{Clique}$,
  we build an instance $(G', d, k)$ of \BPBVD where $G'$ is a split graph, and for which $S$ is a solution for %\BPBVD
  the instance $(G', d, k)$
  if and only if $S$ is a $k$-clique of $G$.
  Since the reduction is parameter preserving,
  the result then follows from the fact that \textsc{Clique} is $W[1]$-hard when parameterized by the size of the solution~\cite{DowneyF13}.
\end{proof}
 
}{}

\section{$\mathcal{O}^*(c^k)$-time Algorithms %for \dCBGD and \dKBGD
Using Iterative Compression}\label{sec:single-exponential}

We now consider the specializations of \BPBVD that we refer to as \dKBGD and \dCBGD.
These problems are ``bounded'' variants of \BGVD and \DHS, respectively,
which
%The latter problems
are known to admit $c^kn^{\cO(1)}$ fixed-parameter tractable algorithms for some constant $c$.
%In particular, the authors in~\cite{Agrawal}  solve the \BGVD\ problem using a kind of clustering procedure, and then reduce to \textsc{Weighted Feedback Vertex Set}.
%It turns out that \dKBGD is hard to follow the same procedure, as we have one more constraint on the size of each block.
%Instead, we can reduce to \SFVS, but then we have a running time $2^{\mathcal{O}(k\log k)}n^{\mathcal{O}(1)}$, which is not single-exponential.
By \cref{thm:dBGDfpt}, \dKBGD and \dCBGD can be solved in $\cO^*(2^{\cO(k \log d)})$ time.
However, the next two theorems show that these problems are in fact FPT parameterized only by $k$, and, like their ``unbounded'' variants, each has a $c^kn^{\cO(1)}$-time algorithm.
%
%We show that \dCBGD and \dKBGD can be solved in time $c^k n^{\mathcal{O}(1)}$ using the well-known technique, called \emph{iterative compression}~\cite{Reed04}.
The proofs of these results use the well-known technique of iterative compression~\cite{Reed04}%
\iftoggle{paper}{, which we now briefly recap.

  %We use the iterative compression technique of Reed et al.~\cite{Reed04} and borrow ideas from known algorithms solving \FVS in time $2^{\cO(k)}$ (see, for instance, the book of Cygan et al.~\cite[Section~4.3]{Cygan15}).
In a nutshell, the idea of iterative compression is to try and build a solution of size $k$ given a solution of size $k+1$.
It is typically used for graph problems where one wants to remove a set~$S$ of at most~$k$ vertices such that the resulting graph satisfies some property or belongs to some class.
We call such a set~$S$ a \emph{solution} or a \emph{deletion set}.
Say $S$ is a solution of size $k+1$ for a problem~$\Pi$ on a graph~$G$ that we want to compress into a solution~$R$ of size at most~$k$.
We can try out all $2^{k+1}$ possible intersections of old and new solutions $I = S \cap R$.
In each case, we remove $I$ from $G$ and look for a solution of size at most $k-|I|$ that does not intersect $S \setminus I$.
We call \textsc{Disjoint}~$\Pi$ this new problem of finding a solution of size at most~$k$ that does not intersect a given deletion set~$S$ of size up to $k+1$.
If we can solve \textsc{Disjoint}~$\Pi$ in time $\cO^*(c^k)$, then the running time of this approach to solve $\Pi$ is $\cO^*(\Sigma_{i=0}^{k+1}{k+1 \choose i}c^{k-i})=\cO^*((c+1)^k)$. 
We can start with a subgraph of $G$ induced by any set of $k+1$ vertices.
Those $k+1$ vertices constitute a trivial deletion set.
After one compression step, we obtain a solution of size $k$.
Then, a new vertex is added to the graph and immediately added to the deletion set.
We compress again, and so on.
After a linear number of compressions, we have added all the vertices of $G$, so we have a solution for $G$.
For more about iterative compression, we refer the reader to Cygan et al.~\cite{Cygan15}, or Downey and Fellows~\cite{DowneyF13}. %

\subsection{\dKBGD}
}{, and are given in the appendix.

\thmdKBGDiterative*}

\appendixproof{\cref{prop:dKBGDiterative}}
{
\thmdKBGDiterative*
\begin{proof}
\iftoggle{paper}{}{%
  We use the iterative compression technique of Reed et al.~\cite{Reed04} and borrow ideas from known algorithms solving \FVS in time $2^{\cO(k)}$ (see, for instance, the book of Cygan et al.~\cite[Section~4.3]{Cygan15}).
In a nutshell, the idea of iterative compression is to try and build a solution of size $k$ given a solution of size $k+1$.
It typically works for graph problems where one wants to remove a set~$S$ of at most $k$ vertices such that the resulting graph satisfies some property or belongs to some class.
We call the set $S$ of vertices to remove the \emph{solution} or the \emph{deletion set}.
Say $S$ is a solution of size $k+1$ for a problem $\Pi$ that we want to compress into a solution $R$ of size at most $k$.
One can try out the $2^{k+1}$ potential intersections of the old and new solutions $I = S \cap R$.
In each case, one removes $I$ from the graph and looks for a solution of size at most $k-|I|$ that does not intersect $S \setminus I$.
We call \textsc{Disjoint}~$\Pi$ this new problem of finding a deletion set of size at most $k$ that does not intersect a given subset of size up to $k+1$.
If one solves \textsc{Disjoint}~$\Pi$ in time $\cO^*(c^k)$, then the running time of this approach to solve $\Pi$ would be $\cO^*(\Sigma_{i=0}^{k+1}{k+1 \choose i}c^{k-i})=\cO^*((c+1)^k)$. 
One can start with a subgraph induced by any $k+1$ vertices.
Those $k+1$ vertices constitute a trivial deletion set.
After one compression step, one gets a solution of size $k$.
Then, a new vertex is added to the graph and immediately added to the deletion set.
We compress again, and so on.
After a linear number of compressions, we have added all the vertices to the graph, and what we obtain is a solution.
One can learn more about iterative compression in these two books~\cite{Cygan15,DowneyF13}. 
}

It is sufficient to solve \textsc{Disjoint} \dKBGD in time $\cO^*(9^k)$.
Let $(G,S,d,k)$ be an instance where $S$ is a deletion set of size $k+1$.
We present an algorithm that either finds a solution $R$ of size at most $k$ not intersecting $S$, or establishes that there is no such solution.
For convenience, $G$, $S$, and $R$ are not fixed objects; they represent, respectively, the remaining graph, the set of vertices that we cannot delete, and the solution that is being built, throughout the execution of the algorithm.
Initially, $R$ is empty.
For an instance $I=(G,S,d,k)$, we take as a measure $\mu(I)=k+\cc(S)$, where %$k$ is the maximum number of vertices that can still be removed from $G$ and
$\cc(S)$ is the number of connected components of $G[S]$.
Thus, $\mu(I) \leqslant 2k+1$.
%There is no priority on which of these two rules we want to apply first.
We say a graph $G$ is a \emph{$d$-complete block graph} if every block of $G$ is a clique of size at most $d$.
We present two reduction rules and three branching rules that we apply while possible.
\begin{RULE}\label{co:deg0-or-1}
If there is a vertex $u \in V(G) \setminus S$ with degree at most $1$ in $G$, then we remove $u$ from $G$. 
\end{RULE}
The soundness of this rule is straightforward.
\begin{RULE}\label{co:direct-obstruction}
If there is a vertex $u \in V(G) \setminus S$ such that $G[S \cup \{u\}]$ is not a $d$-complete block graph, then remove $v$ from $G$, put $v$ in $R$, and decrease $k$ by $1$.
\end{RULE}
This reduction rule is safe since any induced subgraph of a $d$-complete block graph is itself a $d$-complete block graph.
Here, an \emph{obstruction} is a $2$-connected induced subgraph that is not a clique of size at most $d$.
At least one vertex of any obstruction should be in a solution.
We can restate the rule as follows:
if a vertex $u \in V(G) \setminus S$ forms an obstruction with vertices of $S$, then $u$ is in any solution.
We also observe that if a graph contains no obstruction, then it is a $d$-complete block graph.

\begin{BRANCHING}\label{co:two-obstruction}
If there are distinct vertices $u$ and $v$ in $G-S$ such that $G[S \cup \{u,v\}]$ is not a $d$-complete block graph, then branch on either removing $u$ from $G$, putting $u$ in $R$, and decreasing $k$ by $1$; or removing $v$ from $G$, putting $v$ in $R$, and decreasing $k$ by $1$.
\end{BRANCHING}
This branching rule is exhaustive since at least one of $u$ and $v$ has to be in $R$, %as mentioned in the previous paragraph.
as $G[S \cup \{u,v\}]$ contains an obstruction.
In both subinstances $\mu(I)$ is decreased by $1$,
so the %running time associated with this branching rule is $\cO^*(2^{\mu(I)})=\cO^*(4^k)$.
associated branching vector for this rule is $(1,1)$.

\begin{BRANCHING}\label{co:remove-or-connect}
If there is a vertex $u \in V(G) \setminus S$ having two neighbors $v, w \in S$ such that $v$ and $w$ are in distinct connected components of $G[S]$, then branch on %vertex $u$.
either removing $u$ from $G$, putting $u$ in $R$, and decreasing $k$ by $1$; or adding $u$ to $S$.
\end{BRANCHING}
%
%When, for some vertex $u$, we branch on either removing $u$ from $G$, putting it in $R$, and decreasing $k$ by $1$, or adding $u$ to $S$, we sometimes refer to it as \emph{branching on $u$}.
%
If $u$ is added to $S$, %(the vertices that cannot be in $R$),
then the number of connected components in $G[S]$ decreases by at least~$1$.
This branching rule is exhaustive and in both cases $\mu(I)$ is decreased by at least~$1$, %resulting in a running time of $\cO^*(2^{\mu(I)})=\cO^*(4^k)$.
so the associated branching vector is $(1,1)$.

\begin{BRANCHING}\label{co:remove-or-connect2}
Suppose there is an edge $uv$ of $G-S$ such that $u$ has a neighbor $u' \in S$ and $v$ has a neighbor $v' \in S$, and $u'$ and $v'$ are in distinct connected components of $G[S]$.
We branch on three subinstances: 
\begin{enumerate}[(a)]
  \item remove $u$ from the graph, put it in $R$, and decrease $k$ by $1$,
  \item remove $v$ from the graph, put it in $R$, and decrease $k$ by $1$, or 
  \item put both $u$ and $v$ in the set $S$. 
\end{enumerate}
\end{BRANCHING}
Again, this branching rule is exhaustive: either $u$ or $v$ is in the solution~$R$, or they can both be safely put in $S$.
In branch (c), the number of connected components of $G[S]$ decreases by at least~$1$.
Therefore, %in all three cases the measure $\mu(I)$ decreases by $1$, giving a running time of $\cO^*(3^{\mu(I)})=\cO^*(9^k)$.
the associated branching vector for this rule is $(1,1,1)$.

Applying the two reduction rules and the three branching rules presented above preserves the property that $G-S$ is a $d$-complete block graph.
The algorithm first applies these rules exhaustively (see \cref{alg:disjoint-BCBVD}), so we now assume that we can no longer apply these rules. %\cref{co:deg0-or-1,co:direct-obstruction} nor \cref{co:two-obstruction,co:remove-or-connect,co:remove-or-connect2}. 

Let $x$ be a vertex $V(G) \setminus S$ and consider its neighborhood in $S$. %, that is $N_{G[S \cup \{v\}]}(v)$.
We claim that this neighborhood is either empty, a single vertex, or all the vertices of some block of $G[S]$.
Suppose $x$ has at least two neighbors $y$ and $z$ in $S$.  Then, since \cref{co:remove-or-connect} cannot be applied, $y$ and $z$ are in the same connected component of $G[S]$.
Now, if no block of $G[S]$ contains both $y$ and $z$, then $x$ forms an obstruction with vertices in $S$, contradicting the fact that \cref{co:direct-obstruction} cannot be applied.
It follows that the vertices of the block containing $y$ and $z$, together with $x$, form a clique.  Moreover, $x$ has no other neighbors. This proves the claim.

Let $C$ be the vertex set of a leaf block of $G-S$.
We know that $C$ is a clique of size at most~$d$, and the block $G[C]$ of $G-S$ has at most one cut vertex.
If the block $G[C]$ of $G-S$ has a cut vertex $v$, let $C' := C \setminus \{v\}$; otherwise, let $C' := C$.
We use this notation in all the remaining reduction and branching rules.
%Consider the vertex set $C$ of a leaf block of %the block decomposition of
%$G-S$.
%We know that $C$ is a clique of size at most~$d$.
%Let $v$ be the unique cut vertex contained in $C$ if it exists.
%In all the remaining reduction and branching rules, if $v$ does not exist (that is, $G[C]$ is the only block of its connected component in $G-S$), replace $\{v\}$ by $\emptyset$.
The next three rules handle the case where at most one vertex in $C$ has neighbors in $S$.
\begin{RULE}\label{co:all-deg0}
If none of the vertices in $C'$ have neighbors in $S$, then remove $C'$ from $G$.
\end{RULE}
If the block $G[C]$ of $G-S$ does not have a cut vertex, then $C'$ is a connected component of $G$, so we obtain an equivalent instance after removing $C'$ from $G$.
If the block $G[C]$ of $G-S$ has a cut vertex $v$, then either $v$ is in the solution $R$, and $C'$ is a clique of size at most $d$ that is a connected component of $G-R$; or $v$ is in the $d$-complete block graph $G-R$, and $G[C]$ is a leaf block of this graph.
In either case, no vertex in $C'$ can be in an obstruction.
Each vertex not in any obstruction can be %ignored and
removed from $G$ without changing the value of %parameter
$k$.

The soundness of the next rule follows from a similar argument.

\begin{RULE}\label{co:unique-red-nocut}
  If the block $G[C]$ of $G-S$ does not have a cut vertex, $w \in C$ has at least one neighbor in $S$, and each vertex in $C \setminus \{w\}$ has no neighbor in $S$, then put $C$ in $S$.
\end{RULE}

\begin{RULE}\label{co:unique-red}
  If the block $G[C]$ of $G-S$ has a cut vertex, $w \in C'$ has at least one neighbor in $S$, and each vertex in $C \setminus \{w\}$ has no neighbor in $S$, then put $w$ in $S$.
\end{RULE}
In order to show that this rule is sound, we now prove that if $R$ is a solution containing $w$, then $(R \setminus \{w\}) \cup \{v\}$ is also a solution.
%
%At this point, \cref{co:direct-obstruction,co:remove-or-connect} cannot be applied.
%After applying this rule, $w$ is either a leaf of $G[S]$ (if it had degree $1$ in $S$) or it is not a cut vertex of a block (if its neighborhood in $S$ was an entire block).
%If $G[C]$ does not have a cut vertex, the entire set $C'$ can be put in $S$ rather than just $w$.
%Since \cref{co:two-obstruction} does not apply, %if $v$ exists,
%then any solution~$R$ containing $w$ can be transformed into another solution $R \cup \{v\} \setminus \{w\}$ (that does not contain $w$).  
%
Since \cref{co:direct-obstruction} cannot be applied, $w$ together with its neighborhood in $S$ forms a maximal clique in $G$, and this clique consists of at most $d$ vertices.
Suppose $R \setminus \{w\}$ is not a solution.  Then $G-(R \setminus \{w\})$ contains some obstruction, and any such obstruction contains $w$.  Since $w$ is not contained in a clique of size more than~$d$ in $G$, every minimal obstruction is either a diamond or an induced cycle.  Since \cref{co:two-obstruction} cannot be applied, $w$ is not contained in an induced diamond subgraph of $G$.
But every induced cycle containing $w$ and not contained in $S \cup \{w\}$ must contain $v$, which proves the claim.

Now, we consider the case where at least two vertices in $C$ each have at least one neighbor in $S$.
Let $x$ and $y$ be two such vertices.
We claim that $x$ and $y$ have the same neighborhood in $S$.
Since $C$ is a clique, $x$ and $y$ are adjacent.
As \cref{co:remove-or-connect2} does not apply, the neighbors of $x$ and $y$ in $S$ are all in the same component of $G[S]$.  
%Suppose the neighbors of $x$ and $y$ in $S$ are not all contained in some block of $G[S]$; then $x$ and $y$ together with vertices of $S$ form an obstruction, so \cref{co:two-obstruction} can be applied; a contradiction.
%We may now assume all the neighbors of $x$ and $y$ in $S$ are contained in some block $B$ of $G[S]$. If the neighborhoods of $x$ and $y$ differ, then $G[V(B) \cup \{x,y\}]$ is an obstruction, so, again, \cref{co:two-obstruction} can be applied; a contradiction.
Now, if the neighborhoods of $x$ and $y$ differ, then $G[S \cup \{x,y\}]$ contains an obstruction, so \cref{co:two-obstruction} can be applied; a contradiction.
This proves the claim.

Thus, $C$ can be partitioned into $C_1 \cup C_2$ where the vertices of $C_1$ all share the same non-empty neighborhood in $S$,
%which is either a single vertex or all the vertices of some block of $G[S]$,
while the vertices of $C_2$ have no neighbor in $S$. 
The previous three reduction rules handled the case where $|C_1| \le 1$. 
We now handle the case where $|C_1| \geqslant 2$.

\begin{RULE}\label{co:two-red-empty-c2}
If all the vertices of $C$ have the same non-empty neighborhood $A$ in $S$ (that is, $C_2=\emptyset$), then
remove any $s:=\max\{0,\abs{C'}-d+\abs{A}\}$ vertices of $C'$ from $G$, put them in $R$, decrease $k$ by $s$,
and put the remaining vertices of $C'$ in $S$.
\end{RULE}
Firstly, note that if $C$ contains a cut vertex $v$ and $|C \cup A| \leq d$, it is always better to have $v$ in $R$ rather than a vertex of $C'$. So, in this case, we can safely add the vertices of $C'$ to $S$, and, after doing so, $G[S]$ is still a $d$-complete block graph.
However, when $|C \cup A| > d$, we have to put some vertices of $C'$ in $R$, since $C \cup A$ is a clique.
As these vertices are twins (that is, they have the same closed neighborhood), it does not matter which $s$ vertices of $C'$ we choose.

In the final case, where $|C_1| \geqslant 2$ and $|C_2| \geqslant 1$, we use the following branching rule:
\begin{BRANCHING}\label{co:two-red}
Suppose there are at least two vertices of $C$ having the same non-empty neighborhood in $S$ (that is, $|C_1| \geqslant 2$) and at least one vertex of $S$ having no neighbor in $S$ (that is, $|C_2| \geqslant 1$).
We branch on two subinstances:
\begin{enumerate}[(a)]
  \item remove all the vertices of $C_2$ from $G$, put them in $R$, and decrease $k$ by $|C_2|$, or
  \item choose any vertex $x \in C_1 \cap C'$, then remove all the vertices of $C_1 \setminus \{x\}$, put them in $R$, and decrease $k$ by $|C_1|-1$.
\end{enumerate}
\end{BRANCHING}
We now argue that this branching rule is sound.
Let $x$ and $y$ be distinct vertices in $C_1$, and let $z$ be in $C_2$.
Then, for
%since any two vertices $x, y$ from $C_1$ plus any vertex $z$ from $C_2$ form an obstruction with $S$.  Indeed,
any common neighbor $t \in S$ of $x$ and $y$, the set $\{t,x,y,z\}$ induces a diamond in $G$, which is an obstruction. 
In order to eliminate all such obstructions, we must remove vertices from $G$ so that either $C_2$ is empty, as in (a), or $\abs{C_1} \le 1$, as in (b).
In case (b), we can safely pick $x \in C'$ as the vertex not added to $R$ since it is always preferable to add a cut vertex of $G-S$ in $C$ to $R$, rather than a vertex in $C'$.
In either subinstance, the measure is decreased by at least $1$, so the associated %running time is $\cO^*(2^{\mu(I)})=\cO^*(4^k)$.
branching vector is $(1,1)$.

\begin{algorithm}%[htp]
  \caption{\textsc{Disjoint-BoundedCompleteBlockVD($G,S,d,k$)}}
  \label{alg:disjoint-BCBVD}
\begin{algorithmic}[1]
\Statex \textbf{Input:} A graph $G$, a subset of vertices $S$, and two integers $d$ and $k$.
\Statex \textbf{Output:} A set $R$ of size at most $k$ such that $R \cap S = \emptyset $ and $G-R$ is a $d$-complete block graph.
\State Set $R := \emptyset$.
\While {$k \geqslant 0$ and $G-S$ is non-empty}
\If {\cref{co:deg0-or-1,co:direct-obstruction} or one of \cref{co:two-obstruction,co:remove-or-connect,co:remove-or-connect2} can be applied}
\State Apply this rule
\Else 
\State Apply one of \cref{co:all-deg0,co:unique-red-nocut,co:unique-red,co:two-red-empty-c2}, or \cref{co:two-red}
\EndIf
\EndWhile
\If {$k \geqslant 0$}
\State \Return $R$.
\EndIf     
\end{algorithmic}
\end{algorithm}

Once we apply one rule among \cref{co:all-deg0,co:unique-red-nocut,co:unique-red,co:two-red-empty-c2,co:two-red}, we check whether or not the first set of rules %(\cref{co:direct-obstruction,co:two-obstruction,co:remove-or-connect,co:remove-or-connect2})
can be applied again (see \cref{alg:disjoint-BCBVD}).

The algorithm ends when $G-S$ is empty, or if $k$ becomes negative, in which case there is no solution at this node of the branching tree.
Indeed, while $G-S$ has at least one vertex, there is always some rule to apply.
%At this point, $G[S]$ is still a $d$-complete block graph and we output the set of all the vertices that we have put in the solution along the way.
When $G-S$ is empty, $R$ is a solution for $(G,S,d,k)$.
Each reduction rule can only be applied a linear number of times since they all remove at least one vertex from $G$.
Thus, the overall running time is bounded above by the slowest branching,
namely the one with the branching vector $(1,1,1)$,
for which the running time is $\cO^*(3^{\mu(I)}) = \cO^*(9^k)$.
\end{proof}
}

\iftoggle{paper}{\subsection{\dCBGD}}{%
\thmdCBGDiterative*
}

\appendixproof{\cref{prop:dCBGDiterative}}
{

\thmdCBGDiterative*

\begin{proof}
As in the proof of \cref{prop:dKBGDiterative}, it suffices to solve \textsc{Disjoint} \dCBGD in time $\cO^*(25^k)$.

For a positive integer $d$, we say a \emph{$d$-cactus} is a graph where each block containing at least two vertices is either a cycle of length at most $d$ or an edge ($d$-cactus graphs are a subclass of cactus graphs).
%Again, the input graph is $G$ and the deletion set of size $k+1$ is $S$.
Similarly to \textsc{Disjoint} \dKBGD, we take as a measure $\mu(I)=k+\cc(S)$ for an instance $I=(G,S,d,k)$, we denote by $R$ the solution that we build, and an \emph{obstruction} is a $2$-connected induced subgraph that is not a cycle of size at most $d$. % nor an edge.
As all the vertices of $S$ will be in the graph $G-R$, there can only be a solution to \textsc{Disjoint} \dCBGD if $G[S]$ is a $d$-cactus.
Indeed, observe that any induced subgraph of a $d$-cactus is itself a $d$-cactus.
We now assume that $G[S]$ is a $d$-cactus.
As $S$ is a solution, $G - S$ is also a $d$-cactus.
We will preserve the property that the blocks of both $G[S]$ and $G[V(G) \setminus S]$ are either cycles of length at most $d$, or consist of a single edge.

%Again, we keep implicit that we remove from the graph $G$ the vertices of degree at most $1$.
We begin by applying the following four rules while possible.
The first two of these are \cref{co:deg0-or-1,co:remove-or-connect}. 
Recall that this latter branching rule is exhaustive, %it is not problem-dependent,
and its associated branching vector is $(1,1)$ for the measure~$\mu$. 

\begin{RULE}\label{cy:direct-obstruction}
If there is a vertex $u \in V(G) \setminus S$ such that $G[S \cup \{u\}]$ is not a $d$-cactus, then remove $u$ from the graph, put $u$ in $R$, and decrease $k$ by $1$.
\end{RULE}
Again, this rule is sound since any induced subgraph of a $d$-cactus is a $d$-cactus.

We define a \emph{\red} vertex as a vertex of $V(G) \setminus S$ that has at least one neighbor in $S$.
We say that two distinct \red vertices are \emph{consecutive} \red vertices if they are both contained in some block of $G-S$ and there is a path between them in which all the internal vertices have degree~$2$ in $G$.
Let $a$ and $b$ be vertices 
that are either \red, or of degree at least~$3$, or in $S$, with the additional constraint that $a$ and $b$ are not both in $S$.
A \emph{chain from $a$ to $b$}, or simply a \emph{chain}, is the set of internal vertices, all of degree~$2$, of a path between $a$ and $b$.
%where $a \notin S$ or $b \notin S$.

\begin{BRANCHING}\label{cy:remove-or-connect2}
Suppose $v$ and $w$ are consecutive \red vertices in a block of $G-S$, where $s \in S$ is a neighbor of $v$, the vertex $t \in S$ is a neighbor of $w$, and $s$ and $t$ are in distinct connected components of $G[S]$. 
Then, either
\begin{enumerate}[(a)]
  \item remove $v$ from $G$, put $v$ in $R$, and decrease $k$ by $1$; or
  \item remove $w$ from $G$, put $w$ in $R$, and decrease $k$ by $1$; or
  \item put $\{v,w\} \cup P$ in $S$, where $P$ is a chain from $v$ to $w$.
\end{enumerate}
\end{BRANCHING}
This branching rule is safe since for any solution~$R'$ that does not contain $v$ nor $w$ but contains a vertex $z$ of the chain $P$, the set $(R' \setminus \{z\}) \cup \{v\}$ is also a solution.
As we use this observation several times, we state it as a lemma.
\begin{lemma}\label{lem:chain}
If there is a solution, then there is one that does not contain any vertex of a chain.
\end{lemma}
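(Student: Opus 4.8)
The statement to prove is \cref{lem:chain}: if there is a solution to \textsc{Disjoint} \dCBGD, then there is one that avoids every chain vertex (where a chain is a maximal run of degree-$2$ vertices on a path between two vertices that are \red, of degree $\ge 3$, or in $S$, not both endpoints in $S$).

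\begin{proof}
Let $R$ be a solution to \textsc{Disjoint} \dCBGD on the instance $(G,S,d,k)$, chosen so that it intersects as few chains as possible, and, among all such solutions, so that $\abs{R}$ is minimum. If $R$ meets no chain, we are done, so suppose $z \in R$ lies on a chain from $a$ to $b$. By definition, all internal vertices of the relevant path have degree~$2$ in $G$, and $a,b$ are not both in $S$; without loss of generality, $a \notin S$, and moreover $a$ is \red, of degree $\ge 3$, or equal to $z$'s neighbour that is not on the chain --- in any case $a \in V(G) \setminus S$ or $a$ lies on the chain path's endpoint side. We will show $R' := (R \setminus \{z\}) \cup \{a\} \setminus \{?\}$ --- more precisely, I claim $R'' := (R \setminus \{z\}) \cup \{a\}$ is also a solution, and that $R''$ meets strictly fewer chains (or the same number but fewer in a tie-break sense) than $R$, contradicting the choice of $R$. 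Actually, to make the swap clean I would argue: if $a \in R$ already, then $R \setminus \{z\}$ is already a solution, contradicting minimality of $\abs R$ unless removing $z$ creates an obstruction --- so I must show $R \setminus \{z\}$ still works or perform the substitution by $a$.

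The key structural observation is this: since $z$ has degree~$2$ in $G$, say with neighbours $p$ and $q$ on the chain path, $z$ cannot be a cut vertex of $G - (R \setminus \{z\})$ in a way that matters, and more importantly, $z$ lies on a cycle of $G - (R\setminus\{z\})$ only if that entire chain path (together with a connecting path through $a$ and $b$) forms part of a cycle. I would show that any obstruction (a $2$-connected induced subgraph that is not a cycle of length $\le d$) in $G - (R \setminus \{z\})$ that contains $z$ must in fact contain the whole chain, hence contain $a$ (the chain-endpoint on the non-$S$ side), because every internal chain vertex has degree~$2$ and so a $2$-connected subgraph passing through $z$ is forced along the chain until it reaches a vertex of degree $\ge 3$, a \red vertex, or $S$. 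Therefore deleting $a$ instead of $z$ destroys every obstruction that $z$ was destroying: $R'' = (R \setminus \{z\}) \cup \{a\}$ hits every obstruction that $R$ hit. One must also check that $\abs{R''} \le k$ (immediate, $\abs{R''} \le \abs R \le k$) and that $R''$ does not intersect $S$ (true since $a \notin S$ in the case we reduced to; the symmetric case with $b \notin S$ is identical).

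Finally, I would verify that $R''$ meets fewer chains than $R$, or reach the contradiction more directly. The point is that $a$ is, by the definition of a chain's endpoint, \emph{not} an internal degree-$2$ vertex of any chain --- it is \red, or has degree $\ge 3$, or is in $S$ (excluded here). Hence $a$ belongs to no chain, so $R''$ does not meet the chain containing $z$, and $R'' $ meets no chain that $R$ did not already meet, except possibly... no: replacing $z$ by $a$ removes one chain-hit and adds none. This contradicts the minimality of the number of chains met by $R$. I expect the main obstacle to be the careful case analysis pinning down exactly what $a$ and $b$ can be (\red, high-degree, or in $S$) and confirming that in every case one of the two chain endpoints is a legitimate substitute lying outside $S$ and outside all chains; the degree-$2$ forcing argument for obstructions is the conceptual heart but is routine once set up, since both cycles and diamonds/theta-like obstructions through a degree-$2$ vertex must traverse the full chain.
\end{proof}
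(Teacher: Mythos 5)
Your argument is essentially the paper's own proof (which is a one-line replacement: swap a chain vertex for the at-least-one endpoint of the chain lying outside $S$), elaborated with the correct degree-$2$ forcing observation that any obstruction through an internal chain vertex must traverse the whole chain and hence contain that endpoint. One small bookkeeping fix: your extremal measure (fewest chains met, tie-broken by $\abs{R}$) does not strictly decrease when a chain contains two solution vertices, so instead minimize the number of chain vertices in $R$ (or simply iterate the replacement), after which the argument goes through.
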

\begin{proof}
In any solution, we may replace a vertex in a chain $P$ by the (at least) one vertex not in $S$ among the two vertices in the open neighborhood of $P$.
\end{proof}

The branching vector of \cref{cy:remove-or-connect2} is $(1,1,1)$ since in the first two cases $k$ decreases by $1$, and in the third $\cc(S)$ decreases by $1$.

Whenever these four rules cannot be applied, we claim that:
\begin{enumerate}[(1)]
  \item every vertex of $V(G) \setminus S$ has degree at most $2$ in $S$, and
  \item the neighborhood in $S$ of the set of \red vertices in a block is contained in some connected component of $G[S]$.
\end{enumerate}
Indeed, since \cref{co:remove-or-connect} is not applicable, a vertex $w$ of $V(G) \setminus S$ has neighbors in at most one connected component in $G[S]$.
Now, suppose $w$ has at least three neighbors $a$, $b$, and $c$ in the same connected component. Then $waP_1bw$ and $wbP_2cw$, where $P_1$ is an $ab$-path in $G[S]$ and $P_2$ is a $bc$-path in $G[S]$, are distinct cycles that intersect on at least $w$ and $b$. 
Hence, \cref{cy:direct-obstruction} applies for $w$; a contradiction.
%Also, if a vertex $w$ has two neighbors $a$ and $b$ in $S$, then there should be a unique path from $a$ to $b$ in $G[S]$.
%Otherwise, \cref{cy:direct-obstruction} applies for $w$.
Finally, we see that (2) holds because otherwise \cref{cy:remove-or-connect2} would apply.

Now, we assume that the first four rules do not apply (see \cref{alg:disjoint-BCGVD}).

\begin{BRANCHING}\label{cy:3red}
  Let $u$, $v$, and $w$ be distinct vertices of a block of $G-S$.
If $u$ and $v$ are consecutive \red vertices, and $v$ and $w$ are consecutive \red vertices, then branch on putting either $u$, or $v$, or $w$ into the solution~$R$.
In each case, remove the vertex from $G$ and decrease $k$ by $1$.
\end{BRANCHING}
By (2), 
%Since \cref{cy:remove-or-connect2} does not apply,
the neighbors in $S$ of $u$, $v$ and $w$ are in the same connected component of $G[S]$. 
Therefore, a set consisting of $u$, $v$, $w$, the two chains from $u$ and $v$ and from $v$ and $w$, and $S$ induces a graph that contains a subdivision of a diamond, hence is an obstruction.
By \cref{lem:chain}, we conclude that branching on the three \red vertices $u$, $v$, and $w$ is safe.
The branching vector is again $(1,1,1)$.

We now deal with leaf blocks in $G-S$ that consist of a single edge.
We call such a block a \emph{leaf edge}.
\begin{RULE}\label{cy:leaf-edge1}
Suppose $uv$ is a leaf edge in $G-S$ where
$u$ is a \red vertex of degree~$1$ in $S$, and
$v$ is either a cut vertex of $G-S$ or a \red vertex of degree~$1$ in $S$.
Then, put $u$ in $S$.
\end{RULE}
If $v$ is a cut vertex, then this rule is safe by \cref{lem:chain} since $\{u\}$ is a chain.
Otherwise, for any solution~$R'$ containing $u$, the set $(R' \setminus \{u\}) \cup \{v\}$ is also a solution.

\begin{BRANCHING}\label{cy:leaf-edge2}
Suppose $uv$ is a leaf edge in $G-S$, where $u$ and $v$ are \red vertices, and $u$ has two neighbors in $S$.
Then, branch on putting either $u$ or $v$ into the solution~$R$.
In either case, remove the vertex from $G$ and decrease $k$ by $1$.
\end{BRANCHING}
%As \cref{cy:remove-or-connect2} does not apply, $G[S \cup\{u,v\}]$ contains an obstruction.
By (2), $G[S \cup\{u,v\}]$ contains an obstruction, so this rule is safe.
The branching vector is $(1,1)$.

When these rules have been applied exhaustively, we claim that 
\begin{enumerate}
  \item[(3)] each block of $G-S$ contains at most two \red vertices, and
  \item[(4)] each leaf edge of $G-S$ has one \red vertex, and this vertex is not a cut vertex of $G-S$, and has degree~$2$ in $S$.
\end{enumerate}
The first claim follows immediately from the fact that \cref{cy:3red} cannot be applied.
Now suppose $G-S$ has a leaf edge consisting of vertices $u$ and $v$.
Since \cref{co:deg0-or-1} cannot be applied, $u$ and $v$ are either both \red, or one is \red and the other is a cut vertex of $G-S$.  If they are both \red, then either
\cref{cy:leaf-edge2} can be applied, if $u$ or $v$ has degree at least~$2$ in $S$,
or \cref{cy:leaf-edge1} can be applied, if $u$ and $v$ both have degree~$1$ in $S$; a contradiction.
So we may assume that only $u$ is \red, and $v$ is a cut vertex of $G-S$.
In this case, by (1) and since \cref{cy:leaf-edge1} cannot be applied, $u$ has degree~$2$ in $S$.

Now, we apply one of the following three rules, if possible.
%In that case, we apply the following rule if possible.
\begin{BRANCHING}\label{cy:2red}
Suppose $u$ and $v$ are distinct (consecutive) \red vertices in a leaf block that is not an edge.
If this leaf block has a cut vertex in $G-S$ distinct from $u$ and $v$, denote it by $w$. % this cut vertex which may potentially be $u$ or $v$. 
We branch on putting either $u$, or $v$, or $w$
(if the leaf block has a cut vertex in $G-S$ distinct from $u$ and $v$) into the solution~$R$ and,
in each case, remove the vertex from $G$ and decrease $k$ by $1$.
\end{BRANCHING}
Again, the graph induced by the union of the vertices of this leaf block and $S$ contains an obstruction since the leaf block is a cycle, and (2) holds. %since \cref{cy:remove-or-connect2} does not apply.
So, by \cref{lem:chain}, it is safe to branch on $u$, or $v$, or $w$.
The branching vector is $(1,1,1)$ (or $(1,1)$ if the leaf block does not have a cut vertex in $G-S$ or if $w \in \{u,v\}$).

%If none of the rules described so far apply, then each leaf block contains at most one \red vertex.
%Then, the following two reduction rules are safe.
\begin{RULE}\label{cy:cut-is-red}
  %Suppose $B$ is a leaf block of $G-S$, and $C'$ is the set of vertices of $B$ that are not cut vertices of $G-S$.
  Suppose $C'$ is the set of vertices of a leaf block of $G-S$ that are not cut vertices of $G-S$.
  If every vertex in $C'$ is not \red, then remove $C'$ from $G$.
  Otherwise, if the leaf block is a connected component of $G-S$ and exactly one vertex $u$ of $C'$ is \red, then remove $C' \setminus \{u\}$ from $G$. 
\end{RULE} 
\begin{RULE}\label{cy:red-deg1}
If there is a leaf block in $G-S$ where the unique \red vertex~$u$ is not a cut vertex of $G-S$ and $u$ has only one neighbor in $S$, then put $u$ in $S$.
\end{RULE}
The soundness of \cref{cy:cut-is-red} is straightforward.
\cref{cy:red-deg1} is safe because in any solution that contains $u$, we can remove $u$ and add the cut vertex of the leaf block instead.

We claim that, when none of the previous rules apply,
\begin{enumerate}
  \item[(5)] each leaf block of $G-S$ has one \red vertex, 
    %and this vertex has degree~$2$ in $S$.
    and this vertex is not a cut vertex of $G-S$, and has degree~$2$ in $S$.
\end{enumerate}
Indeed, the claim holds when the leaf block is a leaf edge, by (4).
Consider a leaf block of $G-S$ that is not a leaf edge.
It follows from (3) and \cref{cy:2red} that it contains at most one \red vertex.
If it contains no \red vertices, or the only \red vertex is a cut vertex of $G-S$, then \cref{cy:cut-is-red} applies; a contradiction.
So the block has one \red vertex, and this vertex is not a cut vertex of $G-S$, and has degree at most~$2$ in $S$, by (1).
Suppose it has degree~$1$ in $S$.
Since it is not a cut vertex, \cref{cy:red-deg1} applies; a contradiction.
So the vertex has degree~$2$ in $S$.
This proves (5).

Now, we try the following branching rules where $B:=B(G-S)$ is the block tree of $G-S$.
Suppose $B_1$ is a leaf of $B$, and, for some $h \ge 2$, there is a path $B_1c_1B_2c_2 \cdots B_{h-1}c_{h-1}B_h$ such that the vertices $c_1, B_2, c_2, \ldots, B_{h-1}, c_{h-1}$ have degree $2$ in $B$, the blocks $B_2, B_3, \ldots, B_{h-1}$ contain no \red vertex, and $B_h$ contains a \red vertex~$w$ such that there is a chain~$P$ linking $w$ to $c_{h-1}$.
We may assume, by (5), that $B_1$ has one \red vertex~$v$, which is distinct from $c_1$, and $v$ has two neighbors in some connected component $C$ of $G[S]$.

\begin{BRANCHING}\label{cy:path-cut-tree}
If $w$ has a neighbor in the connected component $C$, then branch on putting either $v$, or $w$, or $c_{h-1}$ into the solution~$R$.
In each case, remove the vertex from $G$ and decrease $k$ by $1$.
\end{BRANCHING}
  \begin{figure}
    \centering
{ 
        
     \subfloat[An example of \cref{cy:path-cut-tree}, with $h=4$.  The obstruction is shown in red (light gray), and $v$, $c_3$, and $w$ are the three vertices to branch on.]{
     \label{fig:rule1}
     \includegraphics[scale=0.52]{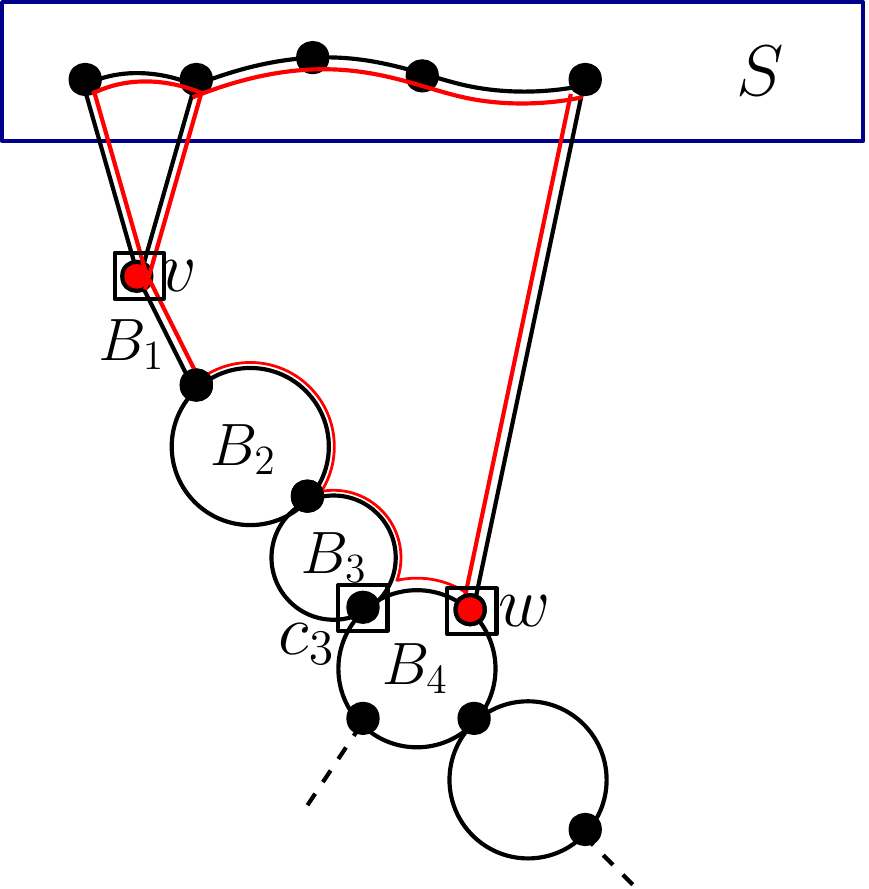}
    }
    \qquad
     \subfloat[An example of \cref{cy:branch-cut-tree2}.  Either one vertex among $v,x,y,w$ is in the solution~$R$, or %the number of connected components induced by $S$ decreases by $1$.
       all the vertices in the boxed region are not in the solution~$R$.]{
     \label{fig:rule2}       
      \includegraphics[scale=0.53]{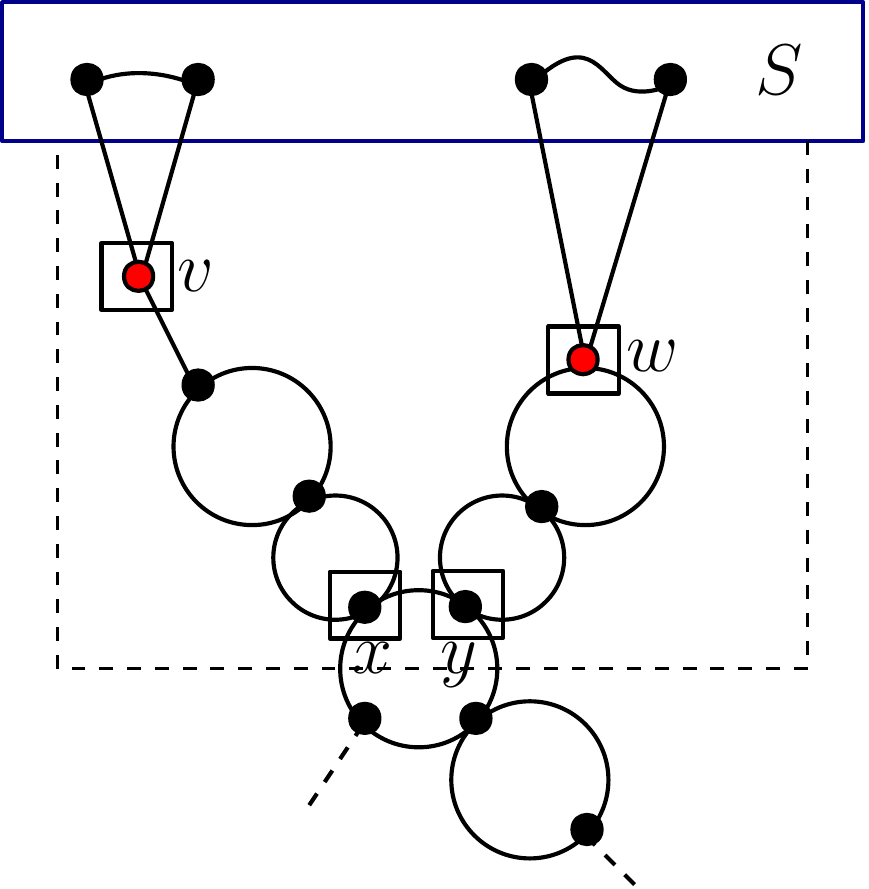}
    }
}
    \caption{Illustrations of \cref{cy:path-cut-tree,cy:branch-cut-tree2}.} \label{fig:2-rules}
  \end{figure}
To show that this rule is sound, we need a lemma analogous to \cref{lem:chain} for a ``chain of blocks'', which we now define.
Let $h \ge 3$, and let $B_1, B_2, \ldots, B_h$ be a set of blocks of $G-S$ that, except for $B_1$ and $B_h$, contain no \red vertices, and for each $i \in \{2,3,\ldots,h-1\}$, the block $B_i$ intersects two other blocks in $G-S$: $B_{i-1}$ and $B_{i+1}$.
%We further assume that $B_2$ (resp.~$B_{h-1}$) intersects a second block $B_1 \notin \{B_2, B_3 \ldots, B_{h-1}\}$ (resp.~$B_h \notin \{B_2, B_3 \ldots, B_{h-1}\}$) in $G-S$, and $B_h$ intersects exactly one other block in $G-S$ that is not in  $\{B_2, B_3 \ldots, B_{h-1}\}$.
The sequence of blocks $B_2, B_3, \ldots, B_{h-1}$ is called a \emph{chain of blocks}.% if it is maximal in satisfying all those conditions.

\begin{lemma}\label{lem:chain2}
If there is a solution, then there is one that does not contain any vertex of a chain of blocks, except potentially the cut vertex shared by $B_{h-1}$ and $B_h$.
\end{lemma}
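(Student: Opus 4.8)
The plan is to prove the block‑analogue of \cref{lem:chain} by an exchange argument on a size‑minimal solution. Write $B:=\bigcup_{i=2}^{h-1}V(B_i)$ for the vertex set of the chain of blocks, let $a:=B_1\cap B_2$ and $b:=B_{h-1}\cap B_h$ be the two cut vertices of $G-S$ at its ends, and put $W:=B\setminus\{b\}$; note that $a\in W$, so the lemma asserts that some solution avoids $W$. First I would record two structural facts. Since none of $B_2,\dots,B_{h-1}$ contains a \red vertex, no vertex of $B$ has a neighbour in $S$; and since each $B_i$ with $2\le i\le h-1$ meets no block of $G-S$ other than $B_{i-1}$ and $B_{i+1}$, every edge of $G$ with exactly one endpoint in $B$ is incident with $a$ or with $b$, so $\{a,b\}$ separates $B\setminus\{a,b\}$ from $V(G)\setminus B$ in $G$. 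Secondly, since $G-S$ is a $d$-cactus, each $B_i$ with $2\le i\le h-1$ is a cycle of length at most $d$ or an edge, so $G[B]$ is a $d$-cactus and no induced subgraph of $G[B]$ is an obstruction.

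Observe next that a vertex set is a solution exactly when it meets every obstruction of $G$ (an obstruction of $G-R$ is an obstruction of $G$ disjoint from $R$). Take a solution $R$ with $|R|\le k$ minimising $|R\cap W|$, and suppose for contradiction that $R\cap W\neq\emptyset$, say $z\in R\cap W$. Let $\mathcal O_z$ be the set of obstructions $O$ with $V(O)\cap R=\{z\}$, i.e., those that $R\setminus\{z\}$ misses; each contains $z$. Fix $O\in\mathcal O_z$. We cannot have $V(O)\subseteq B$, as $G[B]$ is obstruction‑free, so $O$ has a vertex outside $B$; since $O$ is $2$-connected and $\{a,b\}$ separates $B\setminus\{a,b\}$ from $V(G)\setminus B$, it follows that $V(O)$ meets $\{a,b\}$, and in fact if $b\notin V(O)$ then $V(O)\cap B=\{a\}$, since otherwise $a$ would be a cut vertex of $O$. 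So each $O\in\mathcal O_z$ either contains $b$ or has $V(O)\cap B=\{a\}$.

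If $z\neq a$, then $z\in V(O)\cap(B\setminus\{a,b\})$ for every $O\in\mathcal O_z$, which excludes the second possibility, so every $O\in\mathcal O_z$ contains $b$; then $R':=(R\setminus\{z\})\cup\{b\}$ is a solution with $|R'|\le|R|\le k$ and, as $b\notin W$, with $|R'\cap W|=|R\cap W|-1$, contradicting minimality. The remaining case $z=a$ is the main obstacle: an obstruction $O$ with $V(O)\cap B=\{a\}$ need not contain $b$, so $a$ cannot simply be traded for $b$ — indeed replacing $a$ by $b$ can create a long cycle running through $B_1$ and $S$. To handle it one uses the structure of the reduced instance in which the lemma is applied: $B_1$ is a leaf block of $G-S$ whose only cut vertex in $G-S$ is $a$, its only \red vertex is some $v\neq a$, and $a$ lies in no block of $G-S$ other than $B_1$ and $B_2$. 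Consequently, in $G':=G[(V(G)\setminus B)\cup\{a\}]$ every edge with exactly one endpoint in $V(B_1)$ is incident with $v$; since $a\in V(B_1)$ and $a\neq v$, every $2$-connected subgraph of $G'$ containing $a$ is a subgraph of $B_1$ and hence is not an obstruction. Therefore no obstruction $O$ has $V(O)\cap B=\{a\}$, so every $O\in\mathcal O_a$ contains $b$, and $R':=(R\setminus\{a\})\cup\{b\}$ once more contradicts minimality. The delicate point is precisely this last step: making explicit which features of the reduced instance — essentially, that the ``$B_1$ side'' of the chain is obstruction‑free and tethered to $S$ at a single vertex — are needed to forbid obstructions trapped between $a$ and $B_1$.
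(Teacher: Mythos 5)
Your proof is correct and takes essentially the same route as the paper: the paper's entire argument is the one-line exchange of any vertex of the chain of blocks for the cut vertex $c_{h-1}$ shared by $B_{h-1}$ and $B_h$, which is exactly your replacement step. The additional material in your write-up --- the explicit separation argument via $\{a,b\}$ and the separate treatment of $z=c_1$, where you rightly observe that ruling out obstructions trapped on the $B_1$ side requires the structure of the reduced instance (namely that $B_1$ is a leaf block whose unique red vertex is distinct from $c_1$) rather than the bare definition of a chain of blocks --- is detail the paper leaves implicit, and your identification of that dependence is accurate.
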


\begin{proof}
  By hypothesis, $B_h$ shares one cut vertex $c_{h-1}$ with $B_{h-1}$.
  From any solution that contains a vertex $v$ in some block in $\{B_2, B_3, \ldots, B_{h-1}\}$, one can obtain a new solution by replacing $v$ by $c_{h-1}$.  
\end{proof}

  Recall that $P$ is a chain from $c_{h-1}$ to $w$.
The subgraph induced by $V(B_1) \cup V(B_2) \cup \cdots \cup V(B_{h-1}) \cup P \cup \{w\} \cup V(C)$ contains an obstruction.  So, by a combination of \cref{lem:chain,lem:chain2}, we can safely branch on the three vertices $v$, $w$ and $c_{h-1}$.
The branching vector is $(1,1,1)$.

\begin{BRANCHING}\label{cy:path-cut-tree2}
If $w$ has a neighbor in $S$ in a different connected component than $C$, then branch on putting either $v$, or $w$, or $c_{h-1}$ into the solution~$R$, or put $V(B_1) \cup V(B_2) \cup \cdots \cup V(B_{h-1}) \cup P \cup \{w\}$ in $S$.
In each of the first three cases, we remove the vertex from $G$ and decrease $k$ by $1$.
\end{BRANCHING}
The correctness of this rule is also based on \cref{lem:chain2}. 
In the fourth branch, $\cc(S)$ decreases by $1$, so the branching vector is $(1,1,1,1)$.

Once again, we assume that none of the previous rules apply.
Let $B'$ be a connected component of the block tree $B$ of $G-S$.
Assume $B'$ has a node that is not a leaf and let us root $B'$ at this node.
For a vertex $a$ with two neighbors $b$ and $c$, the operation of \emph{smoothing} $a$ consists of deleting $a$ and adding the edge $bc$.
%The operation of \emph{smoothing} a vertex $a$ with two neighbors $b$ and $c$ consists of deleting $a$ and adding the edge $bc$.
Let $T$ be the rooted tree that we obtain by smoothing each vertex of degree~$2$ that is not the root (and keeping the same root).
We now consider the parent node $p$ of leaves at the largest depth of $T$.
We assume that $p$ is a block $C$.
If $p$ is a cut vertex, we get a simplified version of what follows. 
As $p$ was not smoothed, it has at least two children in $T$, which,
by construction, are leaf blocks.
%Similar to consecutive \red vertices, 
We say that two distinct cut vertices $x$ and $y$ are \emph{consecutive} cut vertices if they are both contained in some block of $G-S$ and there is a chain from $x$ to $y$.
We consider two consecutive cut vertices $x$ and $y$ in $C$, % corresponding to children of $p$. % in $B$.
and let $B_x$ and $B_y$ be children of $p$ in $T$ such that $B_x\dotsm xC$ and $B_y\dotsm yC$ are paths of $B$ for which the internal vertices have degree~$2$.
Observe that we can find two consecutive cut vertices in $C$, since if one of the vertices of the $xy$-path in $C$ is \red, then either \cref{cy:path-cut-tree} or \cref{cy:path-cut-tree2} would apply.
Let $v$ (resp.~$w$) be the \red vertex in $B_x$ (resp.~$B_y$).
Recall that $v$ and $w$ have two neighbors in $S$.
We branch in the following way:

\begin{BRANCHING}\label{cy:branch-cut-tree}
If $v$ and $w$ have their neighbors in $S$ in the same connected component of $G[S]$, then branch on putting either $v$, or $w$, or $x$, or $y$ into the solution~$R$.
In each case, remove the vertex from $G$ and decrease $k$ by $1$.
\end{BRANCHING}

\begin{BRANCHING}\label{cy:branch-cut-tree2}
If $v$ and $w$ have their neighbors in $S$ in distinct connected components of $G[S]$, then branch on putting either $v$, or $w$, or $x$, or $y$ into the solution~$R$, or put all of them in $S$ together with the chain from $x$ to $y$ and all the vertices of the blocks in the path from $x$ to $B_x$ in $B$ and in the path from $y$ to $B_y$ in $B$.
In each of the first four cases, remove the vertex from $G$ and decrease $k$ by $1$.
\end{BRANCHING}
The soundness of these two branching rules is similar to \cref{cy:path-cut-tree,cy:path-cut-tree2} and relies on the fact that there is a chain from $x$ to $y$.
Their branching vectors are $(1,1,1,1)$ and $(1,1,1,1,1)$ respectively.

Finally, if none of the previous rules apply, the connected components of $G-S$ contain exactly one \red vertex.
This implies that each of these connected components is a single vertex.
Therefore, $G-S$ is an independent set and all the vertices of $V(G) \setminus S$ have exactly two neighbors in $S$.
At this point, we could finish the algorithm in polynomial time by observing that the problem is now equivalent to the problem where, given a set of paths in a forest, the task is to find a maximum-sized subset such that the paths are pairwise edge-disjoint.
An alternative is to finish with the following simple rule.

\begin{BRANCHING}\label{cy:last-step}
If $v$ and $w$ are two vertices of $V(G) \setminus S$ such that $G[S \cup \{v,w\}]$ contains an obstruction,
then branch on putting either $v$ or $w$ into the solution~$R$.
In either case, remove the vertex from $G$ and decrease $k$ by $1$.
\end{BRANCHING}
The soundness of this rule is straightforward and the branching vector is $(1,1)$.
When \cref{cy:last-step} cannot be applied, then all the remaining vertices of $V(G) \setminus S$ can safely be put in $S$.

\begin{algorithm}%[htp]
  \caption{\textsc{Disjoint-BoundedCactusGraphVD($G,S,d,k$)}}
  \label{alg:disjoint-BCGVD}
\begin{algorithmic}[1]
\Statex \textbf{Input:} A graph $G$, a subset of vertices $S$, and two integers $d$ and $k$.
\Statex \textbf{Output:} A set $R$ of size at most $k$ such that $R \cap S = \emptyset $ and $G-R$ is a $d$-cactus graph.
\State Set $R := \emptyset$.
\While {$k \geqslant 0$ and $G-S$ is non-empty}
\If {\cref{co:deg0-or-1}, or \cref{co:remove-or-connect}, or \cref{cy:direct-obstruction}, or \cref{cy:remove-or-connect2} can be applied}
\State Apply it
\ElsIf {\cref{cy:3red}, or \cref{cy:leaf-edge1}, or \cref{cy:leaf-edge2} can be applied}
             \State Apply it
             \ElsIf {\cref{cy:2red}, or \cref{cy:cut-is-red}, or \cref{cy:red-deg1} can be applied}
                    \State Apply it
                    \ElsIf {\cref{cy:path-cut-tree} or \cref{cy:path-cut-tree2} can be applied}
                           \State Apply it
                           \ElsIf {\cref{cy:branch-cut-tree} or \cref{cy:branch-cut-tree2} can be applied}
                                \State Apply it
                                %\EndIf
                           %\EndIf             
                    %\EndIf
      %\EndIf    
\EndIf
\EndWhile
\While {\cref{cy:last-step} can be applied}
       \State Apply it
\EndWhile
\If {$k \geqslant 0$}
\State \Return $R$.
\EndIf     
\end{algorithmic}
\end{algorithm}

The running time of \textsc{Disjoint} \dKBGD is given by the worst branching vector $(1,1,1,1,1)$, that is $\cO^*(5^{\mu(I)})=\cO^*(25^k)$. 
\end{proof}
}

\iftoggle{paper}{}{
When $d = |V(G)|$, these become $\cO^*(c^k)$-time algorithms for \BGVD and \DHS respectively.  In particular, the latter implies %the following:
that there
%\begin{corollary}
  %There
  is a deterministic FPT algorithm that solves \DHS, running in time $\cO^*(\cybasis^k)$. 
%\end{corollary}
}

\section{Polynomial Kernels}\label{sec:polykernel}

In this section, we prove the following:
\begin{theorem}\label{thm:mainkernel}
Let $\cP$ be a non-degenerate block-hereditary class of graphs recognizable in polynomial time.
Then \BPBVD admits a kernel with $\mathcal{O}(k^2 d^{7})$ vertices.
\end{theorem}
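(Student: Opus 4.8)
The plan is to follow the well-trodden route for kernelizing feedback-vertex-set-type problems, adapted to the block/cluster structure of \cref{sec:clustering}. Throughout we work with $\cP' := \cP \cap \cB_{2,d}$; by \cref{dblock-clustering} applied to $\cP'$, a set $S$ is a solution if and only if it hits every cycle of $G$ not contained in a $\cP'$-cluster of $G$, and every $\cP'$-cluster has at most $d$ vertices. Call a cycle \emph{bad} if it is not contained in a $\cP'$-cluster; the goal is to reduce to an instance with $\abs{V(G)} = \cO(k^2 d^7)$.

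\textbf{Phase 1 (a clusterable core).} Using \cref{finddblockalgo} repeatedly, greedily build a maximal family $\{J_1,\dots,J_t\}$ of pairwise vertex-disjoint induced subgraphs of $G$ lying in $(\cB_{2,d}\setminus\cP)\cup\obd$. Every solution must hit each $J_i$, so if $t>k$ we return a trivial \NO-instance. Otherwise put $W:=\bigcup_i V(J_i)$, so $\abs{W}\le (2d-2)k$; by maximality $G-W$ has no induced subgraph in $(\cB_{2,d}\setminus\cP)\cup\obd$, hence is $\cP'$-clusterable (\cref{boundedclusterlemma}) with its $\cP'$-clusters computable in polynomial time (\cref{findboundedclusters}). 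Since $\abs{W}=\cO(kd)$, it now suffices to bound $\abs{V(G-W)}$.

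\textbf{Phase 2 (pruning the forest part).} We introduce reduction rules that delete or shrink parts of $G$ irrelevant to bad cycles: (R1) delete any vertex of degree at most $1$; (R2) delete any connected component $G_i$ with $G_i \in \Phi_{\cP'}$ (checkable in polynomial time via the recognition algorithm for $\cP$), since every cycle of $G_i$ lies inside a single block of $G_i$, which is an induced subgraph of $G$ in $\cP'$ and so lies in a $\cP'$-cluster of $G$; (R3) short-circuit long ``chains'': if $P$ is an induced path whose internal vertices all have degree $2$ in $G$, avoid $W$, and lie in no $\cP'$-cluster of size at least $3$, and $P$ has length more than $d$, replace $P$ by a path of length $d$ --- a bad cycle through such a chain already has more than $d$ vertices, so this neither creates nor destroys bad cycles and any solution vertex on the chain can be relocated. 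After exhaustive application, every tree-like segment of $G-W$ between two large clusters, or between a large cluster and a vertex of $W$, has length at most $d$.

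\textbf{Phase 3 (bounding the clusters) and the main obstacle.} Contracting each $\cP'$-cluster of $G-W$ of size at least $3$ to a single node and each chain to a single edge yields a multigraph $H$ on node set $W$ together with the large clusters, in which every bad cycle of $G$ either stays within a bounded neighbourhood of $W$ (contributing $\cO(\abs{W}\cdot\mathrm{poly}(d))=\cO(k\cdot\mathrm{poly}(d))$ vertices after Phase~2) or projects to a cycle. One then runs the standard \FVS reductions on $H$: a flower argument shows that a node carrying more than roughly $kd$ pairwise-internally-disjoint bad-cycle arcs is forced --- and for a cluster-node the cluster together with a bounded gadget describing its attachment to the rest of $G$ can be pushed to the solution side or trimmed --- so that, once this is inapplicable, the number of remaining relevant large clusters is $\cO(k\cdot\abs{W})=\cO(k^2 d)$; each contributes at most $d$ vertices, the $\cO(k^2 d)$ chains between them contribute $\cO(k^2 d)$ more, and trimming the attachment pattern of each relevant cluster --- which has $\cO(d^2)$ pairs of external vertices, each carrying a gadget of size $\mathrm{poly}(d)$ --- contributes the remaining powers of $d$; collecting them gives $\cO(k^2 d^7)$. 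A final rule deletes every vertex shown to lie on no bad cycle, and correctness follows from \cref{dblock-clustering}. The delicate part is Phase~3: since $\cP$ is arbitrary and only recognizable, a cluster carries no structure beyond having at most $d$ vertices and lying in $\cP$, so the flower/trimming argument must be carried out purely in terms of the block-forest of a hypothetical solution, and it must be made to count only bad-cycle arcs that are not internal to any cluster (so that cycles living inside a cluster are never hit); tuning the sequence of reduction rules so that the product of their $\mathrm{poly}(d)$ blow-ups stays within $d^7$ is the book-keeping crux.
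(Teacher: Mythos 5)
Your outline follows the right general strategy (small modulator, structural pruning, sunflower-type forcing of high-degree vertices, expansion-lemma trimming), but two of its load-bearing steps are not actually in place. First, your Phase~1 modulator is too weak: a maximal vertex-disjoint packing of obstructions only guarantees that $G-W$ is $(\cP\cap\cB_{2,d})$-clusterable, not that $G-W\in\Phi_{\cP\cap\cB_{2,d}}$. A clusterable graph can still contain arbitrarily large blocks and long bad cycles threading through many clusters, so the structure you rely on after Phase~2 (``every tree-like segment of $G-W$ between two large clusters has length at most $d$'', and the block-forest picture underlying Phase~3) simply does not hold relative to $W$. The paper instead works modulo a genuine approximate solution $U$ of size at most $(2d+6)k$: after the obstruction-deletion phase it invokes the reduction of \cref{killcycles1} together with the known $8$-approximation for \SFVS, so that $G-U\in\Phi_{\cP\cap\cB_{2,d}}$; only then do the component/cut-vertex/bypassing rules (\cref{rule:bypassing}) and the high-degree lemma (\cref{lem:largedegree}, proved via a red/blue colouring of the block tree of $G-U$) go through.

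Second, the part you yourself label ``the book-keeping crux'' of Phase~3 is precisely the missing key idea, not mere book-keeping. For an arbitrary block-hereditary $\cP$ there is no Gallai-type theorem producing either a flower of bad cycles at $v$ or a small structure certifying its absence, and your appeal to ``standard \FVS{} reductions'' and a flower counting ``bad-cycle arcs'' is not justified. The paper's substitute is the notion of $(A,d)$-trees and \cref{prop:dblockdeletion}: either one finds $k+1$ pairwise vertex-disjoint $(N_G(v),d)$-trees in $G-v$ (each together with $v$ is a $2$-connected subgraph on more than $d$ vertices, hence an obstruction for every $\cP$), which forces $v$; or one finds $S_v$ of size $\cO(kd^2)$ such that $v$ has at most $d-1$ neighbours in every component of $G-(S_v\cup\{v\})$. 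This packing/covering dichotomy, combined with the $d$-expansion lemma (\cref{lem:expansionlemma}) and a concrete, provably safe rewiring gadget in \cref{rule:expansion} (attaching $d-1$ internally disjoint length-two paths from $v$ to each vertex of $S_v'$, with a potential-function argument for termination), is what lets the final degree-counting argument close and yields the $\cO(k^2d^7)$ bound. Without an analogue of \cref{prop:dblockdeletion} and a safe replacement for your unspecified ``trimming'' of cluster attachments, the proposed proof does not establish the kernel.
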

\iftoggle{paper}{%
Recall that,
for positive integers $x$ and $y$, we denote by $\cB_{x,y}$ the class of all biconnected graphs with at least $x$ vertices and at most $y$ vertices.
}{}
We fix a block-hereditary class of graphs $\cP$ recognizable in polynomial time.
%
%\noindent
The block tree of a graph can be computed in time $\mathcal{O}(\abs{V(G)}+\abs{E(G)})$~\cite{HopcroftT1973}.
Thus, one can test whether a given graph is in $\Phi_{\cP\cap \cB_{2,d}}$ in polynomial time.

Before describing the algorithm, we observe that there is a $(2d+6)$-approx\-i\-ma\-tion algorithm for the (unparameterized) minimization version of the \BPBVD problem.
%%Recall that $\cld$ is the class  of $2$-connected graphs with at most $d$ vertices, and $\obd$ is the class of $2$-connected graphs with between $d+1$ and $2d$ vertices.
We first run the algorithm of Proposition~\ref{finddblockalgo}. When we find an induced subgraph in $(\cB_{2,d} \setminus \cP) \cup \obd$, instead of branching on the removal of one of the vertices, we
remove all the vertices of the subgraph, then rerun the %$\cP$-clustering
algorithm.
%If the algorithm outputs the set of all $\cld$-clusters, 
%then we check whether each $\cld$-cluster is contained in $\cP$, and if not, then run the $\cld$-algorithm after removing the vertex set.
Hence, we can reduce to a $(\cP \cap \cB_{2,d})$-clusterable graph by removing at most $(2d-2)\cdot \OPT$ vertices.
Moreover, we can obtain the set of all $(\cP \cap \cB_{2,d})$-clusters using the algorithm in Lemma~\ref{findboundedclusters}. 
Arguments in the proof of Proposition~\ref{killcycles1} and the known $8$-approximation algorithm for \SFVS~\cite{EvenNZ2000} imply that there is a $(2d+6)$-approximation algorithm for \BPBVD.

\medskip 
We start with the straightforward reduction rules. Let $(G, d, k)$ be an instance of  \BPBVD.

\begin{RULE}[Component rule]\label{rule:blockcomponent} 
If $G$ has a connected component~$H\in \Phi_{\cP\cap \cB_{2,d}}$, then remove $H$.
\end{RULE}

\begin{RULE}[Cut vertex rule]\label{rule:cutvertex} 
Let $v$ be a cut vertex of $G$ such that $G-v$ contains a connected component~$H$ where $G[V(H)\cup \{v\}]$ is a block in $\cP\cap \cB_{2,d}$. 
Then remove $H$ from $G$.
\end{RULE}

%It is not hard to observe that Reduction Rules~\ref{rule:blockcomponent} and \ref{rule:cutvertex} are sound.
%

Now, we introduce a so-called bypassing rule. 
%This will be based on the approximation solution for the unparameterized version of \dBGD, which can be obtained by Theorem~\ref{thm:dBGVDapprox}. 
We first run the $(2d+6)$-approx\-i\-ma\-tion algorithm, and if it outputs a solution of more than $(2d+6)k$ vertices, then we have a \NO-instance.
Thus, we may assume that the algorithm outputs a solution of size at most $(2d+6)k$.
Let us fix such a set $U$. %Note that $\abs{U}\le (2d+6)k$.

\begin{RULE}[Bypassing rule]\label{rule:bypassing}
Let $v_1, v_2, \ldots, v_{t}$ be a sequence of cut vertices of $G-U$ with $2\le t\le d+1$, and let $B_1, \ldots, B_{t-1}$ be blocks of $G-U$ such that
\begin{enumerate}[(1)]
\item %$v_1v_2 \cdots v_t$ is an induced path in $G-U$, and,
  for each $i\in \{1, \ldots, t-1\}$, $B_i$ is the unique block of $G-U$ containing $v_i$ and $v_{i+1}$ and no other cut vertices of $G-U$;
\item $G$ has no edges between $(\bigcup_{1\le i\le t-1} V(B_i))\setminus \{v_1, v_t\}$ and $U$; and
\item $\abs{\bigcup_{1\le i\le t-1} V(B_i))}\ge d+1$.
\end{enumerate} 

\noindent
If $\bigcup_{1\le i\le t-1} V(B_i) \setminus \{v_1, \ldots, v_t\}=\emptyset$, then contract $v_1v_2$; otherwise, choose a vertex in  $\bigcup_{1\le i\le t-1} V(B_i)$ that is not a cut vertex of $G-U$, and remove it.
\end{RULE}
See \cref{bypassing} for an example application of \cref{rule:bypassing}.
Note that this rule can be applied in polynomial time using the block tree of $G-U$.
\begin{figure}
  \centering
  \subfloat{\includegraphics[scale=0.4]{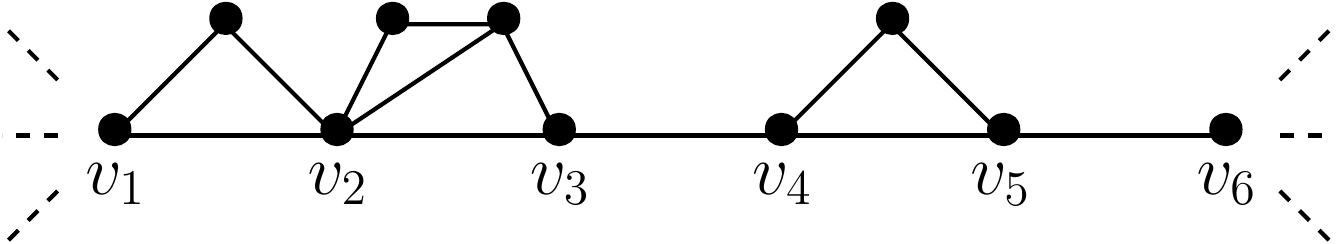}}
  \quad
  \raisebox{\baselineskip}{\Large$\rightarrow$}
  \quad
  \subfloat{\includegraphics[scale=0.4]{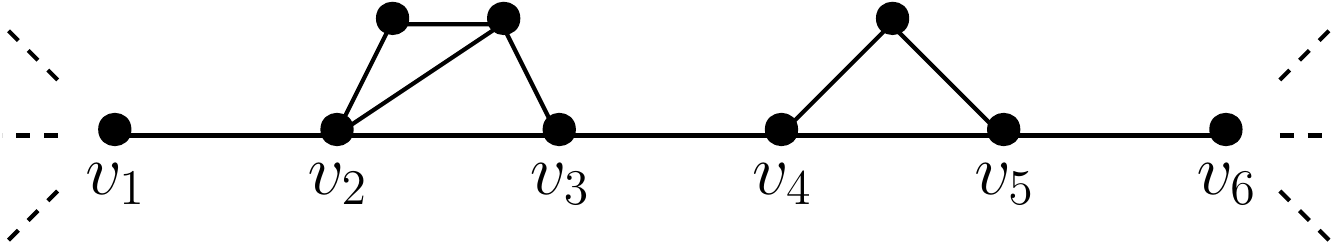}}
  \caption{An example application of \cref{rule:bypassing} when $d=9$.} \label{bypassing}
\end{figure}
\begin{lemma}\label{lem:bypassingrule}
\Cref{rule:bypassing} is safe.
\end{lemma}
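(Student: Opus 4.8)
The plan is to show that the instance before applying the rule is a \YES-instance if and only if the instance after is. Let $W := \bigcup_{1\le i\le t-1} V(B_i)$, and let $G'$ denote the graph obtained by applying the rule. The key structural observation is that the blocks $B_1,\dots,B_{t-1}$ form a ``path of blocks'' in $G-U$ attached to the rest of $G-U$ only through $v_1$ and $v_t$, and — by condition~(2) — attached to $U$ only through $v_1$ and $v_t$. Hence $W\setminus\{v_1,v_t\}$ is separated from $V(G)\setminus W$ by $\{v_1,v_t\}$ in $G$; the only way a solution can ``use'' this region is to hit cycles that are entirely inside $W$, and any cycle passing through both $v_1$ and $v_t$ already needs a vertex of $W\setminus\{v_1,v_t\}$ or else lies in a single $B_i$. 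Since $\abs{W}\ge d+1$ by condition~(3), the union $G[W]$ is itself too big to be a single block of size at most $d$, so in particular $v_1$ and $v_t$ remain cut vertices (or are identified) after the operation, and the ``$B_i$ part'' of any solution-free graph is automatically a valid piece of $\Phi_{\cP\cap\cB_{2,d}}$ as long as each individual $B_i$ is (which holds since the $B_i$ are blocks of $G-U$ and $U$ is a valid solution).

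First I would handle the contraction case ($W\setminus\{v_1,\dots,v_t\}=\emptyset$, so the whole chain is just the path $v_1v_2\cdots v_t$ with each $B_i$ an edge). Here contracting $v_1v_2$ is the standard \FVS-style ``short chain'' reduction: since $t\ge 2$ and $\abs{W}\ge d+1$ forces $t\ge d+1$... actually $t=\abs{W}\ge d+1$, the path $v_1\cdots v_t$ has at least $d+1$ vertices, so no block of $G-S$ for any solution $S$ disjoint from this region can ever swallow all of it. Deleting an internal path vertex is never better than deleting $v_1$ or $v_t$, and contracting one edge of a long induced path changes neither the set of cycles (up to the obvious bijection) nor the block structure of $G-S$ for any $S$ avoiding the contracted vertices; I would make this precise by the exchange argument: from any solution $S$ I can assume $S\cap(W\setminus\{v_1,v_t\})=\emptyset$ by replacing such a vertex with $v_1$, and then $S$ is a solution for $G'$ and vice versa.

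For the removal case, let $x\in W$ be the chosen non-cut-vertex of $G-U$ that is removed. The forward direction: given a solution $S$ for $(G,d,k)$, I first argue we may assume $S$ contains no vertex of $W\setminus\{v_1,v_t\}$ other than possibly... hmm, here I need to be careful because $S$ might legitimately need to hit a cycle inside some $B_i$. The cleaner exchange: if $S$ meets $W\setminus\{v_1,v_t\}$, pick any such vertex $y$ lying in $B_j$; since $\abs{W}\ge d+1$, after deleting $S$ the chain cannot be a single small block, so I can argue $y$ can be replaced by a cut vertex $v_i$ of the chain without increasing $\abs{S}$ and without creating a new large or forbidden block — because each $B_i\setminus S$ is an induced subgraph of a block $B_i\in\Phi_{\cP\cap\cB_{2,d}}$, hence biconnected pieces of it are still in $\cP\cap\cB_{2,d}$, and pushing the solution vertex onto a cut vertex only refines the block tree. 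Once $S\cap(W\setminus\{v_1,v_t\})=\emptyset$, then $x\notin S$, and $S$ is a solution for $G'=G-x$ simply because $G'-S$ is an induced subgraph of $G-S\in\Phi_{\cP\cap\cB_{2,d}}$, which is hereditary. The backward direction: given a solution $S'$ for $G'$, I claim $S'$ is already a solution for $G$. The only new cycles in $G-S'$ compared to $G'-S'$ pass through $x$; any such cycle lies inside the chain region, hence inside $W$. But removing $x$ from the chain already leaves, by condition~(3) and the fact that $x$ is not a cut vertex, a chain on $\abs{W}-1\ge d$ vertices... and here is the crux: I must show that if $G-S'$ had a bad block meeting $x$, then already $G'-S'$ had a bad block, or $S'$ was too small. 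Actually the right statement is: $x$ is not a cut vertex of $G-U\supseteq G-S'$-restricted-appropriately, so $G-S'$ and $G-S'-x=G'-S'$ have the "same" block structure outside a single block $B_j$; reinserting a non-cut vertex into a biconnected block $B_j\in\Phi_{\cP\cap\cB_{2,d}}$ that already has $\le d-1$ vertices... no — I need $\abs{B_j}\le d$, which holds since $B_j$ is a block of $G-U$. So $G-S'$ restricted to that block is still $B_j$ minus stuff, fine. The subtle point I must nail is that $x$'s removal does not turn a size-$(d{+}1)$ situation into a size-$d$ one that falsely looks solved; this is exactly what condition~(3) prevents in combination with requiring the chain to be ``long''.

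\begin{proof}
Write $W:=\bigcup_{1\le i\le t-1}V(B_i)$ and let $G'$ be the graph produced by \cref{rule:bypassing}. By conditions~(1) and~(2), the set $W\setminus\{v_1,v_t\}$ has neighbours only in $W\cup\{v_1,v_t\}$ in $G$; in particular $\{v_1,v_t\}$ separates $W\setminus\{v_1,v_t\}$ from $V(G)\setminus W$.

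\emph{Exchange claim.} If $(G,d,k)$ has a solution $S$, then it has one, of the same size, with $S\cap(W\setminus\{v_1,\dots,v_t\})=\emptyset$ and, in the contraction case, $S\cap\{v_1,v_2\}=\emptyset$. Indeed, suppose $y\in S$ lies in $W\setminus\{v_1,\dots,v_t\}$, say $y\in V(B_j)$. Consider $S^*:=(S\setminus\{y\})\cup\{v_1\}$ (or, in the contraction/path case, $(S\setminus\{y\})\cup\{v_i\}$ for the appropriate endpoint $v_i$ of $B_j$). We check $G-S^*\in\Phi_{\cP\cap\cB_{2,d}}$. Outside $W$, the graphs $G-S$ and $G-S^*$ agree. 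Inside, $G[W]-S^*$ is obtained from a subgraph of the block-path $B_1\cup\cdots\cup B_{t-1}$ by deleting the cut vertex $v_1$: every block of $G[W]-S^*$ is an induced subgraph of some $B_i$ that is biconnected, hence either a single edge or isomorphic to a member of $\cP\cap\cB_{2,d}$ (as $U$ is a solution, $B_i\in\cP\cap\cB_{2,d}$, and $\cP$ is block-hereditary). Moreover $v_1$ is still a cut vertex of $G-S^*$ (or not present), so no block of $G-S^*$ crosses between $W$ and the rest. Repeating, we reach a solution avoiding $W\setminus\{v_1,\dots,v_t\}$ entirely; in the contraction case, where each $B_i$ is an edge and $\abs{W}=t\ge d+1$, deleting $v_1$ or $v_2$ is never forced (the path $v_1\cdots v_t$ is induced and any solution vertex on it can be pushed to an endpoint by the same argument), proving the claim.

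\emph{$(G,d,k)$ is a \YES-instance $\Rightarrow$ $G'$ is.} Take a solution $S$ as in the exchange claim. In the contraction case, $S$ avoids both $v_1$ and $v_2$; contracting the induced-path edge $v_1v_2$ yields $G'$, and $G'-S$ is obtained from $G-S$ by contracting one edge of a long induced path, which preserves membership in $\Phi_{\cP\cap\cB_{2,d}}$ (the block tree is unchanged up to relabelling, since $v_1,v_2$ have degree two on an induced path in $G-S$). In the removal case, $x\notin S$, so $G'-S=(G-x)-S$ is an induced subgraph of $G-S\in\Phi_{\cP\cap\cB_{2,d}}$; as this class is hereditary, $S$ is a solution for $G'$.

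\emph{$G'$ is a \YES-instance $\Rightarrow$ $(G,d,k)$ is.} Let $S'$ be a solution for $G'$. In the contraction case, lift $S'$ to a vertex set $\tilde S$ of $G$ of the same size by replacing the contracted vertex (if it lies in $S'$) by $v_1$; then $G-\tilde S$ is obtained from $G'-S'$ by subdividing one edge (or is an induced subgraph), so $\tilde S$ is a solution for $G$. In the removal case, we claim $S'$ itself is a solution for $G$. Every block of $G-S'$ not meeting $x$ is also a block of $G'-S'=(G-x)-S'$, hence lies in $\cP\cap\cB_{2,d}$. A block $B$ of $G-S'$ meeting $x$ is contained in $W$ (since $x\notin\{v_1,v_t\}$ and $\{v_1,v_t\}$ separates $W\setminus\{v_1,v_t\}$), so $B\subseteq V(B_j)$ for some $j$; thus $\abs{V(B)}\le\abs{V(B_j)}\le d$, and $B$ is a biconnected induced subgraph of $B_j\in\cP$, hence $B\in\cP\cap\cB_{2,d}$ by block-heredity of $\cP$. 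Therefore $G-S'\in\Phi_{\cP\cap\cB_{2,d}}$, and $S'$ is a solution for $(G,d,k)$.

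In all cases the parameter $k$ is unchanged, so \cref{rule:bypassing} is safe.
\end{proof}
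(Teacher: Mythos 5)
Your overall plan (push solution vertices inside the chain onto $v_1$, then compare block structures) is the same as the paper's, but the backward direction in the removal case contains a genuine error. You claim that \emph{any} solution $S'$ of $G'=G-x$ is itself a solution of $G$, justifying the key step ``a block of $G-S'$ meeting $x$ is contained in $W$'' solely by the fact that $\{v_1,v_t\}$ separates $W\setminus\{v_1,v_t\}$ from the rest of $G$. A two-vertex separator does not confine blocks: a block may leave $W$ through $v_1$ and re-enter through $v_t$. Concretely, take $\cP$ to be all biconnected graphs and $d=4$; let $B_1$ be the $4$-cycle $v_1axv_2$, let $B_2$ be the $4$-cycle $v_2cev_3$, add a vertex $u$ adjacent to $v_1$ and $v_3$, and attach pendant vertices to $v_1$ and $v_3$; take $U=\{u\}$. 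All conditions of \cref{rule:bypassing} hold with $t=3$ and $\abs{W}=7\ge d+1$, and the rule may remove $x$. Now $S'=\{c\}$ is a solution for $G'=G-x$ (the only block of $G'-c$ with more than two vertices is the $4$-cycle $v_1v_2v_3u$), yet $G-c$ contains the $2$-connected subgraph on $\{v_1,a,x,v_2,v_3,u\}$, a block with six $>d$ vertices that meets $x$ and leaves $W$ through both $v_1$ and $v_3$. So both the claim and its justification fail. The paper's proof avoids this precisely by performing the exchange ($S'\cap W\mapsto\{v_1\}$) in this direction as well, and then invoking condition (3): since $\abs{V(G')\cap W}\ge d$ and $G'[V(G')\cap W]$ is not $2$-connected, $v_1$ and $v_t$ cannot lie in a common block of $G'-S'$, and only then does reinserting $x$ merely restore $B_j\in\cP\cap\cB_{2,d}$.

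Two smaller gaps in the contraction case: the WLOG ``$S\cap\{v_1,v_2\}=\emptyset$'' is not available, since $v_1$ is the attachment vertex of the chain and may belong to every solution of size $k$ (``deleting $v_1$ is never forced'' is wrong); this is harmless only because a solution containing $v_1$ or $v_2$ can instead be mapped to one containing the contracted vertex, an argument you do not make. Likewise, in the backward contraction case, ``subdividing one edge preserves membership in $\Phi_{\cP\cap\cB_{2,d}}$'' is false in general (subdivision can push a block above $d$ vertices or out of $\cP$); it holds here only because, again by condition (3), the subdivided edge must be a bridge of $G'-S'$, which needs to be argued. In short, the removal-case backward direction must be redone along the paper's exchange-plus-counting lines; the separator argument alone cannot work.
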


%%%%
\appendixproof{Lemma~\ref{lem:bypassingrule}}
{
\begin{proof}
Let $v_1v_2 \cdots v_t$ be an induced path of $G-U$ and let $B_1, \ldots, B_{t-1}$ be blocks of $G-U$ satisfying the conditions of \cref{rule:bypassing}.
Let $G'$ be the resulting graph after applying \cref{rule:bypassing}.
We show that $G$ has a set of vertices $S$ of size at most $k$ such that $G-S\in \Phi_{\cP\cap \cB_{2, d}}$ 
if and only if $G'$ has a set of vertices $S'$ of size at most $k$ such that $G'-S'\in \Phi_{\cP\cap \cB_{2, d}}$.
For convenience, let $W:=\bigcup_{1\le i\le t-1} V(B_i)$.

Suppose that $G$ has a set of vertices $S$ of size at most $k$ such that $G-S\in \Phi_{\cP\cap \cB_{2, d}}$.
If $\abs{S\cap W}\ge 1$,
then 
$G-((S\setminus W) \cup \{v_1\})$ is also a graph in $\Phi_{\cP\cap \cB_{2, d}}$, as each block in $\{B_1, B_2, \ldots, B_{t-1}\}$ is in $\cP\cap \cB_{2, d}$.
Thus, we may assume that $S\cap  W=\emptyset$.
%There are two cases; either $v_1$ and $v_t$ are contained in the same block of $G-S$, or not.
Since $\abs{W}\ge d+1$, $v_1$ and $v_t$ are not contained in the same block of $G-S$.
It means that there is no path from $v_1$ to $v_t$ in $G-S$ containing no vertices in $\{v_2, \ldots, v_{t-1}\}$, 
and thus
all vertices in $\{v_1, v_2, \ldots, v_{t}\}$ become cut vertices of $G-S$.
Hence, all blocks in $\{B_1, B_2, \ldots, B_{t-1}\}$ are in distinct blocks of $G-S$, and thus $G'-S$ is in $\Phi_{\cP\cap \cB_{2, d}}$.

Now suppose that $G'$ has a set of vertices $S$ of size at most $k$ such that $G'-S\in \Phi_{\cP\cap \cB_{2, d}}$.
When an edge $v_1v_2$ is contracted, we label the resulting vertex $v_1$.
Similar to the other direction, if $\abs{S\cap W}\ge 1$, then we can replace $S\cap W$ with $v_1$.
So we may assume that $S\cap W=\emptyset$.
As $\abs{V(G')\cap W}\ge d$ and $G'[V(G')\cap W]$ is not $2$-connected, $v_1$ and $v_t$ cannot be contained in the same block of $G'-S$.
Thus, all blocks of $G'$ on $W$ are distinct blocks of $G'-S$, so
$G-S\in \Phi_{\cP\cap \cB_{2, d}}$.
\end{proof}
}

We show that after applying \cref{rule:blockcomponent,rule:cutvertex,rule:bypassing}, 
if the reduced graph is still large, then there is a vertex of large degree. 
This follows from the fact that the block tree of $G-U$ has no path of $2d+2$ vertices where the internal vertices have degree~$2$ in $G-U$.

\begin{lemma}\label{lem:largedegree}
Let $(G, d, k)$ be an instance reduced under \cref{rule:blockcomponent,rule:cutvertex,rule:bypassing}.
If $(G, d, k)$ is a \YES-instance and $\abs{V(G)}\ge 4d(2d+3)(d+3)k\ell$, for some integer~$\ell$, then $G$ contains a vertex of degree at least $\ell+1$.
\end{lemma}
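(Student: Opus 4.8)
The goal is to show that if the instance is reduced under Rules~\ref{rule:blockcomponent}, \ref{rule:cutvertex}, and~\ref{rule:bypassing}, is a \YES-instance, and has many vertices, then some vertex has large degree. First I would fix a solution $S$ with $\abs{S}\le k$, so that $G-S\in \Phi_{\cP\cap \cB_{2,d}}$, and recall that we have the approximate solution $U$ of size at most $(2d+6)k$. Both $S$ and $U$ are ``small'' (size $\cO(dk)$), so the bulk of the $\ge 4d(2d+3)(d+3)k\ell$ vertices lies in $G-(S\cup U)$, or more conveniently in $G-U$. The strategy is to bound the number of vertices of $G-U$ in terms of (a) the number of ``branching'' nodes of the block tree $B(G-U)$, (b) the lengths of the degree-$2$ paths in that block tree, and (c) the sizes of the individual blocks — and to show that after the reduction rules, each of these is controlled, so that a large vertex count forces a large degree somewhere.

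\textbf{Key steps.} First I would analyze the block tree $T := B(G-U)$. A cut vertex of $G-U$ that has degree at least $3$ in $T$, or a block of $G-U$ containing at least $3$ cut vertices, I will call \emph{branching}; I claim the number of branching nodes of $T$ is $\cO(dk)$. To see this: the leaf blocks of $G-U$ must each contain a vertex with a neighbor in $U$ (otherwise Rule~\ref{rule:cutvertex}, or Rule~\ref{rule:blockcomponent} for an isolated component, would apply — a leaf block not attached to $U$ and not the whole component can be removed), so the number of leaves of $T$ is at most $\abs{N_G(U)} \le \abs{U}\cdot(\text{max degree})$; hmm — that's circular if the max degree is what we're bounding. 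The cleaner route: the number of leaf \emph{blocks} is at most the number of vertices of $G-U$ adjacent to $U$, but to avoid circularity I instead bound leaves of $T$ by $\abs{U}$ times... no. Let me instead argue directly: each leaf block $B$ of $G-U$, after Rule~\ref{rule:cutvertex} and Rule~\ref{rule:blockcomponent}, must have a vertex (other than its cut vertex) with a neighbor in $U$ \emph{and} this neighbor set across distinct leaf blocks need not be distinct, but every leaf block contributes at least one edge to $U$; if some vertex $u\in U$ has degree $\le \ell$, it kills at most $\ell$ leaf blocks, so if there are more than $\ell\abs{U}$ leaf blocks then some $u\in U$ has degree $>\ell$ and we are done. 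So \textbf{either} we are done immediately, \textbf{or} $T$ has at most $\ell\abs{U} \le \ell(2d+6)k$ leaves, hence (a tree has at most as many degree-$\ge 3$ nodes as leaves) at most $\cO(dk\ell)$ branching nodes and at most $\cO(dk\ell)$ maximal degree-$2$ paths.

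\textbf{Completing the count.} Next, Rule~\ref{rule:bypassing} controls the degree-$2$ paths: it forbids a chain of blocks $B_1,\dots,B_{t-1}$ with $2\le t\le d+1$ whose interior touches $U$ nowhere and whose union has $\ge d+1$ vertices. So consider a maximal path in $T$ of blocks with degree-$2$ cut vertices between them: I split it into consecutive segments of exactly $d$ blocks each (plus a leftover of $<d$ blocks); by Rule~\ref{rule:bypassing} each such segment whose interior is disjoint from $N_G(U)$ has $\le d$ vertices total, hence contributes $\cO(d)$ vertices, and each segment is either short or forced — so each maximal degree-$2$ path, if it had $b$ blocks, contributes $\cO(d^2 + b/d \cdot d) = \cO(d \cdot b)$... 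I need to be a little careful, but the point is that the \emph{total} number of blocks is $\cO(dk\ell)$ once the ``free'' blocks are blocked by bypassing: a degree-$2$ path with interior avoiding $U$ can have at most $\cO(d)$ blocks before Rule~\ref{rule:bypassing} fires (roughly $d$ blocks of $\le d$ vertices giving $\le d$ vertices total, then the $(d{+}1)$-st block triggers it), so either many path-blocks touch $U$ — again giving a high-degree vertex in $U$ by the pigeonhole as above — or the number of blocks is $\cO(dk\ell)$. Finally each block of $G-U$ has at most $d$ vertices (it lies in $G-S$ which is in $\Phi_{\cP\cap\cB_{2,d}}$... wait, $U\ne S$; but a block of $G-U$ that is not in $\cP\cap\cB_{2,d}$ would have to be hit by $S$, and $\abs{S}\le k$, so at most $k$ blocks of $G-U$ have more than $d$ vertices and these have at most $\abs{V(G)}$... this is a genuine subtlety). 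So altogether $\abs{V(G-U)} = \cO(dk\ell)\cdot d + (\text{oversized blocks})$, and a counting check against the hypothesis $\abs{V(G)}\ge 4d(2d+3)(d+3)k\ell$ forces, in every case, a vertex of degree $>\ell$.

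\textbf{Main obstacle.} The delicate part is the circularity I flagged: bounding the number of leaf blocks (and the number of path-blocks that touch $U$) seems to require already knowing degrees are small, but the resolution is the clean dichotomy — we either directly exhibit a vertex of $U$ of degree $>\ell$ by pigeonhole (if too many leaf/path blocks attach to $U$), or we get the structural bounds on $T$. Threading that dichotomy through all three contributions (leaves, degree-$2$ paths, oversized blocks) while keeping the constants matching $4d(2d+3)(d+3)$ is the real work; I expect $4d$ to come from block size ($\le d$) times a factor $4$ slack, $(2d+3)$ from $\abs{U}/k \approx 2d+6$ tightened, and $(d+3)$ from the $t\le d+1$ bound in the bypassing rule, so the arithmetic should close if the segmenting of degree-$2$ paths is done with care.
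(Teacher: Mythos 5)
Your overall strategy is the one the paper uses: argue in the contrapositive on the block tree of $G-U$, use the component and cut vertex rules to force every leaf block to send an edge to $U$, bound the branching structure by the number of such attachments (at most $\ell\abs{U}$ once all degrees are at most $\ell$), and use the bypassing rule to bound the chains of blocks avoiding $U$. The paper formalizes the ``few branching nodes, few chains'' step with a red--blue colouring of the block tree and a least-common-ancestor counting lemma, but your leaves-plus-pigeonhole dichotomy amounts to the same count, so there is no real divergence of method.

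The genuine gap is the point you flag and then leave unresolved: the sizes of the blocks of $G-U$. Your fallback via the hypothetical solution $S$ does not work --- knowing that at most $k$ blocks of $G-U$ fail to lie in $\cP\cap\cB_{2,d}$ gives no bound on how many vertices those blocks contain, so your final count $\abs{V(G-U)}=\cO(dk\ell)\cdot d+(\text{oversized blocks})$ cannot be closed. The resolution is much simpler and is what the paper relies on: $U$ is the output of the $(2d+6)$-approximation algorithm, hence is itself a feasible deletion set, so $G-U\in\Phi_{\cP\cap\cB_{2,d}}$ and \emph{every} block of $G-U$ has at most $d$ vertices; this is exactly what licenses the step that $G-U$ has at least $\abs{V(G-U)}/d$ blocks. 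Relatedly, your segmentation of the degree-$2$ paths is stated incorrectly: a segment of $d$ consecutive blocks whose interior avoids $U$ always has at least $d+1$ vertices, so the correct consequence of Reduction Rule~\ref{rule:bypassing} is that such a segment cannot survive the reduction --- every maximal chain of degree-$2$ blocks avoiding $U$ has at most $d-1$ blocks and hence at most $d$ vertices --- not that a $d$-block segment ``has $\le d$ vertices total''. With these two fixes the count does close within the stated threshold $4d(2d+3)(d+3)k\ell$, but as written the proposal neither bounds the block sizes nor carries out the arithmetic, so the conclusion that some vertex has degree at least $\ell+1$ is not actually established.
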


%%%%
\appendixproof{Lemma~\ref{lem:largedegree}}
{
We first require the following lemma. % is needed to show Lemma~\ref{lem:largedegree}.

\begin{lemma}\label{lem:rootedtree}
Let $T$ be a rooted tree, and let $R\subseteq V(T)$ with $\abs{R}\ge 1$. 
Let $R'$ be the set of all nodes in $T$ that are the least common ancestor of two vertices in $R$.
Then $\abs{R'}\le  \abs{R}-1$.
\end{lemma}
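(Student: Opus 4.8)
The plan is to prove the statement by induction on $\abs{R}$. For the base case $\abs{R}=1$, there are no two distinct vertices of $R$, so $R'=\emptyset$ and the bound holds with equality; here I interpret ``least common ancestor of two vertices'' as requiring the two vertices to be distinct, which is forced in any case for the bound to be meaningful. For the inductive step, assume $\abs{R}\ge 2$ and that the claim holds for all smaller sets. I would fix an arbitrary $v\in R$, set $R_0:=R\setminus\{v\}$, and let $R_0'$ denote the set of least common ancestors of pairs of distinct vertices of $R_0$; by the inductive hypothesis $\abs{R_0'}\le\abs{R_0}-1=\abs{R}-2$. It then suffices to show that at most one ``new'' ancestor is created, i.e.\ $\abs{R'\setminus R_0'}\le 1$, since then $\abs{R'}\le\abs{R_0'}+1\le\abs{R}-1$.

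The key observation is that every element of $R'\setminus R_0'$ has the form $\operatorname{lca}(v,u)$ for some $u\in R_0$, and every such vertex is an ancestor of $v$. Since the ancestors of $v$ (the vertices on the path from $v$ to the root of $T$) are totally ordered by the ancestor relation, the set $L:=\{\operatorname{lca}(v,u):u\in R_0\}$ is a chain; let $w$ be its deepest (furthest-from-root) element. I claim that every element of $L$ other than $w$ already lies in $R_0'$, which gives $R'\setminus R_0'\subseteq\{w\}$ and completes the induction.

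To prove the claim, take $a\in L$ with $a\neq w$, say $a=\operatorname{lca}(v,u_a)$ and $w=\operatorname{lca}(v,u_w)$ with $u_a,u_w\in R_0$; then $a$ is a strict ancestor of $w$. Let $X$ be the child-subtree of $a$ containing $w$. Since $w=\operatorname{lca}(v,u_w)$, both $v$ and $u_w$ lie in the subtree rooted at $w$, hence in $X$. On the other hand, $\operatorname{lca}(v,u_a)=a$ forces $u_a$ to be either $a$ itself or to lie in a child-subtree of $a$ different from the one containing $v$; in either case $u_a\notin X$. Therefore $u_w\in X$ and $u_a\notin X$, so $\operatorname{lca}(u_a,u_w)=a$, witnessing $a\in R_0'$.

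The main point to get right is this last step: carefully tracking which child-subtrees of $a$ contain $v$, $u_a$, and $u_w$. Once the chain structure of $L$ is established, it is a short case analysis, and degenerate situations (for instance $\abs{L}=1$, or some $\operatorname{lca}(v,u)$ coinciding with a vertex of $R$ or of $R_0'$) cause no difficulty since we only bound cardinalities from above; the particular choice of $v\in R$ is immaterial.
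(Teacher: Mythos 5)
Your proof is correct, but it follows a genuinely different inductive scheme than the paper. You induct by deleting a single element $v$ of $R$ and showing that at most one new least common ancestor can appear: every element of $R'\setminus R_0'$ is of the form $\operatorname{lca}(v,u)$ with $u\in R_0$, these all lie on the root-path of $v$ and hence form a chain, and your child-subtree case analysis (placing $v$, $u_w$ inside the child-subtree $X$ of $a$ containing $w$, and forcing $u_a\notin X$, so that $\operatorname{lca}(u_a,u_w)=a$) correctly shows that every non-deepest element of that chain is already an LCA of a pair inside $R_0$. The paper instead performs a divide-and-conquer induction: it takes the minimal subtree containing $R$, rooted at $z$, splits $R$ among the child subtrees $T_1,\dots,T_s$ that meet $R$, applies the hypothesis to each $R\cap V(T_i)$, and absorbs all cross-subtree LCAs (which all equal $z$) into a single ``$+1$'' term, with a small extra argument when $s=1$ using that $z\in R$ by minimality. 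Your approach buys a clean ``each added point creates at most one new branch vertex'' statement (essentially the Steiner-tree viewpoint) at the cost of the chain/child-subtree case analysis; the paper's approach avoids tracking individual LCAs but must verify that $R'\cap V(T_i)$ coincides with the LCA set of $R\cap V(T_i)$ and handle the degenerate single-subtree case. Both interpret ``two vertices'' as two distinct vertices, as you note, which is forced for the stated bound to hold.
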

\begin{proof}
The proof is by induction on $\abs{R}$.
  For each node $w$, the \emph{subtree rooted at $w$ in $T$} is the subtree of $T$ induced by $w$ and all its descendants.
Choose a minimal subtree $T'$ rooted at some $z$ containing all nodes in $R$.
If $T'$ contains precisely one node in $R$, then it contains no nodes in $R'$, by definition, and therefore $\abs{R'}=0\le \abs{R}-1$.
So we may assume that $\abs{R}\ge 2$.

Let $z_1, \ldots, z_s$ be the children of $z$ for which the subtree $T_i$ rooted at $z_i$ contains at least one node in $R$, where $i\in \{1, \ldots, s\}$.
Observe that each subtree $T_i$ satisfies $|R \cap V(T_i)| < |R|$, where, in the case that $s=1$, this is because $z \in R$ by the minimality of $T'$.
Hence, by the induction hypothesis, $\abs{R'\cap V(T_i)}\le \abs{R\cap V(T_i)}-1$ for each $i\in \{1, \ldots, s\}$.
Thus, if $s \geq 2$, \[\abs{R'}\le 1+ \bigcup_{1\le i\le s} \abs{R'\cap V(T_i)} \le 1 +  \bigcup_{1\le i\le s} ( \abs{R\cap V(T_i)}-1)\le \abs{R}-1.\]
Otherwise, $s=1$ and $z \in R$, so $|R'| \leq 1 +  ( \abs{R\cap V(T_1)}-1)\le  \abs{R}-1.$
%We conclude that $\abs{R'}\le \abs{R}-1$.
\end{proof}

\begin{proof}[Proof of Lemma~\ref{lem:largedegree}]
Suppose that $\abs{V(G)}\ge 4d(2d+3)(d+3)k\ell$ and $G$ has no vertex of degree at least $\ell+1$.
Let $T$ be the union of the block trees of connected components of $G-U$.
We color some of the nodes of $T$ as follows:
for each cut vertex $v$ of $G-U$, color $v$ red if $v$ has a neighbor in $U$; and
for each block $B$ of $G-U$, color $B$ red if $B$ contains a vertex that is not a cut vertex in $G-U$ and has a neighbor in $U$.
Observe that the number of red nodes in $T$ is at most  $2(d+3)k\ell$, since $|U| \leq 2(d+3)k$.
Arbitrarily pick a root node for each block tree.
Now, for every pair of two red vertices, color the least common ancestor in $T$ red. For all nodes that have not been colored red, color them blue. Let $R$ be the set of all red nodes in $T$. Note that $\abs{R}\le 4(d+3)k\ell$ by Lemma~\ref{lem:rootedtree}.

%First, we claim that $T-R$ has no nodes of degree at least~$3$.
%Towards a contradiction, suppose that $T'$ is a component of $T-R$ containing a node $w$ of degree at least~$3$.
%Then there are at least two components of $T'-w$ consisting of descendants of $w$ in $T'$.
%If one of the components has no red nodes, then this contradicts our assumption that $(G,d,k)$ is reduced under \cref{rule:cutvertex}.
%Thus, all the components contain red nodes, so $w$ is also colored red, by construction.

First, we claim that $T$ has no blue nodes of degree at least~$3$.
%Towards a contradiction, suppose that $T'$ is a component of $T-R$ containing a node $w$ of degree at least~$3$.
Suppose $T$ has a blue node $w$ of degree at least $3$.
Then there are at least two connected components of $T-w$ consisting of descendants of $w$ in $T$.
If one of the connected components has no red nodes, then this contradicts our assumption that $(G,d,k)$ is reduced under \cref{rule:cutvertex}.
Thus, all the connected components contain red nodes, so $w$ is also colored red, by construction.

Now, we claim that the number of connected components of $T-R$ is at most $4(d+3)k\ell$.
We obtain a forest $F$ from $T$ by contracting each maximal monochromatic subgraph $X$ of $T$ into one node with the same color as the nodes of $X$.
Note also that all leaf nodes in $F$ are colored red.
For each connected component~$F'$ of $F$, let $R'$ be the red nodes in $F'$ and let $B'$ be the blue nodes in $F'$, and arbitrarily choose a root node.  
Note that there is an injective mapping from $B'$ to $R'$
that sends a node to one of its children.
It follows that the number of blue nodes is at most the number of red nodes in $F$, 
and thus the number of connected components in $T-R$ is at most $\abs{R}\le 4(d+3)k\ell$.

Note that the number of blocks in $G-U$ is at least $\frac{\abs{V(G-U)}}{d}$, and thus
the number of nodes in $T$ is at least $\frac{\abs{V(G-U)}}{d}$.
As $\abs{V(G)}\ge 4d(2d+3)(d+3)k\ell$, there is a connected component~$B$ of $T-R$ where
\[ \abs{V(B)}\ge \frac{ \frac{\abs{V(G-U)}}{d}-\abs{R}}{4(d+3)k\ell}\ge 
 \frac{\abs{V(G-U)}-4d(d+3)k\ell}{4d(d+3)k\ell}\ge 2d+2.\]
However, this blue connected component with $2d+2$ vertices can be reduced by \cref{rule:bypassing}; a contradiction.
We conclude that if $(G,d,k)$ is a \YES-instance and $\abs{V(G)}\ge 4d(2d+3)(d+3)k\ell$, then $G$ has a vertex of degree at least~$\ell+1$.
%Since $\abs{V(G)}\ge 5d(2d+3)(d+3)^2k^2\ell^2$, 
%\[\abs{V(T)}\ge \abs{V(G)}-2(d+3)k\ell\]
\end{proof}
}

% For all graphs;
%\begin{lemma}\label{lem:largedegree}
%Let $(G, d, k)$ be an instance reduced under Reduction Rules~\ref{rule:blockcomponent}, \ref{rule:cutvertex}, and \ref{rule:bypassing}.
%If $(G, d, k)$ is a \YES-instance and $\abs{V(G)}\ge 2k(d+4)(4d\ell + 1)$, then $G$ contains a vertex of degree at least $\ell$.
%\end{lemma}

%============Sunflower

Now, we discuss a ``sunflower structure'' that allows us to find a vertex that can be safely removed.
%A similar technique was used in the quadratic kernel for \textsc{Feedback Vertex Set}~\cite{Thomasse2009}, and kernels for \BGVD~\cite{Agrawal,KimK2015}.
A similar technique was used in \cite{Agrawal,KimK2015,Thomasse2009};
there, Gallai's $A$-path Theorem is used to find many obstructions whose pairwise intersections are exactly one vertex;
%In those papers, Gallai's $A$-path Theorem is used to find many obstructions whose pairwise intersections are exactly one vertex;
here, we use different objects to achieve the same thing.

Let $A \subseteq V(G)$ and let $d \geqslant 2$.
An \emph{$(A,d)$-tree} in $G$ is a tree subgraph of $G$ on at least~$d$ vertices whose leaves are contained in $A$.
Let $v$ be a vertex of $G$. If there is an $(N_G(v), d)$-tree $T$ in $G-v$, 
then $G[V(T)\cup \{v\}]$ is a 2-connected graph with at least~$d+1$ vertices.
%Thus, we need to remove at least one vertex from $V(T)\cup \{v\}$.
This implies that if there are $k+1$ pairwise vertex-disjoint $(N_G(v), d)$-trees in $G-v$, 
then we can safely remove $v$, as any solution should contain $v$.

We prove that if $G$ does not have any set of $k+1$ pairwise vertex-disjoint $(A, d)$-trees, then 
there exists $S \subseteq V(G)$ where the size of $S$ is bounded by a function of $k$ and $d$, and every connected component of $G-S$ has fewer than $d$ vertices of $A$.
Note that $G-S$ may still have some $(A,d)$-trees, as a path of length $d-1$ between two vertices in $A$ is  also an $(A,d)$-tree.
%But for our purpose, this is sufficient.
%This follows from the fact that if a connected graph $G$ contains a set of vertices $A$ of size at least~$d$, then $G$ contains an $(A,d)$-tree.
 
\begin{proposition}\label{prop:dblockdeletion}
  Let $G$ be a graph, let $k$ and $d$ be positive integers, and let $A \subseteq V(G)$.
  There is an algorithm that, in time $\mathcal{O}(d\abs{V(G)}^3)$, finds either:
\begin{enumerate}[\rm (i)]
\item $k$ pairwise vertex-disjoint $(A,d)$-trees in $G$, or
\item a vertex subset $S \subseteq V(G)$ of size at most $2(2k-1)(d^2-d+1)$ such that each connected component of $G- S$ contains 
  fewer than $d$ vertices of $A$.\label{case2}
\end{enumerate}
\end{proposition}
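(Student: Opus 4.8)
The plan is to prove this as a packing-versus-covering (Erd\H{o}s--P\'osa type) statement. The engine is the following observation, which I would record first: if a connected subgraph $H$ of $G$ contains at least $d$ vertices of $A$, then $H$ contains an $(A,d)$-tree. Indeed, take a spanning tree $R$ of $H$, pick any $d$ vertices of $V(H)\cap A$, and let $T$ be the minimal subtree of $R$ meeting all of them; every leaf of $T$ lies in $A$ (a leaf outside $A$ could be deleted) and $T$ has at least $d$ vertices. Contrapositively, once a set of vertices has been removed from $G$ so that no $(A,d)$-tree remains, every connected component of what is left automatically has fewer than $d$ vertices of $A$. Hence option~(ii) amounts to destroying all $(A,d)$-trees while deleting few vertices, and the whole difficulty is controlling the number of deleted vertices.

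The second ingredient is a size bound on \emph{minimal} $(A,d)$-trees, where I call an $(A,d)$-tree $T$ minimal if no $(A,d)$-tree has fewer vertices. The claim is that a minimal $(A,d)$-tree either has at most $2(d^2-d+1)$ vertices, or is a path whose two ends are its only vertices in $A$. First, $\abs{V(T)\cap A}\le d$: otherwise choose $d$ vertices of $V(T)\cap A$ avoiding some leaf of $T$, and their minimal spanning subtree of $T$ is a strictly smaller $(A,d)$-tree. If $T$ has a branch vertex, then it has at most $d$ leaves (again by a shrinking argument), hence at most $2d-2$ ``essential'' vertices (leaves, branch vertices, and $A$-vertices together), hence at most $2d-3$ maximal degree-$2$ segments between consecutive essential vertices; and each such segment has length $\mathcal{O}(d)$, because otherwise extending it to an $A$-vertex on each side would produce a proper sub-path of $T$ with at least $d$ vertices and both ends in $A$, i.e.\ a strictly smaller $(A,d)$-tree. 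A short computation then gives $\abs{V(T)}\le 2(d^2-d+1)$. If $T$ is a path with an internal $A$-vertex, then, by the same sub-path argument, consecutive $A$-vertices along $T$ are within distance $d-1$ and there are at most $d$ of them, so again $\abs{V(T)}\le 2(d^2-d+1)$. This leaves only the case of a path with exactly two $A$-vertices, namely its ends.

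Given these, the algorithm proceeds in rounds, maintaining the deleted set $S$ (initially empty) and a family $\mathcal{T}$ of pairwise vertex-disjoint $(A,d)$-trees (initially empty). In each round, set $G'=G-S$. If $G'$ has no $(A,d)$-tree, stop and return $S$ for (ii). Otherwise find a minimal $(A,d)$-tree $T$ of $G'$. If $\abs{V(T)}\le 2(d^2-d+1)$, add $T$ to $\mathcal{T}$ and its vertices to $S$; if $\abs{\mathcal{T}}=k$, stop and return $\mathcal{T}$ for (i). If $\abs{V(T)}>2(d^2-d+1)$, then by the structural claim $T$ is a path $v_0\cdots v_\ell$ whose only $A$-vertices are $v_0$ and $v_\ell$; put $v_0$ and $v_\ell$ into $S$. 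Every round adds at most $2(d^2-d+1)$ vertices to $S$, and the correctness of both outputs is immediate from the observation of the first paragraph.

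It remains to bound the number of rounds by $2k-1$, and this is the step I expect to be the main obstacle. At most $k$ rounds add a tree to $\mathcal{T}$. For a ``long-path'' round the key point is that when it occurs \emph{every} $(A,d)$-tree of $G'$ is long; using the first-paragraph observation together with the Steiner-subtree estimate above, one shows this forces the vertices of $A$ in $G'$ to lie in well-separated clusters, each of diameter less than $d-1$ and each containing fewer than $d$ vertices of $A$. Deleting $v_0,v_\ell$ removes an $A$-vertex from two such clusters, and a counting argument shows that if more than $k-1$ long-path rounds were required then the long paths encountered could be rerouted to exhibit $k$ pairwise vertex-disjoint $(A,d)$-trees running between distinct clusters, contradicting that no round returned $\mathcal{T}$. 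Hence at most $k+(k-1)=2k-1$ rounds occur and $\abs{S}\le 2(2k-1)(d^2-d+1)$. For the running time, finding a minimal $(A,d)$-tree (or certifying that none exists) reduces to all-pairs shortest-path information together with a bounded-depth search that accounts for the factor $d$, and a careful implementation that reuses this information across rounds yields the stated bound $\mathcal{O}(d\abs{V(G)}^3)$.
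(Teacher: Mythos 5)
There is a genuine gap, and it is exactly where you flag it: the claim that at most $k-1$ ``long-path rounds'' can occur (otherwise the long paths can be rerouted into $k$ pairwise vertex-disjoint $(A,d)$-trees) is false, so your algorithm does not meet the size bound. Consider a spider: a center $c \notin A$ with $m$ legs, each leg a path of length $d^2$ whose tip is in $A$, and $A$ consisting exactly of the $m$ tips. Every $(A,d)$-tree has all leaves among the tips and hence contains $c$, so for $k=2$ there are no two disjoint $(A,d)$-trees and the correct output is (ii), e.g.\ $S=\{c\}$. But every minimal $(A,d)$-tree in this graph is a tip--$c$--tip path with more than $2(d^2-d+1)$ vertices whose only $A$-vertices are its two ends, so each round of your procedure deletes two tips and leaves the hub $c$ and all legs intact. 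The procedure performs $\lfloor m/2\rfloor$ long-path rounds, ends with $\abs{S}=m$ (unbounded in $k$ and $d$), and never produces two disjoint trees; the ``rerouting'' step cannot work because all the long paths share $c$. The structural defect is that your rules only ever delete $A$-vertices or the vertices of small trees, so they can never cut at a high-degree hub that lies far from $A$, yet such hubs are precisely what a small separating set must contain.

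The paper's proof avoids this by a different mechanism: it greedily grows a forest $H$ of $(A,d)$-trees of maximum degree at most $d$, attaching shortest paths from $A$ to the current forest, and the candidate deletion set is $(V(H)\cap A)\cup S$ where $S$ is the set of vertices of $H$ of degree other than $2$ --- so the \emph{branch vertices} of $H$ (the hubs, which need not be in $A$) are deleted as well, and their number is bounded via the leaves-versus-internal-vertices count (Lemma~\ref{lem:numberofleaves}). When instead $\abs{V(H)\cap A}$ reaches $(2k-1)(d^2-d+1)$, Lemma~\ref{lem:adtreesintree} packs $k$ disjoint $(A,d)$-trees inside the bounded-degree forest. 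Your first observation (a connected subgraph with at least $d$ vertices of $A$ contains an $(A,d)$-tree, via a Steiner subtree of a spanning tree) and your bound on minimal $(A,d)$-trees are fine, but to repair the argument you would need to incorporate something playing the role of the branch vertices of a grown forest into $S$; deleting only endpoints of long paths cannot be made to work.
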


%%%%
\appendixproof{Proposition~\ref{prop:dblockdeletion}}
{
%\begin{proof}
%\begin{PROP}\label{prop:dblockdeletion}
%Let $k, d$ be positive integer, and let $A$ be a vertex set of a graph $G$.
%Then in time $\mathcal{O}(d\abs{V(G)}^3)$, we can find either
%\begin{enumerate}[(1)]
%\item $k$ pairwise vertex-disjoint $(A,d)$-trees, or
%\item a vertex set $S$ of size at most $2(k+1)(d^2-d+1)$ such that each component of $G- S$ contains 
 %less than $d$ vertices of $A$.
%\end{enumerate}
%\end{PROP}

We require the following lemmas.

% and $\abs{A}\ge d$
 
\begin{lemma}\label{lem:adtreesintree}
Let $k$ and $d$ be positive integers with $d\ge 3$.
Let $T$ be a tree with maximum degree~$d$, and let $A\subseteq V(T)$.
If $\abs{A}\ge k(d^2-d+1)$, then 
there is an algorithm that finds $k$ pairwise vertex-disjoint $(A, d)$-trees in $T$, in time $\mathcal{O}(k\abs{V(T)})$.
\end{lemma}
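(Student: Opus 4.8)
\medskip

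The plan is to extract the $(A,d)$-trees one at a time by a greedy bottom-up traversal of $T$. First I would root $T$ at an arbitrary vertex, and process the vertices in order of non-increasing depth. At each vertex $v$, I maintain the subtree $T_v$ rooted at $v$ restricted to those vertices not yet assigned to a previously-extracted $(A,d)$-tree, and I keep track of $a_v$, the number of vertices of $A$ still present in this pruned $T_v$. When processing $v$, I combine the surviving contributions of its children: if the pruned subtree hanging below $v$ ever accumulates at least $d$ vertices, I cut it off as one of the output trees (it has its leaves in $A$ — or can be pruned to have its leaves in $A$, since any leaf not in $A$ can be deleted while keeping $\ge d$ vertices only if... — more carefully, I take a minimal subtree on exactly $d$ vertices and then repeatedly delete non-$A$ leaves; this may drop below $d$, so instead I should grow the extracted tree to contain $d$ \emph{vertices of $A$}... ). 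Let me restate the strategy more cleanly below.

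\medskip

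\textbf{The core counting idea.} The bound $\abs{A}\ge k(d^2-d+1)$ suggests the following accounting. Each time I extract an $(A,d)$-tree, I want to ``use up'' at most $d^2-d+1$ vertices of $A$ (counting both the vertices of $A$ placed into the extracted tree and the vertices of $A$ that get stranded and become unusable). Concretely: root $T$; walk up from the leaves; at a vertex $v$ let $A_v$ be the set of still-available $A$-vertices in the pruned subtree at $v$. Because $T$ has maximum degree $d$, the subtree at $v$ has at most $d-1$ children-subtrees (if $v$ is the root, at most $d$). I maintain the invariant that when I finish processing $v$ without extracting, $\abs{A_v}\le d^2-d$; the moment $\abs{A_v}$ would reach $d^2-d+1$, I extract a subtree containing $\ge d$ vertices (which is possible since $d^2-d+1 \ge d$ for $d\ge 3$ — actually I extract along a path or small subtree realising $\ge d$ vertices total, prioritising $A$-vertices so that at least one $A$-vertex is consumed and the bookkeeping closes). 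The key inequality is that combining $\le d-1$ child-values each $\le d^2-d$ plus the vertex $v$ itself can overshoot $d^2-d$ only by a controlled amount, and I extract before the stranded count grows too large; a careful choice shows that each extraction is ``charged'' at most $d^2-d+1$ vertices of $A$, so I get at least $\abs{A}/(d^2-d+1)\ge k$ disjoint trees.

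\medskip

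\textbf{Running time.} The traversal is a single post-order pass over $T$; at each vertex the work is proportional to its degree (at most $d$) plus the size of any tree extracted there, and since the extracted trees are vertex-disjoint the total extraction work is $\mathcal{O}(\abs{V(T)})$. Across the at most $k$ extractions and the $\mathcal{O}(\abs{V(T)})$ vertices this gives $\mathcal{O}(k\abs{V(T)})$, or even $\mathcal{O}(\abs{V(T)})$ after the first $k$ trees are found and we stop; I will state it as $\mathcal{O}(k\abs{V(T)})$ to match the claim.

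\medskip

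\textbf{Main obstacle.} The delicate point is the amortised charging argument: I must choose exactly \emph{when} to declare that the subtree below the current vertex has become ``heavy enough'' to cut off as an $(A,d)$-tree, and \emph{which} subtree to cut, so that (a) the cut tree genuinely has $\ge d$ vertices with all leaves in $A$ (which forces me to prune non-$A$ leaves and re-grow, or equivalently to track a potential measuring ``$A$-vertices plus the minimum padding needed''), and (b) the number of $A$-vertices consumed-or-stranded per extraction is at most $d^2-d+1$. Getting the constant exactly $d^2-d+1$ rather than something slightly larger is where the maximum-degree-$d$ hypothesis must be used sharply: a vertex with $d-1$ children, each contributing a ``nearly full'' pruned count of $d^2-d$, is the tight case, and the $+1$ accounts for $v$ itself or for a single unavoidable stranded vertex. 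Making this potential-function bookkeeping precise, and handling the root (degree up to $d$) and small-$d$ boundary cases, is the bulk of the work; the tree surgery and the running-time analysis are then routine.
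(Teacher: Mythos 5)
Your overall strategy---a bottom-up pass over a rooted tree with an amortised charge of at most $d^2-d+1$ vertices of $A$ per extracted $(A,d)$-tree, using the maximum-degree bound to control the charge---is the same idea as the paper's proof, and your fix for the leaves-in-$A$ issue (grow the extracted tree until it contains $d$ vertices of $A$ and take the minimal subtree spanning them) is exactly right. But the bookkeeping you propose does not close, and you concede it is ``the bulk of the work.'' Concretely, your trigger threshold is wrong: you allow a pruned subtree to accumulate up to $d^2-d$ available $A$-vertices without extracting, and only cut when the count would reach $d^2-d+1$. At the moment of combination, a vertex $v$ with up to $d-1$ children whose pruned subtrees each hold up to $d^2-d$ available $A$-vertices produces a count of up to $(d-1)(d^2-d)+1$, on the order of $d^3$. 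A single extraction at $v$ then either strands (if you discard the whole subtree) or leaves unaccounted (if you keep the remainder, which may be disconnected from the rest of $T$ once the extracted vertices are removed) roughly $d^3$ vertices of $A$, so the per-extraction charge is not $d^2-d+1$, and the hypothesis $\abs{A}\ge k(d^2-d+1)$ no longer yields $k$ trees.

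The missing idea is to trigger the cut at the much lower threshold $d$: choose a \emph{lowest} node $t$ whose subtree contains at least $d$ vertices of $A$ (counting $t$ itself), so that every child subtree of $t$ contains at most $d-1$ of them. Since $t$ has at most $d$ children, its subtree contains at most $d(d-1)+1=d^2-d+1$ vertices of $A$; extract the minimal subtree spanning $d$ of these $A$-vertices (it has at least $d$ vertices and all leaves in $A$), delete the \emph{entire} subtree rooted at $t$, and recurse on the remaining tree, which still contains at least $(k-1)(d^2-d+1)$ vertices of $A$. This is precisely the paper's argument; each round is one pass computing the number of $A$-descendants of every node, giving the claimed $\mathcal{O}(k\abs{V(T)})$ running time. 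With your threshold of $d^2-d+1$ the charge cannot be salvaged without extracting several trees per cut and then bounding the vertices of $A$ stranded by each removal, which is exactly the accounting you left open.
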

\begin{proof}
If $k=1$, then this is trivial because $d\ge 3$. We assume that $k\ge 2$.
We choose a root node of $T$ that is not a leaf. %, and give a direction on all edges of $T$ so that there is a directed path from each node to the root.
For each node $t$ in $T$, let $w(t)$ be the number of descendants of $t$ in $A$, where $t$ is considered a descendant of itself.
We can compute the value of $w(t)$ for each $t \in T$ in time $\mathcal{O}(\abs{V(T)})$.

As $\abs{A}\ge k(d^2-d+1)\ge d$, there exists a node $t$ in $T$ with $w(t)\ge d$.
Choose such a node where $w(t_1)<d$ for every child $t_1$ of $t$.
Since $T$ has maximum degree~$d$, we have $w(t)\le d(d-1)+1=d^2-d+1$.
Clearly the subtree rooted at $t$ contains an $(A,d)$-tree.
Let $T'$ be the connected component of $T- t$ containing the parent of $t$ in $T$.
%by contracting an edge incident with $t'$ if $t'$ has degree $2$ after removing $t$.
Then $T'$ has at least $k(d^2-d+1)-(d^2-d+1)\ge (k-1)(d^2-d+1)$ nodes in $A$.
Repeating the same procedure on $T'$, we can find  $k-1$ pairwise vertex-disjoint $(A,d)$-trees in $T'$.
Thus, we can return $k$ pairwise vertex-disjoint $(A, d)$-trees of $T$ in time $\mathcal{O}(k\abs{V(T)})$.
\end{proof}

\begin{lemma}\label{lem:numberofleaves}
Let $T$ be a tree with no vertices of degree~$2$.
If $A$ is the set of all leaves of $T$, 
then $\abs{A}\ge \abs{V(T)\setminus  A}-2$.
\end{lemma}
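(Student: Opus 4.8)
The plan is to prove the (slightly stronger) inequality $\abs{A} \ge \abs{V(T)\setminus A} + 2$ whenever $T$ has at least two vertices, via a standard degree-counting argument, and to dispose of the one-vertex case separately.

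First I would record the degree dichotomy: since $T$ has no vertex of degree~$2$, every vertex of $T$ has degree~$1$, in which case it is a leaf and hence in $A$, or degree at least~$3$ --- with the sole exception of the isolated vertex, which has degree~$0$. So if $\abs{V(T)}=1$, then $A=\emptyset$ and $\abs{V(T)\setminus A}=1$, and the claimed bound $\abs{A}\ge \abs{V(T)\setminus A}-2$ reads $0\ge -1$, which holds.

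Now assume $\abs{V(T)}\ge 2$. Then $T$ is connected on at least two vertices, so it has no isolated vertex, and hence every vertex in $V(T)\setminus A$ has degree at least~$3$. Write $a := \abs{A}$ and $b := \abs{V(T)\setminus A}$, so that $\abs{V(T)} = a+b$ and, since $T$ is a tree, $\abs{E(T)} = a+b-1$. By the handshake lemma,
\[
2(a+b-1) \;=\; \sum_{v\in V(T)} \deg_T(v) \;\ge\; 1\cdot a + 3\cdot b ,
\]
and rearranging gives $a \ge b+2$, which is in particular at least $b-2$. This proves the lemma.

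I do not expect any real obstacle: the argument is a one-line consequence of the handshake lemma once the degree dichotomy is noted. The only point requiring care is the degenerate single-vertex tree, where the unique vertex has degree~$0$ rather than degree at least~$3$; this is precisely the case that forces the weaker additive constant $-2$ in the statement rather than $+2$.
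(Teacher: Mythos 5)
Your proof is correct and follows essentially the same route as the paper: both bound the degree sum from below by $\abs{A}+3\abs{V(T)\setminus A}$ using the fact that non-leaves have degree at least~$3$, and compare it with $2\abs{E(T)}=2(\abs{V(T)}-1)$ via the handshake lemma. Your write-up is in fact a little more careful than the paper's (which contains the slip $\abs{V(T)}=\abs{E(T)}-1$ and ignores the one-vertex tree, where the degree bound on non-leaves fails): you isolate that degenerate case and obtain the stronger bound $\abs{A}\ge\abs{V(T)\setminus A}+2$ whenever $\abs{V(T)}\ge 2$.
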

\begin{proof}
We note that
\begin{itemize}
\item $\abs{A}+\abs{V(T)\setminus  A}=\abs{V(T)}=\abs{E(T)}-1$, and
\item $\abs{A}+ 3\abs{V(T)\setminus  A}\le \sum_{t\in V(T)}d_T(t) = 2\abs{E(T)}$.
\end{itemize}
Combining the two equations, we have that 
$\abs{A}\ge \abs{V(T)\setminus A}-2$, as required.
\end{proof}

%, and 
%\item all leaves of $H_i$ are contained in $A$, 
%\begin{itemize}
%\item 
\begin{proof}[Proof of Proposition~\ref{prop:dblockdeletion}]
We recursively construct a forest $H_i$ in $G$ such that 
each connected component of $H_i$ is an $(A, d)$-tree whose maximum degree is at most $d$,
until one of the following holds:
\begin{enumerate}[(1)]
\item $H_i$ consists of $k$ connected components.
\item $\abs{V(H_i)\cap A}\ge (2k-1)(d^2-d+1)$. 
\item For the set  $S_i$ of nodes in $H_i$ having degree other than $2$, 
every connected component of $G-((V(H_i)\cap A)\cup S_i)$ has fewer than $d$ vertices of $A$.
\end{enumerate}

\noindent
In cases (1) and (2), we will return $k$ pairwise vertex-disjoint $(A,d)$-trees,
and in case (3), we will return a set $S$ satisfying (\ref{case2}).

We start with an empty graph $H_1$.
Let $S_i$ be the set of all vertices of degree other than $2$ in $H_i$. 
For the $i$th iteration, 
choose a connected component~$C$ of $G- ((V(H_i)\cap A)\cup S_i)$ containing at least $d$ vertices of $A$.
If there is no such connected component, then we finish the procedure, as (3) holds. So assume that such a connected component~$C$ exists.
If $V(C)\cap V(H_i)\neq \emptyset$, then 
we choose a shortest path $P$ from $A\cap V(C)$ to $V(H_i)\cap V(C)$, and 
let $H_{i+1}:=H_i\cup P$.
As vertices in $A\cap V(C)$ are not contained in $H_i$, 
$\abs{V(H_{i+1})\cap A}\ge \abs{V(H_i)\cap A}+1$.
Also, the maximum degree of $H_i$ will not change as $P$ will end with a node of degree~$2$ in $H_i$.

Now, assume that $V(C)\cap V(H_i)= \emptyset$.
In this case, we find an $(A,d)$-tree in $C$ that is disjoint from $H_i$.
We choose a vertex $s\in V(C)\cap A$, and 
let $Q_1$ be the graph that consists of $s$.
For each $j\ge 2$, we recursively find a shortest path $P_i$ from $V(Q_{j-1})$ to $(V(C)\cap A)\setminus V(Q_{j-1})$ 
and let $Q_{j}:=Q_{j-1}\cup P_i$.
It is not hard to see that $Q_1$ has maximum degree~$0$, 
and for all $i\in \{2, \ldots, d\}$, $Q_i$ has maximum degree~$d$.
Also, all leaves of $Q_d$ are contained in $A$ and $\abs{V(Q_d)\cap A}=d$.
Thus, $Q_d$ is an $(A,d)$-tree.
We can compute $Q_d$ in time $\mathcal{O}(d \abs{V(G)}^2)$.
We set $H_{i+1}:=H_i\cup Q_d$.

%We choose a subset $\{s_1, \ldots, s_d\}$ of $V(C)\cap A$ with exactly $d$ vertices. 
%Starting from $s_1$ to $s_d$, 
%we construct an $(A, d)$-tree containing all vertices of $S$ as follows.

As each iteration strictly increases $V(H_i)\cap A$, 
this algorithm will terminate in at most $\abs{V(G)}$ iterations.
Let $H$ and $S$ be the final instances $H_i$ and $S_i$, respectively, prior to termination.

In case (1), each connected component of $H$ contains an $(A,d)$-tree, so
we can return $k$ pairwise vertex-disjoint $(A,d)$-trees.

Suppose we have case (2), so $\abs{V(H)\cap A}\ge (2k-1)(d^2-d+1)$.
Let $T_1, \ldots, T_h$ be the connected components of $H$.
We may assume that $h \le k-1$.
%let $P_j:=V(H_j)\cap A$ for each $1\le j\le h$.
%By the assumption, $h\le k-1$.
%By Lemma~\ref{lem:numberofleaves}, we know that the number of leaves in $G_j$ is at least $ \frac{\abs{P_j}}{2} -1$.
Applying the algorithm of Lemma~\ref{lem:adtreesintree} to $T_j$, for each $j \in \{1,\dotsc,h\}$,
we can return $\lfloor \frac{\abs{V(T_j)\cap A}}{d^2-d+1}\rfloor$ pairwise vertex-disjoint $(A, d)$-trees in time $\mathcal{O}(\frac{\abs{V(T_j)\cap A}}{d^2-d+1} \abs{V(T_j)})$.
Therefore, in this case, we can output
\[\sum_{1\le j\le h}\left( \frac{\abs{V(T_j)\cap A}}{d^2-d+1}-1 \right) \ge \frac{ (2k-1)(d^2-d+1) -h(d^2-d+1) }{d^2-d+1} \ge k\]
pairwise vertex-disjoint $(A, d)$-trees in time $\mathcal{O}(k\abs{V(G)})$.

We may now assume that case (3) holds, but case (2) does not, so $\abs{V(H)\cap A}< (2k-1)(d^2-d+1)$. By Lemma~\ref{lem:numberofleaves},
$\abs{S}\le \abs{V(H)\cap A}+2$, as all leaves of $H$ are contained in $A$.
Thus, we have
\[\abs{(V(H)\cap A)\cup S}\le 2\abs{V(H)\cap A}+2\le 2(2k-1)(d^2-d+1),\] and 
the set $(V(H)\cap A)\cup S$ satisfies (\ref{case2}).

The total running time of the algorithm is $\mathcal{O}(d\abs{V(G)}^3)$.
\end{proof}

%\end{proof}
}

\begin{RULE}[Sunflower rule~1]\label{rule:sunflower1}
Let $v$ be a vertex of $G$. 
If there are $k+1$ pairwise vertex-disjoint $(N_G(v), d)$-trees in $G-v$, then remove $v$ and reduce $k$ by $1$.
\end{RULE}

After exhaustively applying \cref{rule:sunflower1}, 
we may assume, by \cref{prop:dblockdeletion}, that for each $v\in V(G)$, there exists $S_v\subseteq V(G-v)$ with $\abs{S_v}\le 2(2k+1)(d^2-d+1)$ such that $v$ has at most $d-1$ neighbors in each connected component of $G-(S_v\cup \{v\})$.
%But it does not mean that for each component $C$ of $G-(S_v\cup \{v\})$, $G[V(C)\cup \{v\}]\in \cB_{\cld}$. 
In the remainder of this section, we use $S_v$ to denote such a set for any $v \in V(G)$.
To find many connected components of $G-(S_v\cup \{v\})$ where each connected component~$C$ has the property that $G[V(C)\cup \{v\}]\in \phi_{\cP\cap \cB_{2, d}}$, we apply the next two reduction rules.
%%For $v\in V(G)$, we define that 
%%   $$S_v =
%%    \begin{cases}
%%      S_v\cup (U\setminus \{v\}) & \text{if $v\in U$,} \\
%%      S_v\cup U & \text{otherwise.}\\
%%    \end{cases}$$
%%    Note that $\abs{S_v\cup \{v\}}\le 2(k+1)(d^2-d+1)+2(d+3)k\le 2(k+1)(d^2+4)$.
\begin{RULE}[Disjoint obstructions rule]\label{rule:disjointobs}
If there are $k+1$ connected components of $G-(S_v\cup \{v\})$ such that each connected component is not in $\phi_{\cP\cap \cB_{2, d}}$, 
then conclude that $(G,d,k)$ is a \NO-instance.
\end{RULE}

\begin{RULE}[Sunflower rule~2]\label{rule:sunflower2}
If there are $k+1$ connected components of $G-(S_v\cup \{v\})$ where each connected component~$C$ is in $\phi_{\cP\cap \cB_{2, d}}$ but $G[V(C)\cup \{v\}]\notin \phi_{\cP\cap \cB_{2, d}}$, 
then remove $v$ and decrease $k$ by $1$.
\end{RULE}

%This reduction rule is safe as each component of $G-(A_v\cup \{v\})\in \phi_{\cP\cap \cB_{2, d}}$. 
We can perform these two rules in polynomial time using the block tree of $G[V(C)\cup \{v\}]$.
Then
%By Reduction Rules~\ref{rule:disjointobs} and \ref{rule:sunflower2}, 
we may assume that $G-(S_v\cup \{v\})$ contains at most $2k$ connected components such that the connected component $C$ satisfies $G[V(C)\cup \{v\}]\notin  \phi_{\cP\cap \cB_{2, d}}$.
Thus, if $v$ has degree at least $\ell$, there are at least $\frac{\ell-2(2k+1)(d^2-d+1)}{d-1} -2k$ connected components of $G-(S_v\cup \{v\})$ such that the connected component~$C$ satisfies $G[V(C)\cup \{v\}]\in \phi_{\cP\cap \cB_{2, d}}$.
As $G$ is reduced under \cref{rule:cutvertex}, there is an edge between any such connected component~$C$ and $S_v$.
We introduce a final reduction rule, which uses the $\alpha$-expansion lemma~\cite{Thomasse2009}.

%For each vertex $v$, by Proposition~\ref{prop:dblockdeletion}, 
%there are at least $k+1$ distinct $2$-connected subgraphs with at least $d$ vertices whose pairwise intersections are exactly $v$,
%or a vertex set $S_v$ of size at most  $(k+1)(2(d-1)^2+2)$ where each component of $G- (\{v\}\cup S_v)$
%has at most $d-1$ neighbors of $v$.

%\begin{lemma}
%Let $G$ be a graph and $v\in V(G)$.
%There exists $S_v\subseteq V(G)$ with $\abs{S_v}\le$ satisfying that
%\begin{enumerate}
%\item 
%\end{enumerate} 
%\end{lemma}

%As $G-U$ is a $d$-block graph, we can take the weighted block decomposition of $G-U$.

\begin{lemma}[$\alpha$-expansion lemma]\label{lem:expansionlemma}
Let $\alpha$ be a positive integer, and let $F$ be a bipartite graph with vertex bipartition $(X, Y)$ such that $\abs{Y}\ge \alpha \abs{X}$ and every vertex of $Y$ has at least one neighbor in $X$. Then there exist non-empty subsets $X'\subseteq X$ and $Y'\subseteq Y$ and a function $\phi:X'\rightarrow \binom{Y'}{\alpha}$ such that 
\begin{itemize}
\item $N_F(Y')\cap X=X'$,
\item $\phi(x)\subseteq N_F(x)$ for each $x\in X'$, and  
\item the sets in $\{\phi(x):x\in X'\}$ are pairwise disjoint.
\end{itemize}
In addition, such a pair $X', Y'$ can be computed in time polynomial in $\alpha\abs{V(F)}$.
\end{lemma}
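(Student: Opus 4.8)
The plan is to prove the lemma by induction on $|X|$, which we may assume is nonempty (otherwise the hypothesis that every vertex of $Y$ has a neighbour in $X$ forces $Y=\emptyset$, and there is nothing to do). The key device is an auxiliary bipartite graph $F^\alpha$ obtained from $F$ by replacing each $x\in X$ with $\alpha$ twin copies $x^{(1)},\dots,x^{(\alpha)}$, each with neighbourhood exactly $N_F(x)$; the other side of $F^\alpha$ remains $Y$. Writing $X^\alpha$ for the set of all copies, we have $|X^\alpha|=\alpha|X|\le|Y|$ and every vertex of $Y$ still has a neighbour in $X^\alpha$. I would then ask, via Hall's theorem, whether $F^\alpha$ has a matching saturating $X^\alpha$.

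If such a matching $M$ exists, I set $\phi(x)$ to be the set of $M$-partners of $x^{(1)},\dots,x^{(\alpha)}$; these are pairwise disjoint $\alpha$-subsets of $Y$ with $\phi(x)\subseteq N_F(x)$. Taking $X':=X$ and $Y':=\bigcup_{x\in X}\phi(x)$ works: $N_F(Y')\cap X\subseteq X=X'$ trivially, while each $x\in X'$ lies in $N_F(Y')$ because $\phi(x)$ is a nonempty subset of $Y'$, giving the required equality $N_F(Y')\cap X=X'$.

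If no such matching exists, Hall's theorem supplies a set $Z\subseteq X^\alpha$ with $|N_{F^\alpha}(Z)|<|Z|$. Since the copies of a fixed vertex are twins, replacing $Z$ by all copies of every vertex of $X$ that $Z$ touches does not enlarge $N_{F^\alpha}(Z)$, so I may assume $Z$ is exactly the set of $\alpha$ copies of some $T\subseteq X$, whence $|N_F(T)|=|N_{F^\alpha}(Z)|<|Z|=\alpha|T|$. This forces $T\neq\emptyset$, and also $T\neq X$: if $T=X$ then $|N_F(X)|<\alpha|X|\le|Y|$ would leave some vertex of $Y$ with no neighbour in $X$, contradicting the hypothesis. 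I then delete $T\cup N_F(T)$. In the residual graph the sides are $X\setminus T$ (nonempty and strictly smaller than $X$) and $Y\setminus N_F(T)$; every vertex of $Y\setminus N_F(T)$ still has a neighbour in $X\setminus T$, and $|Y\setminus N_F(T)|=|Y|-|N_F(T)|>\alpha|X|-\alpha|T|=\alpha|X\setminus T|$. By the induction hypothesis I obtain $X'\subseteq X\setminus T$, $Y'\subseteq Y\setminus N_F(T)$ and $\phi$ with all the stated properties in the residual graph. Since $Y'$ avoids $N_F(T)$, no vertex of $Y'$ has a neighbour in $T$, so $N_F(Y')\cap X=N_F(Y')\cap(X\setminus T)=X'$, and $\phi$ together with its disjointness property transfers verbatim to $F$.

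For the algorithmic claim, each appeal to Hall's theorem is one maximum-matching computation in $F^\alpha$, which has size polynomial in $\alpha|V(F)|$; from a maximum matching that fails to saturate $X^\alpha$ one extracts the violating set $Z$ (hence $T$) in polynomial time by the usual alternating-path argument, and the recursion strictly decreases $|X|$, so the procedure halts after at most $|X|$ rounds. The proof is essentially routine once the auxiliary graph is set up, and the only point needing real care is the bookkeeping in the non-saturating case: checking that the Hall violator may be taken twin-closed (so that it genuinely comes from a set $T\subseteq X$), that $T$ is a proper nonempty subset so the recursion runs on a strictly smaller instance that still satisfies the hypotheses, and that the equality $N_F(Y')\cap X=X'$ survives the passage back from the residual graph — all of which hinge on $Y'$ lying outside $N_F(T)$.
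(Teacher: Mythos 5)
The paper does not prove this lemma at all: it is quoted from the literature (the expansion lemma of Thomass\'e, cited as~\cite{Thomasse2009}), so there is no internal proof to compare against. Your argument is correct, and it is essentially the standard proof of this statement: build the auxiliary graph with $\alpha$ twin copies of each vertex of $X$, apply Hall's theorem, take the matched partners as $\phi$ when the copies are saturated, and otherwise pass to a twin-closed Hall violator $T$ with $\abs{N_F(T)}<\alpha\abs{T}$, discard $T\cup N_F(T)$, and recurse. All the delicate points are handled: $T$ is a proper non-empty subset of $X$ (properness via $\abs{N_F(X)}<\alpha\abs{X}\le\abs{Y}$ contradicting that every vertex of $Y$ has a neighbour in $X$), the residual instance satisfies both hypotheses, and the equality $N_F(Y')\cap X=X'$ lifts back because $Y'$ avoids $N_F(T)$; the running-time claim via at most $\abs{X}$ maximum-matching computations in the copy graph is also fine. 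The only cosmetic remark is the degenerate case $X=\emptyset$, where the conclusion (non-empty $X'$, $Y'$) cannot hold; as you note, this case is implicitly excluded by the statement, and in the paper's application the set playing the role of $X$ is non-empty, so nothing is lost.
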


\begin{RULE}[Large degree rule]\label{rule:expansion}
Let $v$ be a vertex of $G$.
If there is a set $\cC$ of connected components of $G-(S_v\cup \{v\})$ such that $\abs{\cC}\ge 2d(2k+1)(d^2-d+1)$ and, for each $C\in \cC$, we have $G[V(C)\cup \{v\}]\in  \phi_{\cP\cap \cB_{2, d}}$, then do the following:
\iftoggle{paper}{%
  \begin{enumerate}[(1)]
    \item Construct an auxiliary bipartite graph $H$ with bipartition $(S_v, \cC)$ where $w\in S_v$ and $C\in \cC$ are adjacent in $H$ if and only if $w$ has a neighbor in $C$.
    \item Compute sets $\cC'\subseteq \cC$ and $S_v'\subseteq S_v$ obtained by  applying Lemma~\ref{lem:expansionlemma} to $H$ with $\alpha=d$.
    \item Remove all edges in $G$ between $v$ and each connected component~$C$ of ${\cal C'}$.
    \item Add $d-1$ internally vertex-disjoint paths of length~$2$ between $v$ and each vertex $x\in S_v'$.
    \item Remove all vertices of degree~$1$ in the resulting graph.
  \end{enumerate}
}{%
(1) Construct an auxiliary bipartite graph $H$ with bipartition $(S_v, \cC)$ where $w\in S_v$ and $C\in \cC$ are adjacent in $H$ if and only if $w$ has a neighbor in $C$. (2) Compute sets $\cC'\subseteq \cC$ and $S_v'\subseteq S_v$ obtained by  applying Lemma~\ref{lem:expansionlemma} to $H$ with $\alpha=d$.
(3) Remove all edges in $G$ between $v$ and each connected component~$C$ of ${\cal C'}$.
(4) Add $d-1$ internally vertex-disjoint paths of length~$2$ between $v$ and each vertex $x\in S_v'$.
(5) Remove all vertices of degree~$1$ in the resulting graph.
}
\end{RULE}

\begin{lemma}\label{lem:expansionsafe}
\Cref{rule:expansion} is safe.
\end{lemma}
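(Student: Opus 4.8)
The plan is to show that the graph $G'$ produced by \cref{rule:expansion} satisfies: $(G,d,k)$ is a \YES-instance of \BPBVD\ if and only if $(G',d,k)$ is. Since the final step of the rule only removes degree-$1$ vertices, and such a vertex lies on no cycle, hence in no obstruction (by \cref{dblock-clustering} applied to the non-degenerate block-hereditary class $\cP\cap\cB_{2,d}$), it suffices to compare $(G,d,k)$ with $(G^\circ,d,k)$, where $G^\circ$ is $G'$ before that step. For $x\in S_v'$, write $W_x:=\{v,x\}\cup\bigcup_{C\in\phi(x)}V(C)$, where $\phi$ is the map supplied by the $\alpha$-expansion lemma, and let $D_x:=\{v,x,w^x_1,\ldots,w^x_{d-1}\}$ be the vertices of the gadget that steps (3)--(4) glue between $v$ and $x$. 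Two facts drive the proof. (i)~In $G$: since $v$ and $x$ each have a neighbour in every $C\in\phi(x)$ and each such $C$ is connected, $v$ and $x$ are joined in $G[W_x]$ by $d$ internally vertex-disjoint paths of length at least $2$; their union is a $2$-connected subgraph on at least $d+2>d$ vertices, and as any $2$-connected subgraph lies in a single block, every solution for $(G,d,k)$ must meet $W_x$. (ii)~Symmetrically, in $G^\circ$ the gadget on $D_x$ is the union of $d-1\ge 2$ internally disjoint $v$--$x$ paths of length $2$, a $2$-connected graph on $d+1>d$ vertices, so every solution for $(G^\circ,d,k)$ must meet $D_x$. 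By the guarantees of the $\alpha$-expansion lemma, $\{W_x\}_{x\in S_v'}$ and $\{D_x\}_{x\in S_v'}$ are each sunflowers with core $\{v\}$, and every $C\in\cC'$ has all of its neighbours in $G^\circ$ inside $S_v'$ (using $N_H(\cC')\cap S_v=S_v'$ and that the edges from $v$ into $\cC'$ were deleted).

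For the forward direction, let $S$ be a solution for $(G,d,k)$. If $v\in S$, then $G^\circ-S$ is obtained from $G-S$ by adding the midpoints $w^x_i$ as pendant or isolated vertices, so (since $K_2\in\cP\cap\cB_{2,d}$) we get $G^\circ-S\in\phi_{\cP\cap\cB_{2,d}}$ and $S$ is a solution for $(G^\circ,d,k)$. If $v\notin S$, then by (i) the set $S$ meets each $W_x\setminus\{v\}$, and as these sets are pairwise disjoint, $\abs{S\cap Z}\ge\abs{\{x\in S_v':x\notin S\}}$ for $Z:=\bigcup_{x\in S_v'}\bigcup_{C\in\phi(x)}V(C)$. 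I would then take $S':=(S\setminus Z)\cup S_v'$; since $S_v'$ is disjoint from $Z$, a short count gives $\abs{S'}\le\abs{S}\le k$. In $G^\circ-S'$ each $\cC'$-component is isolated (its neighbours lie in $S_v'\subseteq S'$) and is an induced subgraph of some $G[V(C)\cup\{v\}]\in\phi_{\cP\cap\cB_{2,d}}$, each midpoint is pendant to $v$, and the rest of $G^\circ-S'$ is an induced subgraph of $G-S$; hence $G^\circ-S'\in\phi_{\cP\cap\cB_{2,d}}$.

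For the backward direction, let $S^\circ$ be a solution for $(G^\circ,d,k)$; first I would normalise it. For $x\in S_v'$ the midpoints $w^x_1,\ldots,w^x_{d-1}$ are pairwise twins with neighbourhood $\{v,x\}$, so: if $S^\circ$ contains a midpoint of $x$ but neither $v$ nor $x$, replace all midpoints of $x$ in $S^\circ$ by the single vertex $x$ (this does not increase the size, and keeps $S^\circ$ a solution since the released midpoints are now pendant to $v$); if $S^\circ$ contains a midpoint of $x$ together with $v$ or $x$, just delete that midpoint from $S^\circ$. After this, $S^\circ\subseteq V(G)$ and, by (ii), $\{v,x\}\cap S^\circ\neq\emptyset$ for every $x\in S_v'$. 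If $v\in S^\circ$, then $G-S^\circ$ arises from $G^\circ-S^\circ$ by deleting the (pendant or isolated) midpoints, so $S^\circ$ is a solution for $(G,d,k)$. If $v\notin S^\circ$, then $S_v'\subseteq S^\circ$, so in $G-S^\circ$ every $\cC'$-component attaches to the rest of the graph only through $v$; consequently each block of $G-S^\circ$ is either a block of the induced subgraph of $G^\circ-S^\circ$ obtained by deleting the midpoints, or a block of some $G[(V(C)\setminus S^\circ)\cup\{v\}]$, an induced subgraph of $G[V(C)\cup\{v\}]\in\phi_{\cP\cap\cB_{2,d}}$; either way it is in $\cP\cap\cB_{2,d}$, so $G-S^\circ\in\phi_{\cP\cap\cB_{2,d}}$.

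The hard part will be the case analysis in the last two paragraphs, specifically checking that in $G^\circ-S'$ and in $G-S^\circ$ no block is accidentally enlarged by $\cC'$-components merging with the gadgets or with the remainder of the graph; this is precisely where I would use the output guarantees of the $\alpha$-expansion lemma ($N_H(\cC')\cap S_v=S_v'$ together with the pairwise disjointness of the petals $\phi(x)$) and the fact that $G$ is reduced under \cref{rule:cutvertex}, so that every $C\in\cC$ has a neighbour in $S_v$. A minor separate point is the degenerate case $d=2$, where $\phi_{\cP\cap\cB_{2,d}}$ is the class of forests: there the gadget is a subdivided edge rather than a block, and facts (i)--(ii) should be stated in terms of cycles rather than blocks.
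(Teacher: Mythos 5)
Your argument is correct and takes essentially the same route as the paper's proof: you normalise solutions of the modified graph away from the new degree-$2$ vertices via the twin-swap, use the gadget between $v$ and each $x\in S_v'$ to force $v$ or $x$ into any solution, and in the converse direction exchange petal vertices for $S_v'$ using the disjointness guarantees of \cref{lem:expansionlemma} together with the $2$-connected obstruction formed by $v$, $x$, and the components of $\phi(x)$ — this is precisely the paper's $\abs{A_2}\ge\abs{A_1}$ exchange argument, just with explicit bookkeeping. The two caveats you touch on (that $v$ is adjacent to every component of $\cC$, and that the gadget is a genuine obstruction only when $d\ge 3$) are implicit in the paper's proof as well, so flagging the $d=2$ case explicitly is, if anything, a small improvement rather than a divergence.
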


%%%%
\appendixproof{Lemma~\ref{lem:expansionsafe}}
{
\begin{proof}
Let $\cC$ be a set of connected components of $G-(S_v\cup \{v\})$ such that $\abs{\cC}\ge 2d(2k-1)(d^2-d+1)$ and, for $C\in \cC$, $G[V(C)\cup \{v\}]\in  \phi_{\cP\cap \cB_{2, d}}$.
As $\abs{S_v}\le 2(2k-1)(d^2-d+1)$, Lemma~\ref{lem:expansionlemma} implies that
we can obtain $\cC'\subseteq \cC$, $S_v'\subseteq S_v$
and a function $\phi:S_v\rightarrow \binom{\mathcal{C}}{d}$ in polynomial time such that
\begin{itemize}
\item $N_G(\bigcup_{C\in \cC'}V(C))\cap S_v=S_v'$,  
\item $\phi(x)$ is a subset of $\mathcal{C'}$ where each connected component in $\phi(x)$ has a neighbor of $x$, and  
\item the graphs in $\{\bigcup_{C\in \phi(x)}V(C) :x\in X\}$ are pairwise disjoint.
\end{itemize}
Let $G'$ be the resulting graph obtained by applying \cref{rule:expansion}.
We prove that $(G,d,k)$ is a \YES-instance if and only if $(G', d,k)$ is a \YES-instance.
Let $R$ be the set of new vertices of degree~$2$ between $v$ and $S_v'$ in $G'$.

Suppose that $G'$ has a vertex set $A$ with $\abs{A}\le k$ such that $G'-A\in \phi_{\cP\cap \cB_{2, d}}$.
If a vertex $r\in R$ is contained in $A$ and $r'$ is a neighbor of $r$, 
then $G'-(A\setminus \{r\}\cup \{r'\})\in \phi_{\cP\cap \cB_{2, d}}$, as $r$ and all of its twins become vertices of degree~$1$ in $G'-(A\setminus \{r\}\cup \{r'\})$ and thus they cannot be contained in blocks with at least $3$ vertices.
As $d-1$ distinct paths of length~$2$ from $v$ to a vertex $x\in S_v'$ form a block with $d+1$ vertices, 
we may assume that $A$ contains either $v$ or $x$.
Considering all vertices in $S_v'$,  
we have either $v\in A$ or $S_v'\subseteq A$.
If $v\in A$, then $G-A$ is an induced subgraph of $G'-A$, and therefore, $G-A\in \phi_{\cP\cap \cB_{2, d}}$.
Suppose $v\notin A$ and $S_v'\subseteq A$. 
Since $N_G(\bigcup_{C\in \cC'}V(C))\cap S_v=S_v'$,  $v$ is a cut vertex of $G-A$, and we know that $G[V(C)\cup \{v\}]\in \phi_{\cP\cap \cB_{2, d}}$ for all $C\in \cC'$.
Moreover, $G-A-(\bigcup_{C\in \mathcal{C}} V(C))$ is an induced subgraph of $G'-A$, and thus it is a graph in $\phi_{\cP\cap \cB_{2, d}}$.
Therefore, $G-A\in \phi_{\cP\cap \cB_{2, d}}$.

For the converse direction, suppose that $G$ has a vertex set $A$ with $\abs{A}\le k$ such that $G-A\in \phi_{\cP\cap \cB_{2, d}}$.
If $v\in A$, then the vertices in $R$ become pendant vertices in $G'-A$, and thus, 
$G'-A\in \phi_{\cP\cap \cB_{2, d}}$.
We may assume that $v\notin A$.

Let $A_1:=S_v'\setminus A$ and $A_2:=A\cap  (\bigcup_{C\in \cC'}V(C))$.
It is not hard to see that $G-((A\setminus A_2)\cup A_1)\in \phi_{\cP\cap \cB_{2, d}}$ as $G[V(C)\cup \{v\}]\in \phi_{\cP\cap \cB_{2, d}}$ for each $C\in \cC'$.
We claim that $\abs{A_2}\ge \abs{A_1}$, which implies that there is a vertex set $A'$ with $\abs{A'}\le \abs{A}\le k$ such that $G'-A'\in \phi_{\cP\cap \cB_{2, d}}$. 
Suppose $\abs{A_2}<\abs{A_1}$.
Since the sets in $\{\bigcup_{C\in \phi(x)}V(C) :x\in S_v'\}$ are pairwise disjoint, 
there exists a vertex $a$ in $A_1$ such that $\phi(a)$ contains no vertex from $A_2$. 
Then $d-1$ connected components in $\phi(a)$ with the vertices $v$ and $a$ contains a $2$-connected subgraph with at least $d+1$ vertices, which contradicts the assumption that $G-A\in \phi_{\cP\cap \cB_{2, d}}$. 
%%Thus, $\abs{A_2}\ge \abs{A_1}$, and therefore $A\setminus A_2\cup A_1$ is also a proper deletion set of size at most $k$ in $G$.
%%As all vertices in $R$ become vertices of degree $1$ in $G'-(A\setminus A_2\cup A_1)$, 
%%$G'-(A\setminus A_2\cup A_1)$ is a block graph, as required.
\end{proof}
}

%%We can apply Reduction Rule~\ref{rule:expansion} in polynomial time by Lemma~\ref{lem:expansionlemma}.

\begin{lemma}
  \label{lem:polyrules}
  \Cref{rule:blockcomponent,rule:cutvertex,rule:bypassing,rule:sunflower1,rule:disjointobs,rule:sunflower2,rule:expansion} can be applied exhaustively in polynomial time.
\end{lemma}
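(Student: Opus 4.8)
The plan is to establish two things: that for each of the seven rules both testing applicability and performing the rule take time polynomial in $\abs{V(G)}$ (for fixed $k$ and $d$); and that the whole process halts after a number of rule applications that is polynomial in the size of the original instance.

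For the per-application bound I would argue as follows. \Cref{rule:blockcomponent,rule:cutvertex} only need the block tree of $G$ (or of $G-v$), computable in linear time by~\cite{HopcroftT1973}, together with membership tests of individual blocks in $\cP\cap\cB_{2,d}$; these tests are polynomial since $\cP$ is recognizable in polynomial time and the blocks inspected have at most $d$ vertices. \Cref{rule:bypassing} is applied in polynomial time by inspecting the block tree of $G-U$, as noted after its statement. For \cref{rule:sunflower1} we run, for every $v\in V(G)$, the algorithm of \cref{prop:dblockdeletion} on $G-v$ with $A=N_G(v)$ and parameter $k+1$; in time $\mathcal{O}(d\abs{V(G)}^3)$ this either returns $k+1$ pairwise vertex-disjoint $(N_G(v),d)$-trees (so \cref{rule:sunflower1} applies), or the set $S_v$ of size at most $2(2k+1)(d^2-d+1)$ used by the remaining rules. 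Given $S_v$, for \cref{rule:disjointobs,rule:sunflower2} we compute the connected components of $G-(S_v\cup\{v\})$ and test, for each component $C$, whether $G[V(C)]$ and $G[V(C)\cup\{v\}]$ lie in $\Phi_{\cP\cap\cB_{2,d}}$, which is again polynomial via block trees and the recognizability of $\cP$. Finally, for \cref{rule:expansion} the auxiliary bipartite graph has one side $S_v$ of size at most $2(2k+1)(d^2-d+1)$ and the other side $\cC$ of size at least $2d(2k+1)(d^2-d+1)\ge d\abs{S_v}$, so \cref{lem:expansionlemma} applies with $\alpha=d$ and yields $\cC'$ and $S_v'$ in polynomial time, after which the surgery on $G$ is clearly polynomial.

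For termination, the rules other than \cref{rule:expansion} are easy: \cref{rule:blockcomponent,rule:cutvertex,rule:bypassing} each strictly decrease $\abs{V(G)}$ (the bypassing rule either contracts an edge or deletes a vertex) and never increase $k$; \cref{rule:sunflower1,rule:sunflower2} each delete a vertex and strictly decrease $k$; and \cref{rule:disjointobs} ends the algorithm, so it fires at most once. The crux is \cref{rule:expansion}, which is the only rule that can increase $\abs{V(G)}$. The key structural facts I would use are that each application to a vertex $v$ deletes all edges between $v$ and the at least $d\abs{S_v'}$ components of $\cC'$, all adjacent to $v$, so $\deg_G(v)$ strictly drops; and that the gadgets introduced, together with the vertices of $S_v'$, then lie in a block on $d+1$ vertices, which, being an obstruction of bounded order, can never again occur inside a component $C$ of $G-(S_w\cup\{w\})$ with $G[V(C)\cup\{w\}]\in\Phi_{\cP\cap\cB_{2,d}}$, for any $w$. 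Hence \cref{rule:expansion} creates no new opportunities for itself, so the quantity counting, over all vertices $w$, the components $C$ of $G-(S_w\cup\{w\})$ adjacent to $w$ with $G[V(C)\cup\{w\}]\in\Phi_{\cP\cap\cB_{2,d}}$ is not increased by \cref{rule:expansion} and strictly drops each time it fires; since this quantity is initially at most $\abs{V(G)}^2$ and each firing adds at most $(d-1)\cdot 2(2k+1)(d^2-d+1)$ vertices, $\abs{V(G)}$ stays polynomially bounded, \cref{rule:expansion} fires only polynomially often, and between consecutive firings \cref{rule:blockcomponent,rule:cutvertex,rule:bypassing,rule:sunflower1,rule:disjointobs,rule:sunflower2} fire at most $\abs{V(G)}+k$ times. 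Combining with the per-application bound proves the lemma.

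The step I expect to be the main obstacle is exactly the termination of \cref{rule:expansion}: it is the only rule that can enlarge $G$, so a plain size measure fails, and making precise the ``inertness'' of the introduced gadgets — so that it provably caps the number of applications of this rule — is the delicate part of the argument.
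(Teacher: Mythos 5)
Your per-application analysis and your handling of the first six rules are fine and match what the paper needs. The genuine gap is exactly where you suspect it: the termination of \cref{rule:expansion}. Your proposed measure --- the number, over all $w$, of components $C$ of $G-(S_w\cup\{w\})$ adjacent to $w$ with $G[V(C)\cup\{w\}]\in\Phi_{\cP\cap\cB_{2,d}}$ --- is not a well-defined invariant of the instance, since $S_w$ is merely \emph{some} set returned by \cref{prop:dblockdeletion} and changes (non-canonically) every time the graph changes. More importantly, the monotonicity claim does not follow from the fact that each gadget forms a block on $d+1$ vertices. That fact only prevents an \emph{intact} gadget from lying inside a good component; but the relevant components live in $G-(S_w\cup\{w\})$, and the deleted set can break the gadget. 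For instance, take $w=x\in S_v'$ at a later stage with $v\in S_w$: each of the $d-1$ new middle vertices becomes a singleton component $C$ adjacent to $w$ with $G[V(C)\cup\{w\}]\cong K_2\in\Phi_{\cP\cap\cB_{2,d}}$, so the rule \emph{does} create new ``opportunities'' of exactly the kind you are counting. Consequently neither the claimed strict decrease, nor the polynomial bound on $\abs{V(G)}$ that you derive from it, is established, and this is the heart of the lemma since \cref{rule:expansion} is the only rule that can enlarge the graph. (A smaller imprecision: the rule as stated does not guarantee that $v$ has a neighbor in every $C\in\cC'$, so ``$\deg_G(v)$ strictly drops'' also needs care.)

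The paper closes this gap with a purely local potential: $\abs{V(G)}+\abs{E'}$, where $E'$ is the set of edges whose two endpoints both have degree at least $3$. Adding the gadgets increases this potential by exactly $(d-1)\abs{S_v'}$ (the new vertices have degree $2$, so none of the new edges enter $E'$), while for each $C\in\cC'$ the potential drops by at least $1$: either a deleted edge from $v$ to $C$ had its other endpoint of degree at least $3$ (so it leaves $E'$), or that endpoint had degree $2$, becomes a degree-$1$ vertex, and is removed in step (5). Since $\abs{\cC'}\ge d\abs{S_v'}>(d-1)\abs{S_v'}$, each application of \cref{rule:expansion} strictly decreases the potential, and the remaining rules strictly decrease $\abs{V(G)}$; this yields the polynomial bound on the number of rule applications without any global ``no new opportunities'' argument. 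If you want to keep your route, you would have to either make your counting quantity canonical and prove its monotonicity against examples like the one above, or switch to a degree/potential argument of this local kind.
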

\appendixproof{\cref{lem:polyrules}}
{
\begin{proof}
  It is clear that an application of one of \cref{rule:blockcomponent,rule:cutvertex,rule:bypassing,rule:sunflower1,rule:disjointobs,rule:sunflower2} decreases $\abs{V(G)}$. 
   We show that $\abs{V(G)}+\abs{E'}$ decreases when \cref{rule:expansion} is applied, where $E'$ is the number of edges of $G$ for which both end vertices have degree at least $3$. 

Let $(G, d, k)$ be an instance, and let $E'$ be the set of edges where both end vertices have degree at least $3$.
%We claim that $\abs{V(G)}+\abs{E'}$ decreases when Reduction Rule~\ref{rule:expansion} is applied. 
First, observe that $\abs{V(G)}+\abs{E'}$ is increased by $(d-1)\abs{S_v'}$ when adding $d-1$ disjoint paths of length~$2$ from $v$ to each vertex of $S_v'$.
It is sufficient to check that for each $C\in \cC'$, $\abs{V(G)}+\abs{E'}$ is decreased by at least $1$, %then $\abs{V(G)}+\abs{E'}$ is eventually decreased
since $\abs{\cC'}\ge d\abs{S_v'}>(d-1)\abs{S_v'}$. If $v$ has a neighbor in $C\in \cC'$ that has degree~$3$ in $G$, then this is clear. If a neighbor $w$ of $v$ in $C\in \cC'$ has degree~$2$ in $G$, then it becomes a vertex of degree~$1$ and will be removed when applying \cref{rule:expansion}. Since every neighbor of $v$ in a connected component of $\cC'$ has degree at least $2$, this completes the proof.
%As $\abs{\mathcal{C'}}\ge (d-1)\abs{S_v'}$, it is sufficient to show that for each $C\in \mathcal{C}$, $\abs{V(G)}+\abs{E'}$ is decreased by at least $1$.
%
%%Let $C\in \mathcal{C'}$. 
%%If one of the neighbors of $v$ in $C$ has degree $3$ in $G$, then 
%%$\abs{V(G)}+\abs{E'}$ is decreased by at least $1$.
%%If one of the neighbors of $v$ in $C$, say $w$, has degree $2$ in $G$, then
%%$w$ becomes a pendant vertex after removing $vw$,  and thus, we remove $w$ by Reduction Rule~\ref{rule:expansion}.
%%Therefore,   
%%$\abs{V(G)}+\abs{E'}$ is always decreased when applying Reduction Rule~\ref{rule:expansion}.
\end{proof}
}

\begin{proof}[Proof of Theorem~\ref{thm:mainkernel}]
% $\abs{V(G)}\ge 2k(d+4)(4d\ell + 1)$ 
  We apply \cref{rule:blockcomponent,rule:cutvertex,rule:bypassing,rule:sunflower1,rule:disjointobs,rule:sunflower2,rule:expansion} exhaustively.
Note that this takes polynomial time, by \cref{lem:polyrules}.
Suppose that $(G, d, k)$ is the reduced instance, and $\abs{V(G)}\ge  4dk(\ell-1)(2d+3)(d+3)$ where $\ell=2d^2(2k+1)(d^2-d+3)$.
Then, by Lemma~\ref{lem:largedegree}, there exists a vertex~$v$ of degree at least $\ell$.
%Let $S_v$ be the vertex set obtained by Proposition~\ref{prop:dblockdeletion}. 
%%$U$ be a $(2d+6)$-approximation solution obtained by the approximation algorithm for the minimization version of the \dBGD problem, and set
%% $$A_v :=
%%    \begin{cases}
%%      S_v\cup (U\setminus \{v\}) & \text{if $v\in U$,} \\
%%      S_v\cup U & \text{otherwise.}\\
%%    \end{cases}$$
    
By Proposition~\ref{prop:dblockdeletion}, $v$ has at most $d-1$ neighbors in each connected component of $G-(S_v\cup \{v\})$.
Since $\ell=2d^2(2k+1)(d^2-d+3)$, 
the subgraph $G-(S_v\cup \{v\})$ contains at least $\frac{\ell-2(2k+1)(d^2-d+1)}{d-1}\ge 2d(2k+1)(d^2-d+3)$ connected components. 
By \cref{rule:disjointobs,rule:sunflower2}, 
$G-(S_v\cup \{v\})$ contains at least $2d(2k+1)(d^2-d+1)$ connected components such that, for each connected component~$C$,
$G[V(C)\cup \{v\}]\in  \phi_{\cP\cap \cB_{2, d}}$.
Then we can apply \cref{rule:expansion}, contradicting our assumption.
We conclude that $\abs{V(G)}= \mathcal{O}(k^2d^{7})$.
\end{proof}

One might ask whether the kernel with $\mathcal{O}(k^2d^7)$ vertices can be improved upon. Regarding the $k^2$ factor, %it is probably hard to reduce to linear in $k$, as it
reducing it to linear in $k$
would imply a linear kernel for \textsc{Feedback Vertex Set}. %, which is not known. 
On the other hand, it is possible to reduce the $d^7$ factor depending on the block-hereditary class $\cP$. 
\iftoggle{paper}{}{We prove the following in the appendix.}
\begin{theorem}
  \label{improvedkernels}
  \ 
  \begin{itemize}
    \item \dBGD admits a kernel with $\cO(k^2 d^6)$ vertices.
    \item \dKBGD admits a kernel with $\cO(k^2 d^3)$ vertices.
    \item \dCBGD admits a kernel with $\cO(k^2 d^4)$ vertices. 
  \end{itemize}
\end{theorem}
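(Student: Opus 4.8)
The plan is to re-run the kernelization of \cref{thm:mainkernel}, replacing some of its ingredients by sharper class-specific ones, and then carry the smaller constants through the proof. Recall that the bound $\cO(k^2d^7)$ arises there as the product of two quantities: a factor $\Theta(d^3)$ produced by \cref{lem:largedegree} --- the block tree of $G-U$ has $\Omega(\abs{V(G-U)}/d)$ nodes, the bypassing threshold forces a monochromatic subtree of the block tree on $\Omega(d)$ nodes, and $\abs{U}=\cO(dk)$ because only the generic $(2d+6)$-approximation is available --- and the quantity $\ell=\Theta(kd^4)$ governing \cref{rule:expansion}, which decomposes further as $\Theta(d)$ (the $\alpha$-expansion is invoked with $\alpha=d$ to manufacture a generic obstruction in $\obd$ with $d+1$ vertices) times $\Theta(d)$ (a vertex may have up to $d-1$ neighbours in one component of $G-(S_v\cup\{v\})$) times $\Theta(d^2)$ (the bound $\abs{S_v}=\cO(kd^2)$ of \cref{prop:dblockdeletion}, the $d^2$ being the largest number of vertices in a tree of maximum degree~$d$ that is edge-minimal subject to containing $d$ leaves). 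For each of the three classes we tighten the factors that are not inherent to it.

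For \dKBGD\ and \dCBGD\ the first improvement is a constant-factor approximation for the unparameterised deletion problem: one combines an $\cO(1)$-approximation hitting the bounded, $\cO(1)$-vertex obstructions (diamonds and $C_4$'s for cliques; $C_4$'s and bounded subdivided diamonds for cacti), the approximation-preserving reduction of \cref{killcycles1}, and the known $\cO(1)$-approximation for \SFVS~\cite{EvenNZ2000}, together with a cheap clean-up that breaks each maximal clique of size exceeding $d$ down to $d$ vertices (respectively, breaks each cycle of length exceeding~$d$), which is within a constant of optimal. This gives $\abs{U}=\cO(k)$, removing one factor of $d$. The second improvement replaces the role of \cref{rule:expansion} and \cref{prop:dblockdeletion}: for cliques, the only obstruction through a high-degree vertex~$v$ that is not already bounded is the clique $K_{d+1}$, which we eliminate with a dedicated \emph{big-clique rule} (reduce each maximal clique to $\cO(d)$ vertices, using that its low-degree vertices outside a bounded core are twins), after which every remaining obstruction is bounded; these can be packed or covered by a Gallai $A$-path argument in the style of \cite{Agrawal,KimK2015,Thomasse2009}, so that $\abs{S_v}=\cO(k)$ and $\alpha=\cO(1)$ in the expansion step. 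Removing the two further factors of~$d$ yields $\cO(k^2d^3)$ for \dKBGD. For cacti the analogous obstruction through~$v$ is a subdivided diamond, whose chains have bounded length once the bypassing rule has been applied, so the same packing argument applies; but here one factor of $d$ survives, because a block of $G-S$ may still be a cycle on up to $d$ vertices, giving $\cO(k^2d^4)$. For \dBGD\ one has $\cB_{2,d}\setminus\cP=\emptyset$, so the only obstructions are the members of $\obd$ and several of the rules of \cref{sec:polykernel} trivialise; the approximation ratio stays $\cO(d)$, but a tighter analysis of the packing step --- replacing arbitrary $(N_G(v),d)$-trees by more restricted structures --- shaves one factor of $d$ off $\abs{S_v}$, giving $\cO(k^2d^6)$.

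The main obstacle will be verifying that the modified reduction rules (the big-clique rule, the $\cO(1)$-expansion rule, the clean-up step in the approximation) are safe and compose correctly with the rules of \cref{sec:polykernel} --- in particular that none of them creates a new bounded obstruction --- and re-proving \cref{lem:largedegree} under the weakened hypothesis $\abs{U}=\cO(k)$. The remainder is the routine but somewhat fiddly bookkeeping of exactly which factor of~$d$ each class avoids, every step of which is a direct adaptation of arguments already present in the paper.
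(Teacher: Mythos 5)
Your reading of where the factors of $d$ arise in \cref{thm:mainkernel} is accurate, and your overall strategy (specialize the approximation, the sunflower/packing step, and the chain-shortening step per class) is the same in spirit as the paper's; but the concrete mechanisms you propose diverge from the paper's, and two of them have genuine gaps. For \dBGD the paper does not touch $\abs{S_v}$ at all: it keeps \cref{prop:dblockdeletion} (so $\abs{S_v}=\cO(kd^2)$), keeps $\alpha=d$ in \cref{rule:expansion}, and instead strengthens the bypassing rule (\cref{rule:bypassingdBGD}): a chain of two blocks of $G-U$ with no edges to $U$ is replaced by a single clique on $\min\{d,\abs{V(B_1)\cup V(B_2)}\}$ vertices, so the block tree of $G-U$ has no degree-two chain of six nodes, and the analogue of \cref{lem:largedegree} improves from $\cO(kd^3\ell)$ to $\cO(kd^2\ell)$. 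Your alternative --- shaving a factor of $d$ off $\abs{S_v}$ by packing ``more restricted structures'' than $(N_G(v),d)$-trees --- is asserted without naming any candidate structure; for the class of all biconnected graphs on at most $d$ vertices, the obstructions through $v$ really are arbitrary $2$-connected graphs on more than $d$ vertices, and the $d^2$ loss in \cref{prop:dblockdeletion} comes precisely from covering $(A,d)$-trees, so this step is a missing idea rather than bookkeeping.

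For \dKBGD your two new ingredients are also not what the paper does, and neither is substantiated. The ``big-clique rule'' (shrink every maximal clique to $\cO(d)$ vertices because its peripheral vertices are twins) fails in general: the vertices of a large maximal clique need not be twins or simplicial, and even after such a rule the remaining obstructions are not of bounded size (long induced cycles are still obstructions), so ``every remaining obstruction is bounded'' does not hold as stated. Similarly, the claimed $\cO(1)$-approximation via ``hit the small obstructions, then break each big clique down to $d$ vertices'' needs a charging argument against $\OPT$ that you do not give (maximal cliques overlap). The paper instead keeps the $(2d+6)$-approximation for all three problems, uses Gallai's $A$-path theorem in the form of \cref{prop:generalcompletedegree} to get $\abs{S_v}\le 7k$, adds the approximate solution to $S_v$ so that $v$ has at most $d-1$ neighbours per component, and applies a $3$-expansion, combined with \cref{rule:bypassingdBGD}. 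For \dCBGD your plan is closest in spirit, but again the paper needs no improved approximation: the single observation that an $(N_G(v),3)$-tree together with $v$ is $2$-connected and not a cycle lets it run \cref{rule:sunflower1} with $(N_G(v),3)$-trees, which gives $\abs{S_v}=\cO(k)$, at most two neighbours of $v$ per component, and a constant expansion parameter, after which the original \cref{lem:largedegree} already yields the bound. So, as written, your proposal would need new safety and approximation arguments that the paper never requires, and for \dBGD it relies on an improvement of \cref{prop:dblockdeletion} that is not known to be available.
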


\appendixproof{\cref{improvedkernels}}
{
  We prove \cref{improvedkernels} as three separate results: \cref{kerneldBGD,kerneldCBGD,kerneldKBGD}.

  First, observe that each of the three problems have $(2d+6)$-approximation algorithms, by the same argument as for the general problem  \BPBVD.
%As the algorithms in Proposition~\ref{finddblockalgo} and Lemma~\ref{findboundedclusters}
 %apply to each of the problems \dBGD, \dCBGD, and \dKBGD, 
%there are $(2d+6)$-approximation algorithms for these problems, as for the general problem  \BPBVD.

%\begin{theorem}\label{kerneldBGD}
%\dBGD admits a kernel with $\mathcal{O}(k^2d^6)$ vertices.
%\end{theorem}

Now consider the \dBGD problem.
All the reduction rules for \BPBVD can be applied with $\cP$ as the class of all biconnected graphs.
However, Reduction Rule~\ref{rule:bypassing} can be modified as follows, in order to obtain a slightly better kernel.
Let $(G, d, k)$ be an instance of \dBGD, and let $U$ be a solution of size at most $(2d+6)k$ obtained by the $(2d+6)$-approximation algorithm.

\begin{RULE}[Bypassing rule 2]\label{rule:bypassingdBGD}
Let $v_1, v_2, v_3$ be a sequence of cut vertices of $G-U$, and let $B_1$ and $B_2$ be blocks of $G-U$ such that 
\begin{enumerate}[(1)]
\item %$v_1v_2v_3$ is an induced path in $G-U$, and,
  for each $i\in \{1,2\}$, $B_i$ is the unique block containing $v_i$, $v_{i+1}$ and no other cut vertices, and
\item $G$ has no edges between $(V(B_1)\cup V(B_2))\setminus \{v_1, v_3\}$ and $U$.
\end{enumerate}
Then remove $(V(B_1)\cup V(B_2))\setminus \{v_1, v_3\}$ and add a clique of size $\min \{d, \abs{V(B_1)\cup V(B_2)}\}$ containing $v_1$ and $v_3$.
\end{RULE}
\begin{lemma}\label{lem:bypassingrule2}
\Cref{rule:bypassingdBGD} is safe.
\end{lemma}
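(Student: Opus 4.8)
The plan is to mirror the proof of \cref{lem:bypassingrule}. Recall that a solution of \dBGD on $(G,d,k)$ is a set of at most $k$ vertices whose removal leaves a graph in which every block has at most $d$ vertices. Write $W := V(B_1)\cup V(B_2)$, $m := |W|$, $s := \min\{d,m\}$, and $W^{\circ} := W\setminus\{v_1,v_3\}$; let $G'$ be the graph produced by \cref{rule:bypassingdBGD}, let $N$ be its $s-2$ new vertices, and let $K := G'[\{v_1,v_3\}\cup N]$ be the added clique. First I would record some structural facts. As the $(2d+6)$-approximation outputs a feasible set, every block of $G-U$ has at most $d$ vertices; in particular $|B_1|,|B_2|\le d$, so (since $v_1,v_2,v_3$ are distinct) $3\le m\le 2d-1$. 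Since $B_1$ and $B_2$ are distinct blocks of $G-U$, no edge of $G$ joins $V(B_1)\setminus\{v_2\}$ to $V(B_2)\setminus\{v_2\}$ (such an edge would place $B_1,B_2$ in a common block); hence $G[W]=B_1\cup B_2$, and in particular $v_1v_3\notin E(G)$. Finally, the hypotheses of the rule ensure that every vertex of $W^{\circ}$ has all of its $G$-neighbours in $W$; so in both $G$ and $G'$ the set $W^{\circ}$ together with $B_1\cup B_2$ (resp.\ with $K$) is attached to $H := G-W^{\circ}$ only at $\{v_1,v_3\}$, and $G$ and $G'$ agree on $H$.

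Then I would show $(G,d,k)$ is a \YES-instance iff $(G',d,k)$ is. For the forward direction, let $S$ be a solution of $G$. If $S$ meets $W^{\circ}$, replace it by $S' := (S\setminus W^{\circ})\cup\{v_1\}$, which is no larger and is still a solution: once $v_1$ is deleted, $W\setminus\{v_1\}$ meets the rest of $G-S'$ only at $v_3$, so every block of $G-S'$ inside $W\setminus\{v_1\}$ lies in $B_1$ or $B_2$ (at most $d$ vertices), and every other block of $G-S'$ occurs as a biconnected subgraph of $G-S$ (at most $d$ vertices). So we may assume $S\cap W^{\circ}=\emptyset$, i.e.\ $S\cap W\subseteq\{v_1,v_3\}$. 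The blocks of $G-S$ and of $G'-S$ disjoint from $W$, resp.\ from $K$, coincide; for the rest there are three cases. If $v_1\in S$ or $v_3\in S$, the $W$-side (resp.\ $K$-side) is one-sided and contributes only blocks of size at most $d$ on both sides. If $v_1,v_3\notin S$ and they lie in distinct components of $H-S$, then $G-S$ contributes the blocks $B_1,B_2$ and $G'-S$ contributes the block $K$, all of size at most $d$. If $v_1,v_3\notin S$ but are joined in $H-S$ by a path — necessarily with an internal vertex, since $v_1v_3\notin E(G)$ — then the block of $G-S$ through $v_2$ and the block of $G'-S$ through $N$ are $(B_1\cup B_2)\cup D_0$ and $K\cup D_0$ for the same $D_0\subseteq V(H)$ with $|D_0|\ge 3$, of sizes $m+|D_0|-2$ and $s+|D_0|-2$; since the former is at least $m+1$, the fact that $G-S$ is a solution forces $m\le d$, whence $s=m$ and the sizes agree. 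In every case $S$ is a solution of $G'$.

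For the converse, let $S'$ be a solution of $G'$. As $S'\subseteq V(G')$ we automatically have $S'\cap W^{\circ}=\emptyset$, and we may assume $S'\cap N=\emptyset$, since replacing a vertex of $N\cap S'$ by $v_1$ only splits blocks of $G'-S'$ and, because vertices of $N$ have all their neighbours in $K$, does not enlarge $K$. Then $S'\cap W\subseteq\{v_1,v_3\}$ and exactly the same case analysis, read from $G'$ back to $G$, shows $S'$ is a solution of $G$.

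The step I expect to require the most care is the case $v_1,v_3\notin S$: the key observation is that $B_1\cup B_2$ is not a single block but a chain of two blocks glued at the cut vertex $v_2$, each of size at most $d$, while the replacement $K$ is $2$-connected of size $s=\min\{d,m\}\le d$. Hence when the chain/clique does not merge with $H-S$, the $G$-side contributes two small blocks and the $G'$-side one block of size at most $d$ — both fine — and when it does merge, the single resulting block has sizes $m+|D_0|-2$ and $s+|D_0|-2$, which coincide precisely because a merge cannot occur in a solution once $m>d$ (it would already create a block of more than $d$ vertices), so that $s=m$. Getting this dichotomy right, together with the use of $v_1v_3\notin E(G)$, is the crux of the argument.
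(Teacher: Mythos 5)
Your proof is correct and follows essentially the same route as the paper's: reduce to a solution avoiding the interior of $V(B_1)\cup V(B_2)$ (respectively the new clique), then split according to whether $v_1$ and $v_3$ get merged into a common block, observing that a merge forces $\abs{V(B_1)\cup V(B_2)}\le d$ so that the replacement clique has the same size and the block sizes coincide. Your case distinction (connectivity of $v_1,v_3$ in $G-W^{\circ}-S$) is just an explicit reformulation of the paper's dichotomy of whether $W$ lies inside a single block of $G-S$, and the size computation $m+|D_0|-2$ versus $s+|D_0|-2$ plays the role of the paper's observation that $\abs{W}\le d-1$ in the merged case.
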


\begin{proof}
Let $v_1v_2v_3$ be an induced path of $G-U$ and let $B_1$ and $B_2$ be blocks of $G-U$ satisfying the conditions of \cref{rule:bypassingdBGD}.
Let $G'$ be the resulting graph after applying \cref{rule:bypassingdBGD}.
We show that $G$ has a set of vertices $S$ of size at most $k$ such that $G-S\in \Phi_{\cB_{2, d}}$ 
if and only if $G'$ has a set of vertices $S'$ of size at most $k$ such that $G'-S'\in \Phi_{\cB_{2, d}}$.
For convenience, let $W:=V(B_1)\cup V(B_2)$, and
let $W'$ be the new clique added in $G'$.

Suppose that $G$ has a set of vertices $S$ of size at most $k$ such that $G-S\in \Phi_{\cB_{2, d}}$.
If $\abs{S\cap W}\ge 1$,
then 
$G-((S\setminus W) \cup \{v_1\})$ is a graph in $\Phi_{\cB_{2, d}}$, as $B_1$ and $B_2$ are in $\cB_{2, d}$.
Thus, $G'-((S\setminus W) \cup \{v_1\})$ is also a graph in $\Phi_{\cB_{2, d}}$.
We may now assume that $S\cap  W=\emptyset$.
%There are two cases; either $v_1$ and $v_t$ are contained in the same block of $G-S$, or not.
Assume that $W$ is contained in some block $B$ of $G-S$.
In this case, $\abs{W}\le d-1$ as $G[W]$ is not $2$-connected.
Thus, $G'-S\in \Phi_{\cB_{2,d}}$, as the block obtained from $B$ by replacing $W$ with $W'$ has the same number of vertices,  and all the other blocks are the same.
If $W$ is not contained in some block of $G-S$, then every path from $v_1$ to $v_3$ in $G-S$ passes through $v_2$. 
Thus, $v_1$, $v_2$, and $v_3$ are cut vertices of $G-S$.
Hence, $B_1$ and $B_2$ are distinct blocks of $G-S$, and thus $G'-S$ is in $\Phi_{\cB_{2, d}}$.

Now suppose that $G'$ has a set of vertices $S$ of size at most $k$ such that $G'-S\in \Phi_{\cB_{2, d}}$.
Similar to the other direction, if $\abs{S\cap W}\ge 1$, then we can replace $S\cap W$ with $v_1$.
So we may assume that $S\cap W=\emptyset$.
If $W'$ is a block of $G'-S$, then $v_1$ and $v_3$ are cut vertices of $G'-S$, and one can easily check that $G-S\in \Phi_{\cB_{2, d}}$.
Otherwise, the clique $W'$ is not a block of $G'-S$, that is, it is contained in a bigger block.
Then $\abs{W'}\le d-1$ and $\abs{W}=\abs{W'}\le d-1$, and thus $G-S$ is also in $\Phi_{\cB_{2,d}}$.
\end{proof}

\begin{theorem}
  \label{kerneldBGD}
  \dBGD admits a kernel with $\cO(k^2 d^6)$ vertices.
\end{theorem}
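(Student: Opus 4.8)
The plan is to follow the proof of \cref{thm:mainkernel} almost verbatim, taking $\cP$ to be the class of all biconnected graphs, so that $\cP\cap\cB_{2,d}=\cB_{2,d}$ and $\Phi_{\cB_{2,d}}$ is the class of graphs all of whose blocks have at most $d$ vertices. We keep all the reduction rules used there, but replace \cref{rule:bypassing} by \cref{rule:bypassingdBGD}, which is safe by \cref{lem:bypassingrule2}. Just as in \cref{lem:polyrules}, this modified family of rules can be applied exhaustively in polynomial time: \cref{rule:bypassingdBGD} either strictly decreases $\abs{V(G)}$ (when $\abs{V(B_1)\cup V(B_2)}>d$) or leaves $\abs{V(G)}$ unchanged while merging two blocks into one, hence decreasing the number of blocks; and the other rules behave exactly as before. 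Crucially, everything in the proof of \cref{thm:mainkernel} downstream of the bypassing rule --- \cref{prop:dblockdeletion}, the sunflower rules \cref{rule:sunflower1,rule:sunflower2}, \cref{rule:disjointobs}, and \cref{rule:expansion} --- is stated for an arbitrary block-hereditary $\cP$ and so applies unchanged, and in particular the same value $\ell=2d^2(2k+1)(d^2-d+3)=\cO(kd^4)$ can be used.

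The one new ingredient is a strengthening of \cref{lem:largedegree}: if $(G,d,k)$ is a \YES-instance reduced under \cref{rule:blockcomponent,rule:cutvertex,rule:bypassingdBGD} and has maximum degree at most $\ell$, then $\abs{V(G)}=\cO(d^2k\ell)$ --- a factor $d$ smaller than the bound $\cO(d^3k\ell)$ implicit in \cref{lem:largedegree}. I would prove this by rerunning the argument of \cref{lem:largedegree}. Let $U$ be the solution of size at most $(2d+6)k$ given by the approximation algorithm, and let $T$ be the union of the block trees of the components of $G-U$. Colour a cut vertex of $G-U$ red if it has a neighbour in $U$, and a block of $G-U$ red if it has a non-cut-vertex with a neighbour in $U$; since $\Delta(G)\le\ell$ there are $\cO(dk\ell)$ such nodes, and colouring also every least common ancestor of two red nodes red gives a red set $R$ with $\abs{R}=\cO(dk\ell)$ by \cref{lem:rootedtree}. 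Exactly as in \cref{lem:largedegree}, every blue node has degree at most $2$ in $T$ (otherwise \cref{rule:cutvertex} would apply), so each component of $T-R$ is a path whose nodes are all blue, and there are $\cO(dk\ell)$ of them. The new point is that each such blue path has only $\cO(1)$ nodes: if a blue path contained two blocks $B_1,B_2$ separated along the path by a single cut vertex and both among the path's interior nodes, then $B_1$ and $B_2$ would have degree exactly $2$ in $T$, hence exactly two cut vertices of $G-U$ each (their path-neighbours), they would be the unique common blocks of those cut-vertex pairs, and all internal vertices of $B_1\cup B_2$ and the shared cut vertex would be blue, hence non-adjacent to $U$; thus $B_1,B_2$ and their three cut vertices form a configuration to which \cref{rule:bypassingdBGD} applies, contradicting reducedness. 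So a blue path has at most one interior block, and a short case analysis of the remaining shapes --- also using that an endpoint node that is a leaf block of $G-U$ would be deleted by \cref{rule:cutvertex}, and that an endpoint block with exactly two cut vertices again produces a \cref{rule:bypassingdBGD} configuration with the adjacent interior block --- bounds the length of a blue path by a constant. Hence $T$, and so $G-U$, has $\cO(dk\ell)$ blocks, giving $\abs{V(G)}\le\abs{U}+d\cdot\cO(dk\ell)=\cO(d^2k\ell)$.

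With the improved degree bound in hand, the proof finishes exactly as that of \cref{thm:mainkernel}. After applying all the (modified) rules exhaustively, if $\abs{V(G)}$ exceeds the $\cO(d^2k\ell)=\cO(k^2d^6)$ threshold, then there is a vertex $v$ of degree more than $\ell$; by \cref{prop:dblockdeletion} (applicable after \cref{rule:sunflower1}), $v$ has at most $d-1$ neighbours in each component of $G-(S_v\cup\{v\})$, leaving more than $2d(2k+1)(d^2-d+3)$ such components, and after \cref{rule:disjointobs,rule:sunflower2} at least $2d(2k+1)(d^2-d+1)$ components $C$ satisfy $G[V(C)\cup\{v\}]\in\Phi_{\cB_{2,d}}$, so \cref{rule:expansion} applies --- a contradiction. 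Therefore $\abs{V(G)}=\cO(k^2d^6)$. The main obstacle I anticipate is the structural case analysis giving the $\cO(1)$ bound on the size of a component of $T-R$: one must verify that every sufficiently long blue path really does trigger either \cref{rule:bypassingdBGD} or \cref{rule:cutvertex}, which requires some care about the statuses of the path's endpoint nodes (leaf blocks of $G-U$, or blocks and cut vertices with a red neighbour in $T$) and about checking the precise hypotheses of \cref{rule:bypassingdBGD}.
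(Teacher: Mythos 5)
Your proposal is correct and follows essentially the same route as the paper: swap \cref{rule:bypassing} for \cref{rule:bypassingdBGD} (safe by \cref{lem:bypassingrule2}), observe that this forbids long degree-$2$ paths in the block tree of $G-U$ so that the blue components in the argument of \cref{lem:largedegree} have constant rather than $\Theta(d)$ size, yielding a degree threshold of $\cO(d^2 k\ell)$ instead of $\cO(d^3 k\ell)$, and then conclude with the unchanged sunflower/expansion machinery and $\ell=\cO(kd^4)$ to get $\cO(k^2d^6)$ vertices. Your write-up is in fact more detailed than the paper's (which only sketches the modified \cref{lem:largedegree}), and the extra endpoint case analysis you flag as a worry is not actually needed, since the two-interior-blocks argument already caps the blue path length at a constant.
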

%$\abs{V(G)}\ge 4d(2d+3)(d+3)k\ell$
\begin{proof}
Lemma~\ref{lem:bypassingrule2} implies that
the block tree of $G-U$ has no path of $6$ vertices whose internal vertices have degree~$2$ in $G-U$. 
By modifying Lemma~\ref{lem:largedegree},
we can show that if $(G,d,k)$ is reduced under \cref{rule:blockcomponent,rule:cutvertex,rule:bypassingdBGD}, and $\abs{V(G)}\ge 28d(d+3)k\ell$, 
then $G$ has a vertex of degree at least $\ell+1$.
Using \cref{rule:sunflower1,rule:disjointobs,rule:sunflower2,rule:expansion} and the same argument as in the proof of \cref{thm:mainkernel}, it follows 
that there is a kernel with $\mathcal{O}(k^2d^6)$ vertices.
\end{proof}

Note that we can also use Reduction Rule~\ref{rule:bypassingdBGD} for \dKBGD, since for complete-block graphs, every maximal clique cannot be contained in a bigger block.
But it seems difficult to obtain a similar rule for \dCBGD.
%The main reason is that
%given a series of blocks that form a path in the block tree where the internal vertices have degree~$2$, it seems difficult to obtain an equivalent instance by replacing these blocks with some simpler structure.
%one cannot freely contract an edge, as the length of a cycle is a constraint, and if we replace an induced path of length $2$ with one triangle, then a new obstruction can appear.
However, for %\dCBGD and \dKBGD,
both problems,
we can obtain a smaller kernel by using different objects in Proposition~\ref{prop:dblockdeletion}.

Recall that a graph $G$ is a \emph{$d$-complete block graph} if every block of $G$ is a complete graph with at most $d$ vertices, 
and a graph $G$ is a \emph{$d$-cactus} if it is a cactus graph and every block has at most $d$ vertices.

%\begin{theorem}\label{thm:kerneldCBGD}
%\dCBGD admits a kernel with $\mathcal{O}(k^2d^3)$ vertices.
%\end{theorem}
\begin{theorem}
  \label{kerneldCBGD}
  \dCBGD admits a kernel with $\cO(k^2 d^4)$ vertices. 
\end{theorem}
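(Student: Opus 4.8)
The plan is to re-run the proof of \cref{thm:mainkernel} for $\cP$ the class consisting of $K_2$ and all cycles, keeping \cref{rule:blockcomponent,rule:cutvertex,rule:bypassing,rule:disjointobs,rule:sunflower2} and \cref{rule:expansion} essentially as they are, but replacing \cref{prop:dblockdeletion} and \cref{rule:sunflower1} by versions tailored to cactus graphs. The key point is that, for a $d$-cactus, the ways in which a single vertex $v$ together with a connected subgraph $F$ of $G-v$ can fail to induce a $d$-cactus are very restricted: either (i) $v$ has at least three neighbours in $V(F)$; or (ii) $v$ has two neighbours in $V(F)$ that lie in a common $2$-connected subgraph of $F$, or more generally are separated in $F$ by deleting one vertex of some cycle of $F$; or (iii) $v$ has two neighbours in $V(F)$ whose connections inside $F$ all have length at least $d-1$. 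In cases (i) and (ii) the graph $G[V(F)\cup\{v\}]$ contains a subdivision of the diamond, and in case (iii) it contains a cycle of length more than $d$; all of these are obstructions for \dCBGD.

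Accordingly, I would prove a cactus analogue of \cref{prop:dblockdeletion}: given $G$, $A\subseteq V(G)$ and positive integers $k,d$, one can in polynomial time output either $k$ pairwise vertex-disjoint \emph{cactus gadgets} (with respect to $A$) --- each being a tree with at least three leaves in $A$, or a cycle together with two internally vertex-disjoint (possibly trivial) paths linking it, at two distinct vertices, to two vertices of $A$, or a path of length at least $d-1$ between two vertices of $A$ --- or a vertex set $S$ of size $\cO(kd)$ such that no connected component of $G-S$ contains any such gadget. The first two gadget types do not depend on $d$, and a packing/covering argument of the kind used in \cref{lem:rootedtree,lem:adtreesintree,lem:numberofleaves} --- now with trees of at most three leaves, together with a block-tree argument for the cycle gadget --- gives a covering set of size $\cO(k)$ for these; the path gadget has at least $d$ vertices, but because the trees involved here have maximum degree $2$ rather than $d$, the same argument gives a covering set of size $\cO(kd)$ rather than the $\cO(kd^{2})$ of \cref{prop:dblockdeletion}. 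The new Sunflower rule then deletes $v$ and decreases $k$ by one whenever $G-v$ contains $k+1$ pairwise vertex-disjoint cactus gadgets with $A=N_G(v)$; this is safe since each gadget together with $v$ is an obstruction. After exhaustive application, every vertex $v$ admits a set $S_v$ of size $\cO(kd)$ such that each connected component of $G-(S_v\cup\{v\})$ contains at most two neighbours of $v$ and no forbidden cycle or long connection between two of them.

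The remainder follows the proof of \cref{thm:mainkernel}, with two simplifications coming from the cactus structure. First, three internally vertex-disjoint paths of length $2$ between $v$ and a vertex $x$ already form a subdivision of the diamond, so \cref{rule:expansion} can be used with $\alpha=3$ in place of $\alpha=d$ (adding three such paths instead of $d-1$), and the proof of \cref{lem:expansionsafe} carries over without change. Second, since each component of $G-(S_v\cup\{v\})$ now contains at most two neighbours of $v$, a vertex of degree $\ell$ produces at least $\ell/2-\cO(kd)$ components $C$ with $G[V(C)\cup\{v\}]\in\Phi_{\cP\cap\cB_{2,d}}$ (using \cref{rule:disjointobs,rule:sunflower2} to bound the number of other components by $2k$), so already $\ell=\cO(kd)$ suffices to make \cref{rule:expansion} applicable. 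Keeping \cref{rule:bypassing} unchanged, \cref{lem:largedegree} applies as stated, so a fully reduced \YES-instance satisfies $\abs{V(G)}<4d(2d+3)(d+3)\cdot k\cdot\ell=\cO(d^{3})\cdot k\cdot\cO(kd)=\cO(k^{2}d^{4})$.

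The main obstacle is the cactus analogue of \cref{prop:dblockdeletion}, which splits into two tasks: (a) verifying that the three gadget types above really do capture every way a vertex $v$ can destroy the $d$-cactus property of a single component $C$ of $G-v$ --- the delicate case being the theta-type obstruction that appears when the unique path in $C$ between two neighbours $a,b$ of $v$ passes through a non-trivial cycle-block of $C$ --- and (b) proving the packing/covering dichotomy with covering sets of size $\cO(k)$ for the two bounded-size gadget types and $\cO(kd)$ for the path gadget. Once this is established, the rest is a routine re-run of the bookkeeping from \cref{thm:mainkernel}; the same argument, with all gadgets now of bounded size, is what yields the $\cO(k^{2}d^{3})$-vertex kernel for \dKBGD.
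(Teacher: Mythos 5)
Your overall skeleton (reuse the kernelization machinery of \cref{thm:mainkernel}, bound neighbours of a high-degree vertex per component, then apply the expansion rule) is the paper's, but the one genuinely new ingredient you rely on --- the cactus analogue of \cref{prop:dblockdeletion} for your three gadget types --- is both left unproved and, as formulated, does not deliver the property you later use. Covering \emph{trees with at least three leaves in $N_G(v)$} does not force every component of $G-(S_v\cup\{v\})$ to contain at most two neighbours of $v$: if several (up to $d-2$) neighbours of $v$ lie on a path of a component, with all but the two end ones internal, that component contains no tree with three leaves in $N_G(v)$, no cycle, and (if the path is shorter than $d-1$) no long $N_G(v)$-path, yet it hosts many neighbours of $v$. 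This invalidates the counting step ``a vertex of degree $\ell$ produces at least $\ell/2-\cO(kd)$ components'', which is exactly what your $\cO(k^2d^4)$ bound rests on. The right gadget is a tree containing at least three \emph{vertices} of $N_G(v)$ (together with $v$ this always gives three internally vertex-disjoint paths between $v$ and one vertex, i.e.\ a diamond subdivision, hence an obstruction for every $d$); but phrased that way you are back to $(N_G(v),3)$-trees, and the paper simply applies the already-proved \cref{prop:dblockdeletion} with $d$ replaced by $3$, obtaining $\abs{S_v}\le 14(2k+1)=\cO(k)$ and at most two neighbours of $v$ per component, with no new packing/covering lemma needed. Your cycle-with-legs and long-path gadgets would require a genuinely new Erd\H{o}s--P\'osa-type argument (note the paper explicitly remarks, after \cref{prop:dblockdeletion}, that its covering set does \emph{not} kill long $A$-paths), and you supply none.

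Moreover, gadget types (ii) and (iii) are unnecessary, and carrying them is what costs you a factor of $d$. The paper never asks $S_v$ to destroy obstructions living inside a single component (two neighbours of $v$ on a common cycle, or joined only by long connections): such a component $C$ is precisely one with $G[V(C)\cup\{v\}]\notin\Phi_{\cP\cap\cB_{2,d}}$, and the number of these is already bounded by $\cO(k)$ via \cref{rule:disjointobs,rule:sunflower2}, which you invoke anyway. Dropping the long-path gadget brings $\abs{S_v}$ from your $\cO(kd)$ down to $\cO(k)$, so $\ell=\cO(k)$ suffices and \cref{lem:largedegree} gives a bound of $\cO(k^2d^3)$ vertices, which is what the paper's proof actually derives (the theorem statement only claims the weaker $\cO(k^2d^4)$). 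Your use of the expansion lemma with $\alpha=3$ is in the same spirit as the paper's threshold of $(3+1)\cdot 14(2k+1)$ components; with $\alpha=3$ you would still need to recheck the converse direction of \cref{lem:expansionsafe} for the cactus obstruction, but that is a minor point next to the missing and miscalibrated covering lemma.
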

\begin{proof}
We observe that for a vertex $v$ in a graph $G$ and an $(N_G(v), 3)$-tree $T$,
$G[V(T)\cup \{v\}]$ is $2$-connected and it is not a cycle. 
Thus $G[V(T)\cup \{v\}]$ is not a $d$-cactus graph, and at least one vertex of $V(T)\cup \{v\}$ should be taken in any solution.
Because of this, we can replace $(N_G(v), d)$-trees with $(N_G(v), 3)$-trees in \cref{rule:sunflower1}.
By Proposition~\ref{prop:dblockdeletion}, 
we may assume that 
for each $v\in V(G)$, there exists $S_v\subseteq V(G-v)$ with $\abs{S_v}\le 14(2k+1)$ 
such that $v$ has at most $2$ neighbors in each connected component of $G-(S_v\cup \{v\})$.

Note that if there are two vertices with three vertex-disjoint paths between them, then we have a subdivision of the diamond, which is an obstruction for cactus graphs.
Thus, the number of connected components of $G-(S_v \cup \{v\})$ required for \cref{rule:expansion} to be applicable can be changed to $(3+1) \cdot 14(2k+1)=56(2k+1)$.

We set $\ell:=112(2k+3)$. 
Suppose that $(G, d, k)$ is the reduced instance, and $\abs{V(G)}\ge  4dk(\ell-1)(2d+3)(d+3)$.
Then, by Lemma~\ref{lem:largedegree}, there exists a vertex~$v$ of degree at least $\ell$.
%Let $S_v$ be the vertex set obtained by Proposition~\ref{prop:dblockdeletion}. 
%%$U$ be a $(2d+6)$-approximation solution obtained by the approximation algorithm for the minimization version of the \dBGD problem, and set
%% $$A_v :=
%%    \begin{cases}
%%      S_v\cup (U\setminus \{v\}) & \text{if $v\in U$,} \\
%%      S_v\cup U & \text{otherwise.}\\
%%    \end{cases}$$
    
Since $\ell=112(2k+3)$, 
$G-(S_v\cup \{v\})$ contains at least $\frac{\ell-14(2k+1)}2\ge 56(2k+3)$ connected components. 
By \cref{rule:disjointobs,rule:sunflower2}, 
$G-(S_v\cup \{v\})$ contains at least $56(2k+1)$ connected components such that, for each connected component $C$,
$G[V(C)\cup \{v\}]$ is a $d$-cactus.
Then we can apply \cref{rule:expansion}, contradicting our assumption.
We conclude that $\abs{V(G)}= \mathcal{O}(k^2d^{3})$.
\end{proof}

For \dKBGD we can use Gallai's $A$-path Theorem instead of Proposition~\ref{prop:dblockdeletion}.
The following can be obtained by modifying \cite[Proposition 3.1]{KimK2015} so that the size of blocks is also taken into account.

\begin{proposition}[\cite{KimK2015}]\label{prop:generalcompletedegree}
Let $G$ be a graph and let $v\in V(G)$ and let $k$ be a positive integer.
Then, in $\mathcal{O}(kn^3)$ time, we can find either
%\begin{enumerate}[(1)]
\begin{enumerate}[\rm (i)]
  \item $k+1$ obstructions for $d$-complete block graphs that are pairwise vertex-disjoint, or\label{gcd1}
  \item $k+1$ obstructions for $d$-complete block graphs whose pairwise intersections are exactly the vertex $v$, or\label{gcd2}
  \item $S_v\subseteq V(G)$ with $\abs{S_v}\le 7k$ such that $G-S_v$ has no obstruction for $d$-complete block graphs containing $v$.\label{gcd3}
\end{enumerate}
\end{proposition}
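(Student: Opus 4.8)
The plan is to adapt the argument of Kim and Kwon for \BGVD~\cite{KimK2015}; the new ingredient is that, since blocks are now size-bounded, a copy of $K_{d+1}$ is also an obstruction, and it has to be handled in addition to the obstructions already present in the unbounded case. I would begin by recording the classification of minimal obstructions for $d$-complete block graphs: a minimal $2$-connected induced subgraph that is not a clique of size at most $d$ is either a copy of $K_{d+1}$, an induced diamond, or an induced cycle of length at least four (for $d=2$ the diamond is subsumed by the triangle). Hence $G-S_v$ contains an obstruction through $v$ if and only if it contains a subgraph of one of these three types through $v$, and it suffices to treat the three families separately: for each, output either $k+1$ pairwise vertex-disjoint members, or $k+1$ members pairwise meeting in exactly $\{v\}$, or a hitting set linear in $k$; the final $S_v$ is the union of the three hitting sets, sized to total at most $7k$.

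For induced cycles of length at least four through $v$, set $A:=N_G(v)$ and work in $G-v$. Such a cycle yields a non-trivial $A$-path in $G-v$ (the part strictly between the two neighbours of $v$ it uses, short-cut at any internal vertex lying in $A$), and conversely a non-trivial $A$-path, added to $v$, closes to a cycle through $v$ which, after repeatedly short-cutting chords at $v$, becomes an induced cycle of length at least four through $v$ (or exposes a triangle or diamond through $v$, handled below). I would therefore invoke the Gallai $A$-path theorem in $G-v$ — computable, together with the required packing or cover, in $\mathcal{O}(kn^3)$ time — to get either $k+1$ vertex-disjoint non-trivial $A$-paths, which convert to $k+1$ obstructions pairwise meeting only in $v$ (outcome (ii)), or a set $Z_1$ with $|Z_1|\le 2k$ meeting all non-trivial $A$-paths, so that $v$ lies on no cycle of length at least four in $G-(Z_1\cup\{v\})$. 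An entirely analogous second application (or a direct greedy packing, since a diamond has only four vertices) gives outcome (i)/(ii) or a set $Z_2$ with $|Z_2|\le 2k$ hitting all diamonds through $v$.

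It remains to kill the copies of $K_{d+1}$ through $v$. The key observation is that in $G':=G-(Z_1\cup Z_2)$ the vertex $v$ lies in no induced diamond and no induced $C_4$, which forces $G'[N_{G'}(v)]$ to be a disjoint union of cliques; thus the copies of $K_{d+1}$ through $v$ in $G'$ are exactly $\{v\}$ together with a clique-component of $G'[N_{G'}(v)]$ of size at least $d$, and these components are listable in polynomial time. If some such component has size at least $(k+1)(d+1)$ it already contains $k+1$ pairwise vertex-disjoint copies of $K_{d+1}$, and we output (i); if at least $k+1$ components have size at least $d$, we output (ii); otherwise only a bounded number of large components survive and, following~\cite{KimK2015}, one extracts from them a hitting set $Z_3$ with $|Z_3|\le 3k$. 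Taking $S_v:=Z_1\cup Z_2\cup Z_3$ gives $|S_v|\le 7k$ with the stated property, and the running time is dominated by the two $\mathcal{O}(kn^3)$ applications of Gallai's theorem.

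I expect the main obstacle to be the $K_{d+1}$ case: guaranteeing that the hitting set $Z_3$ there, and hence the final bound $|S_v|\le 7k$, is linear in $k$ and independent of $d$ — rather than scaling with the clique sizes — is exactly the point that genuinely differs from the unbounded case and requires the careful counting of Kim--Kwon. A secondary, more routine concern is verifying that the short-cutting used to turn $A$-paths back into induced cycles through $v$, and the passage from $G$ to $G'$, do not re-create obstructions through $v$ that another part was meant to destroy, so that the three hitting sets really do combine into a single valid $S_v$.
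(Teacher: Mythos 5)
The paper itself offers no proof of this proposition: it is stated as obtainable by ``modifying'' Proposition~3.1 of \cite{KimK2015} so that block sizes are taken into account, so the genuinely new content is precisely the part you flag as the main obstacle --- the copies of $K_{d+1}$ through $v$ --- and your treatment of that part does not work. Concretely, let $G$ consist of $v$ completely joined to $k$ pairwise disjoint, mutually non-adjacent cliques, each of size $2d$. Here $v$ lies in no induced diamond or $C_4$, the clique components of $N_G(v)$ number only $k$ and each has size $2d < (k+1)(d+1)$, so your algorithm enters the ``otherwise'' branch; but any set avoiding $v$ that destroys all copies of $K_{d+1}$ through $v$ must shrink each component to at most $d-1$ vertices, hence has size at least $k(d+1)$, far exceeding $3k$ (and exceeding $7k$ once $d\ge 7$). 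In this example outcome (ii) is in fact available, because one can pack $\lfloor\abs{C}/d\rfloor$ obstructions through $v$ inside a single component $C$, so your branching thresholds (one $K_{d+1}$ per component, components counted only up to $k+1$) are too weak; and even after repairing them, the residual ``otherwise'' case only yields a hitting set of order $kd$, not $O(k)$ independent of $d$.

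Moreover, this is not a bookkeeping issue that ``the careful counting of Kim--Kwon'' can be invoked to fix: in \cite{KimK2015} large cliques are never obstructions, so no such counting exists there. Indeed, for $G=K_{d+7k+1}$ with $d>7k$ there are neither $k+1$ pairwise vertex-disjoint obstructions, nor even two obstructions meeting exactly in $\{v\}$, nor any set of at most $7k$ vertices avoiding $v$ whose removal leaves no obstruction containing $v$; so a proof along your three-family plan must either allow $v\in S_v$ in outcome (iii) (which the literal statement $S_v\subseteq V(G)$ permits but which trivializes that outcome and is useless for the paper's later application, where $S_v\subseteq V(G-v)$ is needed), or handle cliques at $v$ by a mechanism other than packing versus $O(k)$-covering. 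Your Gallai and diamond steps are broadly in the Kim--Kwon spirit, and your observation that diamond-freeness at $v$ forces $G'[N_{G'}(v)]$ to be a disjoint union of cliques is correct; but note also the smaller issue that $k+1$ disjoint non-trivial $N_G(v)$-paths need not convert into obstructions, since $V(P)\cup\{v\}$ may induce a clique of size at most $d$, so even that packing step needs the extra care present in the original argument.
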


\begin{theorem}
  \label{kerneldKBGD}
  \dKBGD admits a kernel with $\cO(k^2 d^3)$ vertices.
\end{theorem}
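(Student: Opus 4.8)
The plan is to follow the proof of \cref{thm:mainkernel} with $\cP$ taken to be the class of all complete graphs, so that $\phi_{\cP\cap \cB_{2,d}}$ is precisely the class of $d$-complete block graphs, while replacing the two ingredients responsible for the large powers of $d$. First, as already noted, $d$-complete block graphs are preserved by \cref{rule:bypassingdBGD} --- no maximal clique can lie inside a larger block --- so we may use \cref{rule:bypassingdBGD} in place of \cref{rule:bypassing}, exactly as in the proof of \cref{kerneldBGD}; this ensures that a \YES-instance reduced under this rule with at least $28d(d+3)k\ell$ vertices contains a vertex of degree exceeding $\ell$, the factor $2d+3$ of \cref{lem:largedegree} having been replaced by a constant.

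The substantial change is to drop \cref{rule:sunflower1} together with its use of \cref{prop:dblockdeletion}, and to invoke \cref{prop:generalcompletedegree} instead. For each vertex $v$ we run its algorithm on $(G,v,k)$: if it returns $k+1$ pairwise vertex-disjoint obstructions we answer \NO; if it returns $k+1$ obstructions pairwise intersecting exactly in $v$ we delete $v$ and decrease $k$ by~$1$; otherwise we obtain a set $S_v$ with $\abs{S_v}\le 7k$ --- crucially, with no dependence on $d$ --- such that $G-S_v$ contains no obstruction for $d$-complete block graphs through $v$. The key point to verify is that $S_v$ still has the property used in the proof of \cref{thm:mainkernel}: $v$ has at most $d-1$ neighbours in every connected component~$C$ of $G-(S_v\cup\{v\})$. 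Indeed, if $v$ has a neighbour in $C$ then $H:=G[V(C)\cup\{v\}]$ is a connected induced subgraph of $G-S_v$, hence contains no obstruction through $v$; since $H-v=C$ is connected, $v$ is not a cut vertex of $H$ and so lies in a single block $B$ of $H$, and $B$, being an induced biconnected subgraph of $G-S_v$ that is not an obstruction, is a clique on at most $d$ vertices, so $\abs{N_H(v)}=\abs{V(B)}-1\le d-1$.

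From this point the argument proceeds as in \cref{thm:mainkernel}: \cref{rule:disjointobs,rule:sunflower2} leave at most $2k$ components $C$ of $G-(S_v\cup\{v\})$ with $G[V(C)\cup\{v\}]\notin \phi_{\cP\cap \cB_{2,d}}$, and \cref{rule:expansion} handles the rest. Because for $d$-complete block graphs every biconnected non-clique is already an obstruction, attaching a constant number of length-$2$ paths between $v$ and a vertex of $S_v$ suffices to create one, so --- as in the proof of \cref{kerneldCBGD} --- \cref{rule:expansion} applies as soon as there are $\cO(k)$ such components, with no factor of $d$. Combining $\abs{S_v}\le 7k$, the $\cO(k)$ threshold for \cref{rule:expansion}, and the per-component degree bound $d-1$, after exhaustive reduction every vertex has degree $\ell=\cO(dk)$; then \cref{lem:largedegree} forces $\abs{V(G)}=\cO(d(d+3)k\cdot dk)=\cO(k^2d^3)$.

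I expect the main obstacle to be the structural claim of the second paragraph --- that ``no obstruction through $v$ in $G-S_v$'' yields the per-component degree bound $d-1$ --- together with checking that the number of paths attached in \cref{rule:expansion}, and hence its threshold, can be taken independent of $d$ for $d$-complete block graphs, so that $\ell=\cO(dk)$ and the final count is $\cO(k^2d^3)$ rather than a larger power of $d$. The remaining steps are a routine transcription of the proof of \cref{thm:mainkernel}.
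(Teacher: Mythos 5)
Your proposal follows the same skeleton as the paper's proof---reduce with \cref{rule:blockcomponent,rule:cutvertex,rule:bypassingdBGD}, replace \cref{prop:dblockdeletion} by \cref{prop:generalcompletedegree}, and shrink the expansion threshold to $\cO(k)$ as in \cite{KimK2015}---but it diverges at one substantive point, and the divergence is interesting. The paper does \emph{not} extract the per-component degree bound from case~(iii) of \cref{prop:generalcompletedegree} alone: it explicitly adds a $(2d+6)$-approximate solution $U^*$ to $S_v$, so that $G-(S_v\cup U^*)$ is a $d$-complete block graph and the bound of $d-1$ neighbours of $v$ per component is immediate; the price is $\abs{S_v\cup U^*}=\cO(dk)$, hence $\ell=3kd(2d+13)=\cO(kd^2)$, and the paper's own arithmetic then gives $28d(d+3)k\ell=\cO(k^2d^4)$ vertices (indeed its proof ends by concluding $\cO(k^2d^4)$, in tension with the stated $\cO(k^2d^3)$). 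You instead keep $\abs{S_v}\le 7k$ by arguing the degree bound directly from ``no obstruction through $v$ in $G-S_v$'', which gives $\ell=\cO(dk)$ and is what actually delivers the advertised $\cO(k^2d^3)$. Your block argument is sound under the paper's own definition of obstruction (a $2$-connected induced subgraph that is not a clique on at most $d$ vertices): blocks are induced subgraphs, $v$ is not a cut vertex of $G[V(C)\cup\{v\}]$, so the unique block through $v$ is an induced biconnected subgraph of $G-S_v$ containing $v$, hence a clique on at most $d$ vertices. One caveat: \cref{prop:generalcompletedegree} is adapted from \cite{KimK2015}, where obstructions are the minimal ones (induced diamonds, induced cycles of length at least four, and here $K_{d+1}$); under that reading the block through $v$ need not be a clique (attach $v$ to one edge of an induced $C_5$), so your specific claim fails---but the conclusion you need survives by a short repair: if $v$ had two non-adjacent neighbours in a component $C$, choosing a non-adjacent pair of $N_G(v)\cap V(C)$ at minimum distance in $C$ yields either an induced hole through $v$ or an induced diamond through $v$, and if $N_G(v)\cap V(C)$ is a clique with at least $d$ vertices one gets a $K_{d+1}$ through $v$; so $v$ has at most $d-1$ neighbours in $C$ in either case. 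With that small patch your route is correct, slightly cleaner than the paper's, and, unlike the paper's written computation, it actually matches the claimed $\cO(k^2d^3)$ bound.
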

\begin{proof}
We exhaustively reduce
%with Proposition~\ref{prop:generalcompletedegree},
%Reduction Rule~\ref{rule:bypassingdBGD}, and other reduction rules.
  using \cref{rule:blockcomponent,rule:cutvertex,rule:bypassingdBGD,rule:sunflower1,rule:disjointobs,rule:sunflower2,rule:expansion}.
  Now, applying \cref{prop:generalcompletedegree}, we can assume that
%We can assume that 
for each $v\in V(G)$, there exists $S_v\subseteq V(G-v)$ with $\abs{S_v}\le 7k$ 
such that $G-S_v$ has no obstruction for $d$-complete block graphs containing $v$.
But %the effect of Proposition~\ref{prop:generalcompletedegree}  is different from the case of cactus graphs since
we cannot say anything about the number of neighbors of $v$ in each connected component of $G-S_v$ after reducing in case (\ref{gcd1}) or case (\ref{gcd2}).
So %, to obtain it,
we find a $(2d+6)$-approximation solution $U$ and let $U^*:=U$ if $v\notin U$ and $U^*:=U\setminus \{v\}$ otherwise, and
add it to $S_v$.
Then $G-(S_v\cup U^*)$ is a $d$-complete block graph and %thus 
$v$ has at most $d-1$ neighbors in each connected component of $G-(S_v\cup U^*)$.

Note that if there are two vertices with two vertex-disjoint paths of length at least $2$ between them, then 
there is an obstruction for $d$-complete block graphs. So we can use the $3$-expansion lemma as in \cite{KimK2015}.
Thus, the number of connected components required for \cref{rule:expansion} to be applicable can be changed to $3(7k+(2d+6)k)=3k(2d+13)$.

We set $\ell:=3kd(2d+13)$. 
Suppose that $(G, d, k)$ is the reduced instance, and $\abs{V(G)}\ge  28d(d+3)k\ell$.
By modifying Lemma~\ref{lem:largedegree},
one can show that $G$ has a vertex of degree at least $\ell+1$.

%Let $S_v$ be the vertex set obtained by Proposition~\ref{prop:dblockdeletion}. 
%%$U$ be a $(2d+6)$-approximation solution obtained by the approximation algorithm for the minimization version of the \dBGD problem, and set
%% $$A_v :=
%%    \begin{cases}
%%      S_v\cup (U\setminus \{v\}) & \text{if $v\in U$,} \\
%%      S_v\cup U & \text{otherwise.}\\
%%    \end{cases}$$
    
Since $\ell=3kd(2d+13)$, 
$G-(S_v\cup U^*\cup \{v\})$ contains at least $\frac{\ell-k(2d+13)}{(d-1)}\ge 3k(2d+13)$ connected components. 
So we can reduce the instance using the $3$-expansion lemma; a contradiction.
We conclude that $\abs{V(G)}= \mathcal{O}(k^2d^{4})$.
\end{proof}
}
%We take a $(2d+8)$-approximation solution $U$ for the non-parameterized version of the \dBGD\ problem.
%For each vertex $v$, we obtain the vertex set $S_v$ from Lemma.. 

\bibliography{main}

\iftoggle{paper}{}{
\newpage
\appendix 
\section{Appendix}
\appendixProofText
}

\end{document}